\documentclass[11pt, letterpaper]{article}
\usepackage[margin=1in]{geometry}

\usepackage{amsmath,amsthm,amssymb,amsfonts}
\usepackage[utf8]{inputenc}
\usepackage[american]{babel}
\usepackage{graphicx}
\usepackage[url=false,giveninits=true,maxbibnames=99,style=alphabetic,maxalphanames=5]{biblatex}
\usepackage{mathtools}
\usepackage{enumitem}
\usepackage{csquotes}
\usepackage[noend]{algpseudocode}
\usepackage{algorithm}
\usepackage{xparse}
\usepackage{xspace}
\usepackage{color}
\usepackage{caption}
\usepackage{subcaption}
\usepackage{fullpage}
\usepackage{placeins}
\usepackage{xcolor}
\usepackage{makecell}
\usepackage{qcircuit}
\usepackage{thmtools}
\usepackage{thm-restate}
\usepackage[textsize=scriptsize]{todonotes}
\usepackage{verbatim}
\usepackage{qcircuit}
\usepackage{comment}
\usepackage{mathdots}

\usepackage{hyperref}
\hypersetup{
    pdftitle={The Information Complexity of Decision Trees}, 
    pdfauthor={Anonymous submission}, 
    colorlinks=true, 
    linkcolor=blue, 
    citecolor=cyan, 
    urlcolor=cyan 
}
\usepackage[capitalise, nameinlink]{cleveref}
\crefname{equation}{}{}
\usepackage{crossreftools}

\usepackage{dsfont}

\setuptodonotes{color=blue!15}

\newcommand\cfont\mathsf

\protected\def\verythinspace{%
  \ifmmode
    \mskip0.5\thinmuskip
  \else
    \ifhmode
      \kern0.08334em
    \fi
  \fi
}

\newcommand{\be}{\begin{equation}}
\newcommand{\ee}{\end{equation}}

\renewcommand{\epsilon}{\varepsilon}

\DeclarePairedDelimiterX\braket[2]{\langle}{\rangle}{#1 \delimsize\vert #2}
\DeclarePairedDelimiterX\ketbra[2]{\lvert}{\rvert}{#1 \delimsize\rangle\delimsize\langle #2}

\setlist[itemize]{noitemsep, topsep=0pt}
\setlist[enumerate]{noitemsep, topsep=0pt}

\declaretheorem[numberwithin=section]{theorem}

\declaretheorem[sibling=theorem]{corollary}
\declaretheorem[sibling=theorem]{lemma}
\declaretheorem[sibling=theorem]{proposition}

\declaretheorem[sibling=theorem]{fact}

\declaretheorem[sibling=theorem,style=definition]{definition}

\crefname{observation}{observation}{observations}
\Crefname{observation}{Observation}{Observations}

\makeatletter
\newcommand{\subalign}[1]{%
  \vcenter{%
    \Let@ \restore@math@cr \default@tag
    \baselineskip\fontdimen10 \scriptfont\tw@
    \advance\baselineskip\fontdimen12 \scriptfont\tw@
    \lineskip\thr@@\fontdimen8 \scriptfont\thr@@
    \lineskiplimit\lineskip
    \ialign{\hfil$\m@th\scriptstyle##$&$\m@th\scriptstyle{}##$\hfil\crcr
      #1\crcr
    }%
  }%
}
\NewDocumentCommand{\LeftComment}{s m}{%
  \Statex \IfBooleanF{#1}{\hspace*{\ALG@thistlm}}\(\triangleright\) #2}

\def\moverlay{\mathpalette\mov@rlay}
\def\mov@rlay#1#2{\leavevmode\vtop{%
   \baselineskip\z@skip \lineskiplimit-\maxdimen
   \ialign{\hfil$\m@th#1##$\hfil\cr#2\crcr}}}
\newcommand{\charfusion}[3][\mathord]{
    #1{\ifx#1\mathop\vphantom{#2}\fi
        \mathpalette\mov@rlay{#2\cr#3}
      }
    \ifx#1\mathop\expandafter\displaylimits\fi}

\makeatother

\algnewcommand{\LineComment}[1]{\State \(\triangleright\) #1}

\algblockdefx[ON]{Blk}{EndBlk}[1]
  {#1}
  {}

\makeatletter
\ifthenelse{\equal{\ALG@noend}{t}}%
  {\algtext*{EndBlk}}
  {}%
\makeatother

\AtEveryBibitem{%
  \clearlist{language}%
}

\def\({\left(}
\def\){\right)}

\newcommand{\E}{\mathop{\mathbb{E}}}

\newcommand{\rX}{\mathsf{X}}

\newcommand{\rI}{\mathsf{I}}
\newcommand{\CC}{\mathsf{CC}}

\newcommand{\IC}{\mathsf{IC}}
\newcommand{\info}{\mathsf{info}}

\def\anon{0} 
\def\conf{0} 

\title{The Information Complexity of Decision Trees}
\date{}

\if\anon1
\author{Anonymous submission}
\else
\author{Avantika Agarwal\\
    \small Institute for Quantum Computing\\
    \small University of Waterloo\\
    \small \texttt{avantika.agarwal@uwaterloo.ca}
    \and
    Shalev Ben{-}David\\
    \small Institute for Quantum Computing\\
    \small University of Waterloo\\
    \small \texttt{shalev.b@uwaterloo.ca}
    \and
    Eric Blais\\
    \small University of Waterloo\\
    \small \texttt{eric.blais@uwaterloo.ca}
}
\fi

\begin{document}

\maketitle

\begin{abstract}
    We define and study a measure of information complexity for randomized decision trees. We prove three main results about this complexity measure:

    \paragraph{Information equals amortized size complexity.} We show that the information complexity of randomized decision tree is equal to the logarithm of the amortized worst-case randomized tree size complexity of computing a function $f$. That is, when computing $f$ on $n$ inputs, the logarithm of the randomized tree size is exactly equal to the amount of information needed to compute the function. 
    
    \paragraph{Information allows for tree size compression.} We show that even when computing $f$ on a single input, the information complexity can be used to compress the size of a tree, if we allow a small loss in success probability. With the recent characterization of Chattopadhyay, Dahiya, Mande, Radhakrishnan, and Sanyal (2023), this result shows that the depth of AND-OR trees can also be compressed in terms of information complexity.
    
    \paragraph{Direct Product Theorems.} We show that the success-conditioned variant of information complexity satisfies a perfect direct product theorem. This result gives an information complexity analogue of the direct product theorem for success-conditioned randomized query complexity by Ben-David and Blais (2025).
\end{abstract}

\if\conf0
    \clearpage
    {\small\tableofcontents}
    \clearpage
\fi

\section{Introduction}

A fundamental observation with surprisingly far-reaching consequences in the study of communication complexity is that it can be useful to measure the cost of a communication protocol in terms of \emph{information} (that the players learn about each other's input or that an external observer learns about all the inputs through observing the transcript) instead of just counting the number of bits that have been transmitted. This observation leads to the notion of information complexity that has resulted in a number of notable advances in communication complexity. For great introductions to this area, see the surveys~\cite{Bra23,Wei15} and the book~\cite{RY20}.

The idea of measuring information instead of counting bits applies equally well to the setting of query complexity.
Indeed, consider the \emph{randomized query complexity} of the function $f \colon \{0,1\}^k \to \{0,1\}$ with respect to the distribution $\mu$ over $\{0,1\}^k$ defined to be the minimum expected number of queries that a randomized algorithm must make to its input $x$ to output the value $f(x)$ with success probability at least $\frac23$.
By representing a randomized query algorithm as a distribution $R$ over deterministic decision trees, we can define this complexity measure as
\[
\overline{\sf R}^\mu(f) = \min_{R \,:\, {\sf success}^\mu_f(R) \ge \frac23} \E_{T \sim R; X \sim \mu}[ {\sf depth}(T, X)],
\]
where ${\sf depth}(T,X)$ denotes the depth of the leaf of the decision tree $T$ that we reach when we traverse it down following the values of the bits of $X$.

With a nearly identical definition except that we measure the amount of information that the algorithm learns about its input instead of just counting the number of indices of the input that it queries, we obtain a natural notion of \emph{information complexity}.
This complexity measure can again be defined by using the decision tree representation of query algorithms as
\[
\IC^\mu(f) = \min_{R \,:\, {\sf success}^\mu_f(R) \ge \frac23} {\rm I}( X; L \mid T),
\]
where ${\rm I}( X; L \mid T)$ is the conditional mutual information of the input $X \sim \mu$ and the leaf $L$ of the tree $T \sim R$ reached by traversing it according to $X$.
This amount also equals the entropy of the sequence of bits revealed by the queries of the algorithm during its execution.

Given the success of the use of information complexity in the study of communication complexity (not to mention the numerous other remarkably effective applications of information theory in many other domains!) it is most natural to ask: 

\newtheorem*{mainQ}{Main Question}
\begin{mainQ}
What can the information complexity measure $\IC^\mu$ tell us about the query complexity of Boolean functions?
\end{mainQ}

While information theory has been used extensively in the study of query complexity (see, notably, \cite{JKS03} and \cite{GLSS23}) and although
variants of this question have been raised repeatedly within the query complexity community,\footnote{See for example the Stack Exchange question~\cite{Yue15}. Anecdotally, we have also had multiple discussions around this question with different sets of colleagues over the years.} very little appears to be known about this question beyond the obvious fact that information complexity provides a lower bound on randomized query complexity. Yet, the study of information complexity within communication complexity offers some evidence that we should be able to obtain more insightful answers. Notably, in communication complexity, the following is known:

\begin{description}
\item[Information is Amortized Complexity.] The amortized (per-input) communication complexity of a function when it is computed on multiple inputs in parallel is exactly captured by (internal) information complexity:
$\lim_{n \to \infty} \frac{\CC^{\mu^{n}}(f^{n})}{n} = \IC_i^\mu(f)$~\cite{BR11}.

\item[Compression.] 
The internal information complexity of a function is always bounded above by its communication complexity and the maximum gap that can exist between the two measures is exponential~\cite{Bra15,GKR16}. But we can also say more: a communication protocol with information cost $I$ and communication cost $C \gg I$ can always be ``compressed''  to yield an alternative bounded-error protocol with communication cost 
$O( \sqrt{I \cdot C} )$~\cite{BBCR13}.

\item[Direct Product Theorems.] 
Information complexity can establish \emph{Direct Product Theorems} that give lower bounds on the communication complexity of computing $n$ instances of a function even when the required probability of success on all $n$ instances is exponentially small: $\CC_{e^{-\Omega(n)}}^{\mu^n}(f^n) = \tilde{\Omega}(n \, \IC_i^\mu(f))$~\cite{BRWY13}.
\end{description}

Na\"ively, one may hope to obtain analogues of these results in query complexity by simply replacing the measure ${\sf CC}^\mu(f)$ with the randomized query complexity measure $\overline{\sf R}^\mu(f)$ and using the above query-based notion of information complexity in place of the internal information complexity used in communication complexity.
But such hopes are quickly shown to be unreasonable when we consider the ${\sf OR}_k \colon \{0,1\}^k \to \{0,1\}$ function that outputs the OR of its $k$ input variables and the distribution $\mu$ that puts $\frac12$ probability mass on the input $0^k$ and probability mass $\frac1{2k}$ on each of the $k$ inputs that contain a single $1$. 

The randomized query complexity of the OR function on the above distribution $\mu$ is $\overline{\sf R}^\mu({\sf OR}_k) = \Theta(k)$, but its information complexity is $\IC^\mu({\sf OR}_k) = O(\log k)$. This already shows an exponential gap between the two measures.
And a closer analysis of the OR function shows even more: the natural protocol that queries the bits of the input in order until it finds a 1 achieves the optimal randomized query complexity and information complexity of the OR function simultaneously, so we cannot hope to obtain a compression result for $\overline{\sf R}^\mu$. And, in fact, it appears that the information complexity does not characterize the amortized query complexity of the OR function or give good direct product theorems either.
Taken together, these observations show that any meaningful link between the information complexity measure ${\sf IC}^\mu$ and randomized query complexity (if any exists!) must go through some other complexity measure, not $\overline{\sf R}^\mu$.

\subsection{Our Results}

We show that the information complexity measure $\IC^\mu$ \emph{does} lead to strong amortization, compression, and direct product theorems in randomized query complexity, but in terms of the \emph{tree size} complexity of Boolean functions.

Recall that we can associate every randomized query algorithm with a distribution over decision trees. The \emph{size} of a decision tree $T$, denoted ${\sf size}(T)$, is the number of leaves that it contains. The properties of decision tree size complexity measures have been studied previously in \cite{CDMRS23, Sni85, JRSW99, CMP25, CDM23, Che25}.

The \emph{distributional worst-case randomized size complexity} of $f \colon \{0,1\}^k \to \{0,1\}$ with respect to the distribution $\mu$ over $\{0,1\}^k$ and the success parameter $\gamma \in (\frac12, 1]$ is
\[
\mathsf{RSize}^\mu_{\gamma}(f) = \min_{R \,:\,\mathsf{success}^\mu_f(R) \geq \gamma} \max_{T \in {\sf supp(R)}} \mathsf{size}(T),
\]
the maxiumum size $s$ such that every randomized decision tree that computes $f$ with success probability at least $\gamma$ on inputs drawn from $\mu$ must have at least $s$ leaves in one of the trees in its support.\footnote{We remark that the distributional worst-case randomized tree size complexity is equal to its deterministic counterpart ${\sf DSize}^\mu_{\gamma}(f)$; this equivalence, however, does not hold with the multi-input extension of the worst-case size complexity measure that we will consider shortly.}

The logarithm of the worst-case size complexity is related to the randomized query complexity of Boolean functions via the inequality
$\log {\sf RSize}^\mu_\gamma(f) \le {\sf R}^\mu_\gamma(f)$, where ${\sf R}^\mu_\gamma(f)$ is the minimum worst-case depth complexity of randomized decision trees that compute $f$ with success probability at least $\gamma$ over $\mu$. 
But, as the OR function shows, the two sides of this inequality can be separated by an exponential factor.
The logarithm of the worst-case size complexity of $f$ also gives an upper bound on its information complexity (informally, because the maximum amount of entropy of the set of bits revealed by a decision tree is bounded above by the number of leaves in the tree).
And, as we show below, there are also much deeper connections between information complexity and tree size complexity.

\subsubsection*{Information is Amortized Size Complexity}

We first show that information complexity equals amortized worst-case size complexity.
Let $f^n \colon \{0,1\}^{k \times n} \to \{0,1\}^n$ denote the function obtained by applying $f$ on $n$ different inputs.
A randomized decision tree computes $f^n$ with \emph{per-input success probability} at least $\gamma$ with respect to the product distribution $\mu^n$ when the $i$th bit of its output equals the value of $f$ on the $i$th input with probability at least $\gamma$ for each $i \in [n]$ on inputs drawn from $\mu^n$.
Let $\mathsf{RSize}^{\mu^n}_{\gamma^{\otimes n}}(f^n)$ denote the maximum value of $s$ such that every randomized decision tree that computes $f^n$ with per-input success probability at least $\gamma$ with respect to $\mu^n$ must have a decision tree of size $s$ in its support.
Then we obtain the following identity.

\begin{restatable}[Amortization for Tree Size]{theorem}{amortization}
\label{thm:amortization}
    For any function $f: \{0,1\}^k \rightarrow \{0,1\}$, distribution $\mu$ on $\{0,1\}^k$ and success probability $\gamma < 1$,
    \begin{align*}
        \lim_{n \rightarrow \infty} \frac{\log\left(\mathsf{RSize}^{\mu^n}_{\gamma^{\otimes n}}(f^n)\right)}{n} = \IC^\mu_\gamma(f)
    \end{align*}
\end{restatable}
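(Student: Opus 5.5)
We prove the two inequalities separately, with $\IC^\mu_\gamma(f)$ the minimum of ${\rm I}(X;L\mid T)$ over randomized trees with success at least $\gamma$ on $\mu$.

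\textbf{Lower bound ($\ge$).} Take a randomized tree $R$ for $f^n$ with per-input success at least $\gamma$ in which every tree in the support has at most $s$ leaves. Then ${\rm I}(X^{(1)},\dots,X^{(n)}; L\mid T)\le H(L\mid T)\le \log s$. Since the inputs $X^{(i)}$ are independent under $\mu^n$, the chain rule gives
\[
{\rm I}(X;L\mid T)=\sum_i {\rm I}(X^{(i)};L\mid T,X^{(<i)})\ge \sum_i {\rm I}(X^{(i)};L\mid T,X^{(<i)}).
\]
We then embed: for each $i$, build a randomized tree for a single copy of $f$ as follows. Sample $T\sim R$ and $X^{(<i)}\sim\mu^{i-1}$ as public randomness. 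Answer queries to coordinates $j>i$ by sampling from $\mu$ conditioned on previously revealed bits; this is private randomness, which we view as part of the tree's internal node structure. Answer queries to copy $i$ by querying the real input.

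The subtle point is that in this definition of information the randomness must be absorbed into $T$, and then only the revealed bits of $x^{(i)}$ should count. We handle this by letting the leaf of the simulated tree record only the path through copy-$i$ queries. Because the sampling of later copies depends on what has been revealed, we need the standard identity that the information revealed about $X^{(i)}$ equals ${\rm I}(X^{(i)};L\mid T,X^{(<i)})$, with the $X^{(>i)}$ sampling treated as private. This uses the decision-tree analogue of the Braverman–Rao embedding: information about $X^{(i)}$ revealed along a path equals the sum of entropies of the copy-$i$ bits queried, and mixing over private randomness does not increase it. Success on copy $i$ is at least $\gamma$, so each term is at least $\IC^\mu_\gamma(f)$. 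Hence $\log s\ge n\,\IC^\mu_\gamma(f)$ for every $n$.

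\textbf{Upper bound ($\le$).} Fix an optimal single-copy randomized tree $R$ with information $I$ and success $\gamma'$. If necessary, lower the cost slightly by mixing with a trivial tree so that $\gamma'>\gamma$; this uses continuity of $\IC$ in $\gamma$, which holds because $\IC$ is convex in $\gamma$ by mixing and $\gamma<1$.

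The $n$-copy algorithm samples $T_1,\dots,T_n$ independently and runs them in parallel, each on its own copy. The leaf of the product is the tuple $(L_1,\dots,L_n)$, and the size of the product is $\prod_i\mathsf{size}(T_i)$, which is too large: we need roughly $2^{nI}$ leaves. Typicality fixes this. Fix a typical tuple of trees $(T_i)$ and let the combined run abort, outputting arbitrary answers, as soon as the partial path becomes atypical. Concretely, we track $\sum_i -\log\Pr[\text{path}_i\mid T_i]$ and stop once it exceeds $n(I+\delta)$. The leaves of the resulting tree correspond to path tuples of probability at least $2^{-n(I+\delta)}$. These probabilities sum to at most 1, so there are at most $2^{n(I+\delta)}$ such leaves, plus the abort leaves. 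Aborts happen only at nodes whose path probability has just dropped below the threshold, and there are at most $2^{n(I+\delta)}$ parents, each with two children, so the abort leaves add only a factor of 2.

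By the law of large numbers, the abort probability tends to 0, since $\E[-\log\Pr[L\mid T]]=H(L\mid T)=I$ for deterministic trees. It remains to check the success requirement. Per-input success is then at least $\gamma'-o(1)\ge\gamma$. For this we need the condition that every tree in the support achieves it, not just a typical choice of trees, so we keep the $T_i$ random and apply the threshold to $\sum -\log\Pr[\ell_i\mid T_i]$. The resulting bound holds for every realization, and the entropy expectation is conditional on $T$. Taking $\delta\to0$ gives the limit, since the lower bound already shows that the limit is at least $\IC^\mu_\gamma(f)$.

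\textbf{Main obstacle.} The main obstacle is the embedding step in the lower bound: showing that the information cost of the simulated single-copy tree equals the conditional mutual information term, given that in the paper's definition all randomness sits in $T$.
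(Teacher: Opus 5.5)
\textbf{Verdict: genuine gap in the lower bound; the upper bound is correct.} The gap is at the embedding step, which you yourself flag as the main obstacle, and the justification you give points the wrong way. In this model a randomized tree is a distribution over deterministic trees. So your single-copy tree carries $T$, $X^{(<i)}$ and the coins $C$ used to sample the other copies inside its tree distribution. Its information cost is $\mathrm{I}(X^{(i)};L\mid T,X^{(<i)},C)$, or equivalently $H(L\mid T,X^{(-i)})$ if you sample $X^{(>i)}$ up front. You need this to be at most $\mathrm{I}(X^{(i)};L\mid T,X^{(<i)})$.

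Since $C$ is independent of $X^{(i)}$ given $(T,X^{(<i)})$, the general inequality goes the other way: $\mathrm{I}(X^{(i)};L\mid T,X^{(<i)})\le \mathrm{I}(X^{(i)};L\mid T,X^{(<i)},C)$. Absorbing private randomness into the tree never decreases the cost, and for general maps from independent inputs to leaves it can increase it. For example, with $L=X^{(1)}\oplus X^{(2)}$ and uniform bits, the chain-rule term for $i=1$ is $0$ while the embedded cost is $1$. In the Braverman--Rao setting this is harmless because protocols may use private coins that are not charged; here they may not. So the ``standard identity'' you cite is exactly what has to be proved, and it holds only because of the query structure.

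The missing argument is a query-by-query decomposition. Fix a tree and a set $S\subseteq[n]$, and write $H(L\mid T,X_S)=\sum_t H(q_t\mid T,p_{t-1},X_S)$ over the queries along the path.
\begin{itemize}
\item A query to a copy in $S$ contributes $0$.
\item By \cref{fact:indepinputs} the copies are independent given the current partial assignment $p_{t-1}$. So a query to copy $j\notin S$ contributes the same $H(q_t\mid T,p_{t-1})$ whether or not you condition on $X_S$.
\end{itemize}
Let $c_j$ be the total contribution of copy-$j$ queries. Then $H(L\mid T,X_S)=\sum_{j\notin S}c_j$. This gives both $\mathrm{I}(X^{(i)};L\mid T,X^{(<i)})=c_i=H(L\mid T,X^{(-i)})$ and $H(L\mid T)=\sum_i H(L\mid T,X^{(-i)})$. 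This is the content of \cref{lem:qicdslb}. With it you no longer need the chain rule or the public/private split: just sample all other copies as the tree's randomness, as in the proof of \cref{lem:additivity}.

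Your upper bound is a correct alternative to the paper's. The paper uses strong typicality of the randomness and conditional typicality of the leaves. You instead cut every product tree once its cumulative surprisal exceeds $n(I+\delta)$, and apply the weak law of large numbers to the bounded i.i.d.\ leaf surprisals. This bounds the size of every tree in the support directly; it is an $n$-fold version of the proof of \cref{thm:compression}.

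There is one slip in the size count. Parents of abort leaves are nested, not disjoint, so there can be more than $2^{n(I+\delta)}$ of them. Nodes at a common depth are disjoint, so there are at most $nk\cdot 2^{n(I+\delta)}$ such parents. The extra $\log(nk)/n$ vanishes in the limit.
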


The proof of \cref{thm:amortization} is obtained by proving inequalities in both directions.
The lower bound direction relies on a key structural property of information complexity: it behaves additively, in the following way.

\begin{restatable}[Additivity of Information Complexity]{lemma}{additivity}
\label{lem:additivity}
    For any function $f: \{0,1\}^k \rightarrow \{0,1\}$, distribution $\mu$ on $\{0,1\}^k$ and success probability $\gamma$,
    \begin{align*}
        \IC_{\gamma^{\otimes n}}^{\mu^n}(f^n) = n\,\IC_{\gamma}^{\mu}(f)
    \end{align*}
\end{restatable}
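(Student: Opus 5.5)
We prove the two inequalities separately. Throughout, $\IC^\mu_\gamma(f)$ is the minimum of ${\rm I}(X;L\mid T)$ over randomized trees $R$ computing $f$ with success at least $\gamma$ on $\mu$. Equivalently, it is the entropy of the transcript of queried bits given the tree. For $f^n$ the success constraint is per-input.

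\textbf{Upper bound} ($\IC^{\mu^n}_{\gamma^{\otimes n}}(f^n)\le n\,\IC^\mu_\gamma(f)$). Take an optimal randomized tree $R$ for $f$ and run $n$ independent copies $T_1,\dots,T_n\sim R$ sequentially, the $i$th copy on block $X^{(i)}$. Output the $i$th answer from the $i$th leaf. Per-input success on coordinate $i$ equals the success of $R$ on $\mu$, which is at least $\gamma$. The leaf of the composed tree is the tuple $(L_1,\dots,L_n)$, and the tree is the tuple $(T_1,\dots,T_n)$. Both the blocks $X^{(i)}\sim\mu$ and the trees $T_i$ are independent across $i$. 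Hence ${\rm I}(X;L\mid T)=\sum_i {\rm I}(X^{(i)};L_i\mid T_i)=n\,\IC^\mu_\gamma(f)$.

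\textbf{Lower bound} ($\IC^{\mu^n}_{\gamma^{\otimes n}}(f^n)\ge n\,\IC^\mu_\gamma(f)$). We use the standard embedding (direct-sum) argument. Fix an optimal randomized tree $R^*$ for $f^n$ with per-input success at least $\gamma$, and let $X=(X^{(1)},\dots,X^{(n)})\sim\mu^n$. By the chain rule,
\[
{\rm I}(X;L\mid T)=\sum_{i=1}^n {\rm I}(X^{(i)};L\mid T,X^{(<i)}).
\]
For each $i$ we build a randomized tree $R_i$ for $f$ on a single input $Y\sim\mu$:
\begin{enumerate}
\item publicly sample $T\sim R^*$ and $X^{(<i)}\sim\mu^{i-1}$, and privately sample $X^{(>i)}\sim\mu^{n-i}$;
\item run $T$ with $Y$ placed in block $i$;
\item output the $i$th output bit.
\end{enumerate}
Formally, the "tree" $T_i$ of $R_i$ is the deterministic tree obtained by hardwiring the public randomness $(T,X^{(<i)})$. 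The private bits $X^{(>i)}$ pose a problem, since the model only has distributions over deterministic trees and all randomness is then public.

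This private randomness is the main obstacle. If we make $X^{(>i)}$ public as well, the information becomes ${\rm I}(X^{(i)};L\mid T,X^{(<i)},X^{(>i)})$, and the chain rule no longer sums these terms to ${\rm I}(X;L\mid T)$ in general. The fix is to exploit the structure of decision trees rather than to rely on private coins. In $T$, queries to blocks other than $i$ have answers independent of $Y$. So we can define $T_i$ as the tree that simulates $T$, queries $Y$ exactly when $T$ queries block $i$, and resolves queries to the other blocks with fresh samples from the appropriate conditional distribution of $\mu$. We realize those samples by randomizing over trees: for fixed $(T,X^{(<i)})$, the sampling of $X^{(>i)}$ can be performed lazily. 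The key point is to take as the leaf of $T_i$ only the sequence of block-$i$ answers, together with the block-$(>i)$ answers as public randomness, while keeping $X^{(<i)}$ public. Under this choice, the information cost of $R_i$ is ${\rm I}(Y;L_i\mid T_i)$, and we aim to show it is at most ${\rm I}(X^{(i)};L\mid T,X^{(<i)})$. This holds by the data-processing inequality, because the transcript restricted to block $i$ is a function of $L$ and the inputs.

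We also need to check that mixing over the block-$(>i)$ randomness is legitimate. The concern is that moving $X^{(>i)}$ into the public tree description might increase information. It does not, because $X^{(>i)}$ is independent of $X^{(i)}$ given $(T,X^{(<i)})$. The worry is that conditioning on it could still correlate with the leaf. We handle this by choosing the public part carefully, and we use the observation that conditioning on an independent variable $Z$ satisfies ${\rm I}(A;B\mid C,Z)\le {\rm I}(A;B,Z\mid C)={\rm I}(A;B\mid C)+{\rm I}(A;Z\mid B,C)$. If this decomposition causes trouble, we will instead show directly that ${\rm I}(X^{(i)};L\mid T,X^{(<i)})\ge {\rm I}(X^{(i)};L\mid T,X^{(<i)},X^{(>i)})$. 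That inequality is plausible because the leaf of a decision tree reveals exactly the queried bits, so the information splits cleanly across blocks.

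The success probability of $R_i$ equals the $i$th per-input success of $R^*$, which is at least $\gamma$. Therefore each term in the chain-rule sum is at least $\IC^\mu_\gamma(f)$, and summing over $i$ gives the lower bound.
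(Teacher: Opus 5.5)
Your upper bound is correct and matches the paper's. The lower bound has a genuine gap, at exactly the step you flag as the main obstacle, and your proposal never closes it.

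\textbf{Where the argument breaks.} In this model all randomness is public, so $R_i$ must hardwire the block-$(>i)$ data as well. Given $T$ and all other blocks, the leaf of the restricted tree and the leaf $L$ of $T$ determine each other. So the cost of $R_i$ is ${\rm I}(X^{(i)};L\mid T,X^{(<i)},X^{(>i)})$. To match your chain-rule term you need
\[
{\rm I}(X^{(i)};L\mid T,X^{(<i)},X^{(>i)})\le {\rm I}(X^{(i)};L\mid T,X^{(<i)}),
\]
which you only call ``plausible.'' Your ``lazy sampling'' variant does not help, because the block-$(>i)$ answers are still public. The data-processing justification also fails: the two sides are conditioned on different variables. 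For general random variables the inequality actually goes the wrong way. Since $X^{(i)}$ is independent of $X^{(>i)}$ given $(T,X^{(<i)})$,
\[
{\rm I}(X^{(i)};L\mid T,X^{(<i)},X^{(>i)}) = {\rm I}(X^{(i)};L\mid T,X^{(<i)}) + {\rm I}(X^{(i)};X^{(>i)}\mid L,T,X^{(<i)}).
\]
So conditioning on the independent blocks can only add information. For instance, if $L$ were $X^{(i)}\oplus X^{(>i)}$ for single bits, the left side would be $1$ and the right-hand first term $0$. You wrote down essentially this decomposition yourself, but you did not show that the last term vanishes.

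\textbf{The missing idea.} The missing ingredient is the structural fact the paper uses (\cref{fact:indepinputs}). A leaf of $T$ is a partial assignment, i.e.\ a subcube $\ell_1\times\cdots\times\ell_n$. The product distribution $\mu^n$ restricted to a subcube is again a product $\prod_j \mu|_{\ell_j}$. Hence the blocks stay mutually independent given $(T,L)$, even after further conditioning on $X^{(<i)}$. So the last term is $0$ and the two quantities are equal. With this one observation your chain-rule route goes through, and it is a valid alternative proof.

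\textbf{How the paper organizes it.} The paper uses the same fact differently. In \cref{lem:qicdslb} it expands ${\rm I}(L;X)$ query by query, using that each answer depends only on the queried block given the history. This gives directly $\info^{\mu^n}(D)=\sum_i {\rm I}(L;X^{(i)}\mid X^{(-i)})$. Therefore the embedding that samples \emph{all} other blocks publicly has exactly the right total cost, and no chain rule over $X^{(<i)}$ is needed.
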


The additivity property of information complexity is known to hold in communication complexity~\cite{BR11}. \cref{lem:additivity} shows that the same property holds in the query complexity setting as well.\footnote{\cite{Sur24} showed that the amortized worst-case randomized query complexity is equal to the expected randomized query complexity in the non-distributional setting. This result can be viewed as a different analogue of the ``information is amortized communication complexity'' result but it is incomparable with \cref{thm:amortization}.}

\subsubsection*{Information Allows for Tree Size Compression}

\Cref{thm:amortization} says that it is possible to compress the tree size in the amortized setting where a single tree computes the value of $f$ on many inputs at once.
But it does not say anything about using the information complexity of a function to possibly compress the size of a tree that computes $f$ on only a single input.
Our next result shows that compression is possible in this setting as well, and that in fact we get a very strong connection between size complexity and information complexity when we allow for a small loss in success probability.

\begin{restatable}[Compression for Tree Size]{theorem}{compression}
\label{thm:compression}
    For any function $f: \{0,1\}^k \rightarrow \{0,1\}$, distribution $\mu$ on $\{0,1\}^k$, success probability $\gamma$ and $0 < \epsilon \leq \gamma - \frac{1}{2}$,
    \begin{align*}
        \log\left(\mathsf{RSize}^{\mu}_{\gamma - \epsilon}(f)\right) \leq \left(\frac{2}{\epsilon}\right)^2\IC_{\gamma}^{\mu}(f)
    \end{align*}
\end{restatable}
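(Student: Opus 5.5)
Start from an optimal randomized tree $R$ for $\IC^\mu_\gamma(f)$, so that ${\sf success}^\mu_f(R)\ge\gamma$ and ${\rm I}(X;L\mid T)=I$ with $I=\IC^\mu_\gamma(f)$. We write the information as an expected log-ratio over leaves. Fix $T$ and, for each leaf $\ell$ of $T$, let $p_\ell=\Pr_{X\sim\mu}[L=\ell]$; this is the probability of the set of inputs consistent with the path to $\ell$. Each leaf is a subcube, and $L$ is a deterministic function of $X$ given $T$. Hence ${\rm I}(X;L\mid T)=H(L\mid T)=\E_T\E_{\ell}[\log(1/p_\ell)]$. The natural idea is to prune $T$: every node whose subtree is reached with probability at most $2^{-s}$ is replaced by a leaf, for a threshold $s$ chosen as $s=(2/\epsilon)^2 I$ (or something close to it). The resulting tree $T'$ has at most $2^{s}$ leaves. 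Indeed, the pruned frontier consists of nodes $v$ whose parent has probability greater than $2^{-s}$, while the surviving original leaves have probability greater than $2^{-s}$. To make the count clean, prune at the first node on each path whose probability drops to at most $2^{-s}$ and charge it to its parent; this gives at most $2\cdot 2^{s}$ leaves. A slightly smaller threshold absorbs the factor $2$, or the constant $(2/\epsilon)^2$ leaves slack for it.

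The output at a pruned leaf is any fixed bit, so we lose at most the probability mass of inputs that reach a pruned node. An input reaches a pruned node only if its eventual original leaf $\ell$ has $p_\ell\le 2^{-s}$, that is, $\log(1/p_\ell)\ge s$. Markov's inequality over $T$ and $\ell$ gives $\Pr[\log(1/p_L)\ge s]\le I/s$. With $s=I/\epsilon$ this is already at most $\epsilon$, which is far better than the stated $(2/\epsilon)^2$. I suspect the quadratic arises because the authors measure information differently, or have to handle the randomness in a worst-case way. The definition asks for $\max_{T\in{\rm supp}(R)}{\rm size}(T)$, so we prune every tree in the support, and the error bound holds on average over $T$, which is exactly what ${\sf success}^\mu_f$ requires. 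So the Markov step suffices in expectation, and we obtain a randomized tree $R'$ with every tree of size at most $2^{O(I/\epsilon)}$ and success at least $\gamma-\epsilon$.

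The main obstacle is the identity ${\rm I}(X;L\mid T)=\E[\log 1/p_L]$. It holds only if $L$ is determined by $(X,T)$, which is true for deterministic trees. If the paper's randomized trees can make random internal choices not captured by $T$, the equality fails. In that case, or if the paper defines information through the revealed bits with extra conditioning, I would fall back to a second-moment argument, which explains $(2/\epsilon)^2$. Take the log-likelihood ratio $Z=\log\frac{\mu(x\mid \ell)}{\mu(x)}$ along the path. Its expectation is $I$, but it is not nonnegative pointwise. Control its negative part, for instance via Pinsker's inequality on each step or the bound $\E[Z^-]\le O(1)$, and then apply Markov to $Z^+$ with threshold $(2/\epsilon)^2 I$. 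If I hit that complication, I would first try the bound $\Pr[Z\ge t]\le(I+O(1))/t$ and check that it suffices given $\epsilon\le\gamma-1/2\le1/2$.
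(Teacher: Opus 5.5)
Your error accounting is fine. The identity $\rI(X;L\mid T)=\E[\log(1/p_L)]$ is exactly the paper's surprisal proposition, since randomized trees here are distributions over deterministic trees, so your fallback argument is unnecessary. The gap is the size bound for your pruned tree, which is false, and this is the heart of the theorem.

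Replacing every node of mass at most $2^{-s}$ by a leaf does not bound the number of leaves. The step ``charge it to its parent'' breaks because the parents of the frontier nodes are not disjoint events: they can all lie on a single heavy path. Only an antichain of nodes of mass $>2^{-s}$ is guaranteed to have fewer than $2^s$ elements. Concretely, let $\mu$ put mass $1-\eta$ on $0^k$ and mass $\eta/k$ on each $e_i$, and let $T$ query $x_1,x_2,\ldots$ in order, stopping at the first $1$. Every node on the spine has mass at least $1-\eta$, and the $k$ side nodes of mass $\eta/k$ are already leaves. So your pruning changes nothing, and $T'$ still has $k+1$ leaves. Yet if $R$ is this single tree and $\eta=1/\log k$, then $I=H(L)=1+o(1)$, so $s=I/\epsilon=O(1/\epsilon)$ while the size grows with $k$.

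The missing idea is that low-mass branches must be spliced out rather than cut off:
\begin{itemize}
\item Mark the leaves with $p_\ell\le 2^{-s}$ as discarded, and call a node dead if all leaves below it are discarded.
\item Delete every query one of whose children is dead, routing all inputs at that node into the live child. A dead root becomes a single leaf.
\end{itemize}
The leaves of the new tree are then exactly the surviving original leaves. Any input that reached a surviving leaf still reaches it and gets the same output, so success drops by at most the discarded mass. The survivors are disjoint events, each of mass $>2^{-s}$, so there are fewer than $2^{s}$ of them. This is what the paper's proof does implicitly when it ``removes'' the light leaves and bounds the size by the number $|P_r|$ of survivors.

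With this repair, your single global Markov step is valid and gives $\log\mathsf{RSize}^{\mu}_{\gamma-\epsilon}(f)\le \IC^\mu_\gamma(f)/\epsilon$. That is stronger than the stated bound, and your suspicion about the quadratic is right. The paper first discards trees with $i_r>2\info^\mu(A)/\epsilon$, then truncates each remaining tree at its own threshold $2i_r/\epsilon$. It thus composes two Markov steps costing $\epsilon/2$ each, and that composition is the only source of the factor $(2/\epsilon)^2$. A per-tree guarantee is not needed, because $\mathsf{success}^\mu_f$ averages over the tree as well as the input.
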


Since $\IC_\gamma^\mu(f) \le \log {\sf RSize}^\mu_\gamma(f)$ for every function $f$, \cref{thm:compression} shows that the information complexity of $f$ in the bounded-error regime is essentially equivalent to the log of its size complexity. We can also use \cref{thm:compression} to obtain a direct sum theorem for randomized tree size, see \cref{cor:sizeds} for a precise statement.

\subsubsection*{Direct Product Theorems}

Next, we consider the Direct Product Theorem question for information complexity.
Here, we again consider the function $f^n$ on $n$ inputs, but we relax the success condition of the algorithms that compute this function to succeed on all inputs with probability $\gamma^n$ instead of requiring the per-input success probability to be $\gamma$ on each of the $n$ inputs.
To see why this is a weaker requirement on algorithms, consider the ``lazy algorithm'' that does nothing and simply guesses the output with probability $1-\gamma^n$ and computes $f^n$ exactly with the remaining probability $\gamma^n$. This algorithm has overall success probability (slightly more than) $\gamma^n$, but its per-input success probability is also only $\approx \gamma^n \ll \gamma$.
The relaxed correctness condition on the algorithms means that proving lower bounds for these algorithms is more challenging and, indeed, doing so has been the focus of much research in query complexity~\cite{Sha01, KSW04, She12, Dru12, BB25}. 

Does a Direct Product Theorem hold for information complexity? The lazy algorithm described above shows that in general $\IC_{\gamma^n}^{\mu^n}(f^n) \le \gamma^n \cdot n \, \IC_1^\mu(f)$, which can be exponentially smaller than the bound $n \, \IC_\gamma^\mu(f)$ that we would like to obtain in a direct product theorem.
This limitation is due to the fact that information complexity is an average-case complexity measure, and a similar limitation also holds for average-case randomized query complexity~\cite{BB25}.
And, as in~\cite{BB25}, this limitation can be overcome by considering \emph{success-conditioned} information complexity.
We define this measure ${\sf sIC}$ formally in \cref{sec:icinquery}, but we can think of it informally as the expected amount of information about the input learned by the algorithm conditioned on the algorithm succeeding in computing the correct value of the function. 

Our next result shows that success-conditioned information complexity does satisfy a Direct Product Theorem.

\begin{restatable}[Direct Product Theorem for Success-Conditioned Information Complexity]{theorem}{sicdpt}
\label{thm:dpt}
    For every function $f: \{0,1\}^k \rightarrow \{0,1\}$, distribution $\mu$ on $\{0,1\}^k$, success probability $\gamma$ and $n \geq 1$,
    \begin{align*}
        \mathsf{sIC}^{\mu^n}_{\gamma^n}(f^n) = \Omega\left(n\cdot\mathsf{sIC}^{\mu}_{\gamma}(f)\right).
    \end{align*}
\end{restatable}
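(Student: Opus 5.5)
The plan follows the strategy of \cite{BB25} for success-conditioned query complexity, with depth replaced by information. The argument has three steps: take an optimal algorithm for $f^n$, turn it into an algorithm for a single copy of $f$ by embedding the input into a random coordinate, and use a chain rule for information.

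\textbf{Setup.} Let $R$ be a randomized tree for $f^n$ that succeeds on all $n$ coordinates with probability at least $\gamma^n$ over $\mu^n$. Let $S$ denote the success event. Write $\mathsf{sIC}$ as the information $\rI(X;L\mid T)$ measured under the distribution conditioned on $S$, as defined in \cref{sec:icinquery}. Let $S_i$ be the event that the algorithm is correct on coordinates $1,\dots,i$. Then
\[
\gamma^n \le \Pr[S] = \prod_i \Pr[S_i\mid S_{i-1}],
\]
so at least a constant fraction of the indices $i$ satisfy $\Pr[S_i\mid S_{i-1}]\ge\gamma$ up to a $(1-O(1/n))$-type slack. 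If the statement needs to absorb constant losses in $\gamma$, I would use the $\Omega(\cdot)$ for that. An averaging argument then picks such an $i$.

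\textbf{Embedding and chain rule.} Using $S$-conditioning, decompose the information about $X=(X_1,\dots,X_n)$ revealed by the leaf, via the chain rule, as
\[
\sum_i \rI(X_i;L\mid T,X_{<i}).
\]
The single-copy algorithm $R'$ for $f$ on input $x$ works as follows.
\begin{enumerate}
\item Sample $X_{<i}$ from $\mu^{i-1}$ conditioned on $S_{i-1}$, and sample $X_{>i}$ from $\mu^{n-i}$.
\item Place $x$ in coordinate $i$ and run $T$. Queries to the other coordinates are answered internally, so they are free in information terms.
\item Output the $i$th answer.
\end{enumerate}
Its success probability is $\Pr[S_i\mid S_{i-1}]\ge\gamma$. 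Conditioned on success, its information cost is the $i$th chain-rule term, measured under conditioning on $S_i$. The extra randomness $(X_{<i},X_{>i})$ is absorbed into the tree randomness $T$, which is legitimate because conditioning on $T$ is allowed in the definition.

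\textbf{Main obstacle.} The hard step is the mismatch between conditioning events. The chain-rule terms are naturally measured conditioned on the full event $S$. The $i$th embedded algorithm, however, only conditions on $S_i$, and its input distribution for coordinate $i$ must be exactly $\mu$ under the prefix conditioning. I would address this in two ways.
\begin{itemize}
\item Conditioning on success events that are measurable with respect to the leaf, so that conditioning does not destroy the product structure of the unqueried coordinates in the way the argument needs.
\item Comparing the information under $S$ with the information under $S_i$, using the fact that $\Pr[S\mid S_i]$ is bounded below via the product structure. Information can increase by at most $\log(1/\Pr[\text{event}])$ under conditioning, and that loss is amortized across coordinates.
\end{itemize}
If this slack is too lossy, I would fall back on the \cite{BB25} trick of redefining the sequence of events so that each step loses only a constant factor in success. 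I would then combine this with $\mathsf{sIC}^\mu_{\gamma'}$ versus $\mathsf{sIC}^\mu_\gamma$ monotonicity to reach $\Omega(n\,\mathsf{sIC}^\mu_\gamma(f))$.
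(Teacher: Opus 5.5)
Your plan has a genuine gap at the embedding step. This is exactly the obstacle you flag but leave open.

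The event $S_{i-1}$ is determined by the whole adaptive run of $T$, including the bits read from coordinate $i$ and from $X_{>i}$. So a single-copy simulator cannot sample $X_{<i}$ ``from $\mu^{i-1}$ conditioned on $S_{i-1}$'' before it knows $x$. If it instead samples all coordinates jointly conditioned on $S_{i-1}$ and then plants the true $x\sim\mu$ in coordinate $i$, the run no longer has the conditional law. Then neither ``success $=\Pr[S_i\mid S_{i-1}]$'' nor ``cost $=$ the $i$th chain-rule term'' holds.

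Also, $\mathsf{sIC}$ (\cref{sec:icinquery}) is not $\rI(X;L\mid T)$ under the success-conditioned law. It is the success-weighted average of $\mathsf{surprisal}^\mu(\ell)=-\log\Pr_\mu[\ell]$ with $\Pr_\mu$ unconditioned. That is a cross-entropy, exceeding the conditioned entropy by a KL term. It splits across coordinates by per-leaf additivity of surprisal, not by a chain rule.

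Neither proposed repair closes the gap. Success is not determined by the leaf. Reweighting leaves by their success probability (which is what $\mathsf{sinfo}$ already does) produces a law on root-to-leaf paths that is no longer a product across coordinates, and that is the whole difficulty. The claim ``conditioning on $E$ costs at most $\log(1/\Pr[E])$'' holds for KL divergence but not for entropy or cross-entropy: conditioning on an event of probability $\epsilon$ can raise $H(L)$ from about $\epsilon m$ to $m$. Even if it held, the slack would be of order $\log(1/\gamma)$ per coordinate after amortization. That additive loss makes the bound vacuous whenever $\mathsf{sIC}^\mu_\gamma(f)=O(1)$.

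The averaging over $i$ is a second, independent problem: it loses in $\gamma$, and neither of your proposed absorptions works. From $\prod_i\Pr[S_i\mid S_{i-1}]\ge\gamma^n$ you do not get a constant fraction of indices within a $(1-O(1/n))$ factor of $\gamma$. For instance, put $\sqrt n$ indices at $1$ and the rest at $\gamma^{1/(1-1/\sqrt n)}$. Markov on $\ln(1/\Pr[S_i\mid S_{i-1}])$ only gives $\gamma^{1+c}$ on a constant fraction.

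The $\Omega(\cdot)$ hides constants in the cost, not in $\gamma$. Monotonicity points the wrong way, since $\mathsf{sIC}^\mu_{\gamma'}\le\mathsf{sIC}^\mu_\gamma$ for $\gamma'<\gamma$. The two can be far apart. Take $f(z,y)=z\wedge\oplus(y)$ with $\Pr[z=0]=p$ and $y\in\{0,1\}^m$ uniform. Then $\mathsf{sIC}^\mu_{\gamma'}(f)=O(1)$ at $\gamma'=(1+p)/2$, while $\mathsf{sIC}^\mu_\gamma(f)\ge(\gamma-\gamma')m$ for every $\gamma>\gamma'$. The $\gamma$-loss on its own could likely be handled another way, e.g.\ via convexity in $s$ of $\min_{R:\,\mathsf{success}^\mu_f(R)\ge s}\E_\ell[\mathsf{success}^\mu_f(\ell)\,\mathsf{surprisal}^\mu(\ell)]$ plus AM--GM on the conditional successes. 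But that only helps once a valid embedding is in place.

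The missing idea is the discounted score $\mathsf{ds}^\mu_{f,\alpha}(R)=\E_\ell[\mathsf{score}^\mu_f(\ell)\,e^{-\alpha\,\mathsf{surprisal}^\mu(\ell)}]$. This is what \cite{BB25} actually does, not a redefinition of the events. It tensorizes exactly (\cref{lem:tensorization}). The simulator fills all other coordinates adaptively from $\nu(y)\propto\mu^n(y)\,\mathsf{score}^{\mu^n}_{f^n}(\ell_y)\,e^{-\alpha\,\mathsf{surprisal}^{\mu^n}(\ell_y)}$, which is the correct substitute for ``conditioning on success elsewhere''. AM--GM over the random position then gives $\mathsf{DS}^{\mu^n}_\alpha(f^n)\le\mathsf{DS}^\mu_\alpha(f)^n$.

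Finally, the two directions of \cref{lem:equivalence} convert this back to $\mathsf{sIC}$. The Jensen direction applies to $f^n$ at score $\gamma^n$. The converse applies to $f$ at the $\alpha^\ast$ of \cref{prop:scoreds}, where an optimizer has score exactly $\gamma$. Together they give $\mathsf{sIC}^{\mu^n}_{\gamma^n}(f^n)\ge n\cdot\frac{1}{\alpha^\ast}\ln\frac{\gamma}{\mathsf{DS}^\mu_{\alpha^\ast}(f)}=\Omega(n\,\mathsf{sIC}^\mu_\gamma(f))$, with no loss in the success parameter.
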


\Cref{thm:dpt} immediately implies a number of other direct product theorems. 
For example, it implies that that the distributional randomized query (or depth) complexity of every function $f$ satisfies
\[
        \mathsf{R}^{\mu^n}_{\gamma^n}(f) \geq \Omega(n \cdot \IC^{\mu}_{\gamma}(f)).
\]
It also recovers stronger direct product theorems, such as Drucker's direct product theorem for randomized tree size complexity~\cite{Dru12}.

\begin{corollary}[Direct Product Theorem for Randomized Tree Size]
\label{cor:dpt-treesize}
    For any function $f: \{0,1\}^k \rightarrow \{0,1\}$ and success probability $\gamma = \frac{2}{3}$,
    \begin{align*}
         \Omega\left(n \, \log\left(\mathsf{RSize}_{\gamma}(f)\right)\right)\leq \log\left(\mathsf{RSize}_{\gamma^{n}}(f^n)\right) \leq n\log\left(\mathsf{RSize}_{\gamma}(f)\right)
    \end{align*}
\end{corollary}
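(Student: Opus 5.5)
The upper bound is the easy direction. Take an optimal deterministic tree $T$ for $f$ with size $\mathsf{RSize}_\gamma(f)$ and success probability $\gamma$ on every input. (We read $\mathsf{RSize}_\gamma(f)$ in the non-distributional sense: the minimum, over randomized trees succeeding with probability $\gamma$ on every input, of the maximum size of a tree in the support.) To handle the randomized case, let $R$ be an optimal randomized tree with all trees in its support of size at most $s=\mathsf{RSize}_\gamma(f)$. Run independent copies $T_1,\dots,T_n\sim R$ sequentially, one on each input block: traverse $T_1$ on block $1$, and at each of its leaves hang a copy of $T_2$ on block $2$, and so on. Every tree in the support of the resulting distribution has at most $s^n$ leaves. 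By independence, all $n$ outputs are correct with probability at least $\gamma^n$ on every input. This gives $\log \mathsf{RSize}_{\gamma^n}(f^n)\le n\log\mathsf{RSize}_\gamma(f)$.

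For the lower bound, the plan is to pass through information complexity on a hard distribution, chaining three links.

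\textbf{Link 1 (minimax).} By a minimax argument over the finite set of trees of bounded size, there is a distribution $\mu$ on $\{0,1\}^k$ such that $\mathsf{RSize}^\mu_{\gamma'}(f)\ge \mathsf{RSize}_{\gamma'}(f)$ for a slightly smaller constant $\gamma'$. Success amplification for size, by running $O(1)$ copies in sequence and taking the majority, lets us move between constant success levels. It costs only a constant factor in $\log$ size, since sizes multiply.

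\textbf{Link 2 (compression).} Apply \cref{thm:compression} with constant $\epsilon$ to obtain
\[\log\mathsf{RSize}^\mu_{\gamma'-\epsilon}(f)\le O(1)\cdot\IC^\mu_{\gamma'}(f)\le O(1)\cdot \mathsf{sIC}^\mu_{\gamma'}(f).\]
The last inequality relates $\IC$ to $\mathsf{sIC}$. I expect it to follow from the definitions: a success-conditioned algorithm can be converted into an ordinary one with constant-factor loss at constant success probability, for instance by running it, or by guessing when the information budget is exceeded. It may require adjusting $\gamma'$ again.

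\textbf{Link 3 (direct product).} By \cref{thm:dpt},
\[\mathsf{sIC}^{\mu^n}_{\gamma'^n}(f^n)=\Omega\bigl(n\,\mathsf{sIC}^\mu_{\gamma'}(f)\bigr).\]

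\textbf{Closing the chain.} It remains to show that any randomized tree for $f^n$ with success probability $\gamma^n$ on all inputs, and support sizes at most $S$, has $\mathsf{sIC}^{\mu^n}_{\gamma^n}$ at most $\log S$. The reason is that, conditioned on success, the information revealed is the entropy of the leaf given $T$, which is at most $\log S$ for every $T$ in the support. Conditioning on success may change the distribution of $T$, but the bound $\log S$ holds uniformly in $T$, so the conditioning does no harm. This holds provided $\mathsf{sIC}$ is defined as the information of the leaf conditioned on $T$ and on success. With that, the chain gives $\log\mathsf{RSize}_{\gamma^n}(f^n)\ge \Omega(n\log \mathsf{RSize}_\gamma(f))$.

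\textbf{Main obstacle.} The hard step is the bookkeeping of success parameters. \cref{thm:dpt} is applied at success level $\gamma'^n$ with $\gamma'<\gamma=\tfrac23$, whereas our trees achieve only $\gamma^n$. Since $\gamma^n<\gamma'^n$ fails when $\gamma'<\gamma$, we need the reverse arrangement: apply \cref{thm:dpt} with exactly $\gamma=\tfrac23$, and absorb the losses from Links 1 and 2 at the single-copy level. Concretely, use compression at level $\tfrac23$ to bound $\log\mathsf{RSize}^\mu_{2/3-\epsilon}(f)$, then amplify from $\tfrac23-\epsilon$ back to $\tfrac23$ by sequential repetition, which costs a constant factor in $\log$ size. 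The choice $\epsilon=1/12$ makes this work.
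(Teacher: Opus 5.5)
Your upper bound is correct. Most of your lower-bound chain also matches the paper's route: minimax, \cref{thm:compression}, $\IC\le 2\,\mathsf{sIC}$, \cref{thm:dpt}, and error reduction from $\tfrac23-\epsilon$ back to $\tfrac23$ by a majority of $O(1)$ sequential copies. However, the step you call ``closing the chain'' is false as stated.

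The paper's $\mathsf{sinfo}$ averages the surprisal $-\log\Pr_\mu[\ell]$ against the success-reweighted leaf distribution $q_\ell\propto p_\ell\cdot\mathsf{success}^\mu_f(\ell)$. The surprisal is computed from the \emph{unconditioned} leaf probabilities $p_\ell$. For a single tree this quantity is the cross-entropy $H(q)+D_{\mathrm{KL}}(q\,\|\,p)$, not the entropy $H(q)$, and it can exceed $\log$ of the size. Here is an example. Take $f=x_1\vee x_2$ with $x_1\sim\mathrm{Bernoulli}(0.3)$ and $x_2$ uniform. The tree that queries only $x_1$ has two leaves: one with $p=0.3$ and success $1$, one with $p=0.7$ and success $\tfrac12$. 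Its $\mathsf{sinfo}$ is about $1.08>1$.

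You cannot fall back on the definition under which your entropy argument works. \cref{thm:dpt} is proved for the paper's per-leaf definition, which is score times a fixed, additive, monotone leaf cost. What is true, via Jensen as in \cref{lem:sicub}, is $\mathsf{sinfo}^{\mu^n}_{f^n}(R)\le\log S+\log\bigl(1/\mathsf{success}^{\mu^n}_{f^n}(R)\bigr)\le\log S+n\log\tfrac32$. So your chain only gives $\log\mathsf{RSize}_{\gamma^n}(f^n)\ge c\,n\log\mathsf{RSize}_\gamma(f)-n\log\tfrac32$ for some small absolute constant $c$. That bound is vacuous whenever $\log\mathsf{RSize}_\gamma(f)\le\log(3/2)/c$, and because $c$ is small this regime is genuine.

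The missing ingredient is an independent $\Omega(n)$ lower bound for non-constant $f$, which the paper supplies in \cref{cor:fulltreedirectproduct}:
\begin{itemize}
\item For each $z\in\{0,1\}^n$, fix an input $x_z$ with $f^n(x_z)=z$, and let $\mu^*$ be uniform over these inputs.
\item Each leaf of a deterministic tree outputs a single string, so it is correct on at most one $x_z$.
\item Hence $\gamma^n\le\E_{D\sim R}[\mathsf{size}(D)]/2^n$, which gives $\log\mathsf{RSize}_{\gamma^n}(f^n)\ge n\log(2\gamma)=n\log\tfrac43$.
\end{itemize}
Take the maximum of this bound and $c\,n\log\mathsf{RSize}_\gamma(f)-O(n)$. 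This gives $\Omega(n\log\mathsf{RSize}_\gamma(f))$ in both regimes, and constant $f$ is trivial.

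A smaller point: $\IC^\mu_\gamma(f)\le 2\,\mathsf{sIC}^\mu_\gamma(f)$ (\cref{prop:srcvsrc}) needs no conversion of algorithms or change of $\gamma'$. It holds for the same $R$, because every leaf has success at least $\tfrac12$ and $\mathsf{success}^\mu_f(R)\le 1$.
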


See \cref{cor:fulltreedirectproduct} for the precise statement of the most general direct product theorem with respect to size complexity that we obtain from \cref{thm:dpt}.

\subsubsection*{Information and Query Complexity, Revisited}

The results above show that the information complexity of a Boolean function is closely connected to the log of its randomized size complexity, not its randomized query complexity measure.
But could the randomized query complexity of a function $f$ be characterized by the information complexity of a closely-related function?
We show that, indeed, the use of a simple parity gadget suffices to relate information complexity back to randomized query complexity directly.

\begin{restatable}{theorem}{ictoquery}
    For any function $f: \{0,1\}^k \rightarrow \{0,1\}$,
    \begin{align*}
        \max_{\mu}\IC^\mu(f \circ \oplus_2) = \Theta(\mathsf{R}(f)).
    \end{align*}
\end{restatable}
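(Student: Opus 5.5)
The plan is to prove the two inequalities separately, assuming $\IC^\mu(f)$ means $\IC^\mu_{2/3}(f)$ and $\mathsf{R}(f)$ is worst-case bounded-error randomized query complexity at success $2/3$.

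\emph{Upper bound.} Fix an optimal worst-case randomized algorithm for $f$ with $\mathsf{R}(f)$ queries. Simulate it on $f\circ\oplus_2$ by querying both bits of a block whenever the algorithm asks for the corresponding bit of $f$'s input. This gives a randomized tree of depth at most $2\mathsf{R}(f)$ that succeeds with probability $2/3$ on every input, hence on every $\mu$. For each fixed tree $T$, the leaf $L$ is determined by the sequence of at most $2\mathsf{R}(f)$ revealed bits, so $H(L\mid T)\le 2\mathsf{R}(f)$. Therefore
\[
\rI(X;L\mid T)\le H(L\mid T)\le 2\mathsf{R}(f)
\]
for every $\mu$.

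\emph{Lower bound.} The idea is that a single query to one bit of a block conveys no information about that block's parity until its partner is also queried. By Yao's minimax principle, there is a hard distribution $\nu$ on $\{0,1\}^k$ with $\mathsf{R}^\nu_{2/3}(f)=\Omega(\mathsf{R}(f))$. I would use the distributional worst-case depth version here, and possibly switch to an expected-depth version at a constant loss in success probability by truncating with Markov's inequality. Define $\mu$ on $\{0,1\}^{2k}$ by drawing $y\sim\nu$ and then, independently for each $i$, drawing $(a_i,b_i)$ uniformly among the two pairs with $a_i\oplus b_i=y_i$.

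Now take any randomized tree $R$ achieving success $2/3$ on $\mu$, and expand $\rI(X;L\mid T)$ by the chain rule over the queries along the path. The information contributed by a query equals the entropy of the revealed bit given the history, which is $H(x_j\mid\text{history})$. If the partner of $x_j$ has not been queried yet, then $x_j$ is a uniform bit independent of everything seen so far, because the randomness of $a_i$ masks $y_i$ and the other blocks. So every first query into a block contributes exactly one bit. Hence $\rI(X;L\mid T)$ is at least the expected number of distinct blocks touched. The remaining step is to show this expected number is $\Omega(\mathsf{R}(f))$.

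For that, convert $R$ into an algorithm for $f$ on $\nu$. Given $y$, simulate $T$. When $T$ first touches a block, answer with a fresh random bit. When it touches the partner, query $y_i$ and answer consistently. This simulation reproduces $\mu$ exactly and makes at most (number of touched blocks) queries to $y$. It gives a randomized algorithm for $f$ under $\nu$ with success $2/3$ whose expected query count is at most $\E[\#\text{blocks touched}]\le \rI(X;L\mid T)$.

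The main obstacle is the mismatch between expected-query and worst-case complexity. Yao's principle for worst-case $\mathsf{R}(f)$ gives a $\nu$ that is hard against worst-case-depth algorithms, while the information argument bounds expected depth. I would handle this by using the known equivalence, up to constant factors and a constant change in error, between worst-case $\mathsf{R}(f)$ and $\max_\nu$ of distributional expected-query complexity. Concretely, first amplify success for the simulated algorithm, or choose $\nu$ hard at a slightly higher success level. Then apply Markov's inequality to truncate at $O(\E[\text{queries}])$, which costs only a constant in success probability and yields $\rI(X;L\mid T)=\Omega(\mathsf{R}(f))$.
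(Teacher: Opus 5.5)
Your proposal is correct and is essentially the paper's proof. The upper bound is identical. For the lower bound you use the same simulation (answer the first touched bit of a block with a fresh random bit, query the real bit on the second touch) to lower-bound the information cost by the expected query count of an algorithm for $f$. You then invoke the same equivalence $\max_\nu \overline{\mathsf{R}}^\nu(f)=\Theta(\mathsf{R}(f))$ that the paper uses as \cref{lem:minmaxquery}. Your chain-rule observation that each first touch of a block costs exactly one bit is a leaner substitute for the paper's exact identity with $\overline{\mathsf{depth}}^{\mu}+\info^{\mu}$, whose extra $\info^{\mu}$ term (and the Kraft step) this theorem does not need.

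One correction to your sketch of that equivalence. Distributional success cannot be amplified by repetition, so your first option fails. In the second option, the hard $\nu$ must be chosen at a slightly \emph{lower} success level, say $7/12$, so that truncating at $12$ times the expected depth still contradicts hardness; worst-case amplification then gives $\mathsf{R}_{7/12}(f)=\Theta(\mathsf{R}(f))$. Citing the equivalence directly, as you also propose, avoids the issue.
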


Finally, consider the AND-OR decision tree model where each node of the tree can query the AND or the OR of any subset of the bits of the input. Chattopadhyay, Dahiya, Mande, Radhakrishnan, and Sanyal~\cite{CDMRS23} established a tight connection between the randomized query complexity of functions in this model and the randomized size complexity of functions.
Using this connection, we obtain an alternative characterization of the information complexity of functions. For example, we obtain the following compression result.

\begin{corollary}[Compression for AND-OR query complexity]
    For any function $f: \{0,1\}^k \rightarrow \{0,1\}$, distribution $\mu$ on $\{0,1\}^k$, success probability $\gamma$ and $0 < \epsilon \leq \gamma - \frac{1}{2}$,
    \begin{align*}
\mathsf{R^{\wedge \vee}}^{\,\mu}_{\gamma - \epsilon}(f) \leq 4 \cdot \left(\frac{2}{\epsilon}\right)^2\IC_{\gamma}^{\mu}(f)
    \end{align*}
\end{corollary}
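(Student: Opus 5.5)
The plan is to chain \cref{thm:compression} with the characterization of Chattopadhyay, Dahiya, Mande, Radhakrishnan, and Sanyal~\cite{CDMRS23}. That characterization relates AND-OR decision tree depth to ordinary decision tree size. The form I will use is that any deterministic decision tree of size $s$ can be simulated exactly by an AND-OR decision tree of depth at most $4\log s$. Equivalently, the AND-OR depth of a function is at most a constant times the log of its size.

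\textbf{Step 1: a single-tree simulation.} First I fix the exact constant in the CDMRS simulation and check that it applies to an arbitrary tree rather than only to the optimal tree for $f$. The simulation acts on the tree itself: it traverses the tree, and each AND or OR query locates the leaf reached along a chosen path, for instance via a centroid or heavy-path decomposition. So it turns any deterministic tree $T$ of size $s$ into an AND-OR tree $T'$ with $T'(x)=T(x)$ for all $x$ and depth at most $4\log s$. Applying this to each tree in the support of a randomized tree yields a randomized AND-OR tree. This new tree has the same output distribution on every input, so its success probability under $\mu$ is unchanged. Its worst-case depth is at most $4\log$ of the largest support size.

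\textbf{Step 2: combine with compression.} Let $R$ be a randomized decision tree achieving $\mathsf{RSize}^{\mu}_{\gamma-\epsilon}(f)$. By \cref{thm:compression}, every tree in its support has size at most $2^{(2/\epsilon)^2\IC_\gamma^\mu(f)}$. Applying Step 1 gives a randomized AND-OR tree with success probability at least $\gamma-\epsilon$ over $\mu$. Its depth is at most
\[
4\log \mathsf{RSize}^\mu_{\gamma-\epsilon}(f)\le 4\left(\tfrac{2}{\epsilon}\right)^2\IC^\mu_\gamma(f).
\]
Since $\mathsf{R^{\wedge\vee}}^{\,\mu}_{\gamma-\epsilon}(f)$ is the minimum worst-case depth over such randomized AND-OR trees, the claimed bound follows.

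\textbf{Main obstacle.} The hard part is Step 1. CDMRS may state their result in a non-distributional or bounded-error form, or with a different constant. I need to confirm that their upper-bound direction is a deterministic, tree-by-tree exact simulation with constant $4$, using logarithms base $2$ and size counted as leaves. If their construction only gives $O(\log s)$, the corollary still holds with the constant $4$ replaced by that constant. If instead the simulation only preserves correctness up to additional error, I would absorb that error by adjusting $\epsilon$. The distributional setting itself causes no difficulty, because the simulation preserves outputs pointwise.
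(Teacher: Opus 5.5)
Your proposal is correct and is essentially the paper's proof. The paper cites \cite{CDMRS23} for $\mathsf{R^{\wedge \vee}}^{\,\mu}_{\gamma-\epsilon}(f) \leq 4\log\left(\mathsf{RSize}^{\mu}_{\gamma-\epsilon}(f)\right)$ and combines it with \cref{thm:compression}; your Step 1, which applies the exact size-to-AND-OR-depth simulation tree by tree, just makes explicit why that cited bound holds in the distributional randomized setting.
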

\begin{proof}
    \cite{CDMRS23} showed that $\mathsf{R^{\wedge \vee}}^{\,\mu}_{\gamma - \epsilon}(f) \leq 4 \cdot \log\left(\mathsf{RSize}^{\mu}_{\gamma - \epsilon}(f)\right)$. Combined with \cref{thm:compression}, we get the claim.
\end{proof}
As in the case of \cref{thm:compression}, this result is essentially tight and the randomized AND-OR complexity of $f$ is essentially asymptotically equivalent in the bounded-error regime to its information complexity, up to a multiplicative $\log k$ term.

\subsection{Proof Overviews}

A key observation underlying all of our proofs is that there is a way to define the information complexity of functions on a per-leaf basis. As we show in \cref{sec:icinquery}, by defining the cost of a leaf $\ell$ in a deterministic decision tree to be proportional to log of the probability that a random input drawn from $\mu$ leads to that leaf $\ell$, we obtain a natural measure of ``surprisal'' where $\IC^\mu_\gamma$ is the minimum expected surprisal of the leaf that we reach in a randomized decision tree that computes $f$ with success probability at least $\gamma$ on inputs drawn from $\mu$. 

The surprisal viewpoint of information complexity leads to particularly natural proofs of the amortization and compression theorems (\cref{thm:amortization,thm:compression}). 
An overview of these proofs can be found at the beginning of \cref{sec:amortization,sec:compression}, respectively.

The proof of the direct product theorem (\cref{thm:dpt}) is more involved. We provide a high-level overview of this proof below.

\paragraph{Direct Product Theorems.} 
The starting point of the proof of the direct product theorem (\cref{thm:dpt}) is the proof of the optimal direct product theorem for randomized query complexity of \cite{BB25}. The key ingredient in that proof is a notion of \emph{discounted score}. The two main technical components of the proof are a perfect direct product theorem (also known as a tensorization lemma) for this notion of discounted score and an equivalence lemma between discounted score and (success-conditioned) query complexity.

The notion of discounted score introduced in \cite{BB25} is the expected value of a per-input measure that assigns each input in the domain of the function a value that combines the score of a tree on this input (which we can think as 1 if the tree outputs the correct function value on this input and 0 otherwise) with the \emph{cost} of the tree on this input, measured in terms of the number of queries that the tree makes on this particular input. As presented in \cite{BB25}, the notion of discounted score and the proofs of the tensorization and equivalence lemmas present two limitations that prevent it from being applied directly to information complexity.

The first limitation is the per-input nature of the cost measure used in the definition of discounted score and in the proofs of the lemmas. With information complexity, there is no natural per-input measure of information cost. But this limitation is minor; we show that discounted score can also be defined as a per-leaf measure instead; see \cref{sec:discountedscore} for the details.

The second limitation in the proof of \cite{BB25} is more significant: the proof relies on the cost measure being the query complexity measure, a measure that is very well structured in that it increases by 1 for each step in a path from the root to a leaf of the tree. But for information complexity, the corresponding per-leaf cost measure does not satisfy this property.
We get around this second limitation in two steps. In the first step, we generalize the original proof of \cite{BB25} and show that the tensorization and equivalence lemmas hold for all discounted score notions built on generalized cost measures that satisfy three properties: normalization, monotonicity, and additivity across instances. In the second step, we show that the surprisal cost measure associated with information complexity satisfies these three properties and enables us to complete the proof of the direct product theorem.

\section{Preliminaries}
\label{sec:prelims}
We start this section by reviewing some basic notational conventions and standard inequalities we use in the proofs. In \cref{sec:entropy} we describe some basic preliminaries from information theory that are used for our definitions and proofs. In \cref{sec:depthandsize}, we define formally the various notions of randomized query and tree size complexity that we consider in our work. In \cref{sec:icinquery}, we formally define the information complexity and success-conditioned information complexity of a function, and show some basic properties.

Given a randomized decision tree $R$ computing some function $f: \{0,1\}^k \rightarrow \{0,1\}$ on an input distribution $\mu$, we define $L(\mu, R)$ to be the distribution obtained on the leaves of $R$, when a deterministic tree $D \sim R$ is sampled from the internal randomness of $R$, and when the input is sampled from $\mu$. Given an input $x$ and a deterministic tree $D$, we use $D(x)$ to denote the output of the leaf of $D$ reached on input $x$. We denote the success probability of a randomized algorithm $R$ computing a function $f$ on the input distribution $\mu$ as
\begin{align*}
    \mathsf{success}^{\mu}_f(R) = \E_{D \sim R, x \sim \mu}[D(x) = f(x)]
\end{align*}
When we are interested in the worst-case success probability of an algorithm $R$ computing a function $f$, we denote it as
\begin{align*}
    \mathsf{success}_f(R) = \min_{x \in \mathsf{dom}(f)} \E_{D \sim R}[D(x) = f(x)]
\end{align*}
We now review some basic facts we use in the paper.
\begin{fact}[Lemma 3.2 from \cite{Dru12}]\label{fact:indepinputs}
    Consider the random variable $X = X_1 \times \ldots \times X_n$ corresponding to the input distribution $\mu^n$. Conditioned on seeing query responses $q = q_1 \times \ldots \times q_n$, we have that $X|_q = X_1|_{q_1} \times \ldots \times X_n|_{q_n}$, i.e, the $n$ inputs are independent, and the $i^{th}$ input is distributed according to the random variable $X_i|_{q_i}$.
\end{fact}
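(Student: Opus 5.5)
We prove the fact by induction on the number of queries made along the path, showing that the posterior distribution of $X$ given the transcript factorizes as a product. Fix a deterministic path in the decision tree for $f^n$ along which the responses $q = q_1 \times \ldots \times q_n$ are observed. Here $q_i$ denotes the set of (index, value) pairs revealed about the $i$th block $X_i$. The event ``the transcript equals $q$'' is exactly the event
\[
E_q = \bigcap_{i=1}^n E^{(i)}_{q_i},
\]
where $E^{(i)}_{q_i}$ is the event that $X_i$ agrees with $q_i$ on the queried coordinates of block $i$.

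The key point is that in a deterministic tree, once we condition on the entire path, the set of queried coordinates is determined by the path itself. It therefore does not depend on any additional information about $x$. Concretely, $x$ reaches the node at the end of the path if and only if $x$ is consistent with all answers on the path. So the event depends on $x$ only through the conjunction of coordinatewise constraints, and each constraint involves a single block.

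With this observation the computation is short. For any $x = (x_1,\ldots,x_n)$ we have
\[
\Pr[X=x \mid E_q] = \frac{\mu^n(x)\,\mathbf{1}[x\in E_q]}{\Pr[E_q]} = \prod_{i=1}^n \frac{\mu(x_i)\,\mathbf{1}[x_i \in E^{(i)}_{q_i}]}{\Pr_{X_i\sim\mu}[E^{(i)}_{q_i}]}.
\]
The second equality holds for three reasons: $\mu^n$ is a product measure, the indicator of $E_q$ factorizes over blocks, and $\Pr[E_q] = \prod_i \Pr[E^{(i)}_{q_i}]$ by independence of the blocks. The $i$th factor is exactly the law of $X_i|_{q_i}$, which gives the claim.

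If the tree is randomized, we condition on the sampled deterministic tree $D$ first. $D$ is independent of $X$, so the argument applies verbatim for each fixed $D$.

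The only subtle step is justifying that conditioning on ``the algorithm followed this path'' equals conditioning on the coordinate-consistency event. Adaptivity might seem to introduce extra dependence between blocks: which coordinate of block $j$ is queried next can depend on answers from block $i$. However, the conditioning is on the whole path, and given the path, the sequence of queries is fixed. The adaptive choice therefore contributes no additional constraint beyond consistency with the revealed values. I would state this carefully via induction on path length: the set of inputs reaching a node is a subcube defined by the answers on its path.
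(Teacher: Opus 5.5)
The paper gives no proof of this statement; it imports it as a black box from Lemma 3.2 of \cite{Dru12}, so there is no in-paper argument to compare against. Your proof is correct and self-contained. The key step is that the set of inputs reaching any node of a deterministic tree is the subcube of inputs consistent with the answers on its path. This holds for leaves and internal nodes alike, by induction on depth. The conditioning event is therefore a product event $\bigcap_i E^{(i)}_{q_i}$, and conditioning a product measure on a product event gives the product of the conditioned marginals.

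Your explicit version makes two points visible that the citation leaves implicit, and both matter for how the paper uses the fact. First, you never use that the blocks are identically distributed. The argument therefore covers $\mu_1\otimes\mu_2$, which is the setting in which \cref{lem:qicdslb} invokes it, even though the fact is stated for $\mu^n$. Second, it covers conditioning on a path prefix $p_{i-1}$ ending at an internal node, which is again how \cref{lem:qicdslb} uses it.

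For randomized trees you do not even need to reveal $D$. For each fixed $D$, the event that $D$ produces transcript $q$ is the intersection of an event depending only on $D$ with $E_q$. Since $D$ is independent of $X$, the posterior of $X$ given $q$ alone is a mixture of identical product distributions, hence the same product.
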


\begin{fact}[Chernoff Bound]\label{fact:chernoff}
    Let $\mathbf{X_1}, \cdots, \mathbf{X_n}$ be $n$ independent random variables between $0$ and $1$. If $\mathbf{X} = \sum_{i=1}^n \mathbf{X_i}$ and $\mu = \E[\mathbf{X}]$, then for any $\delta > 0$,
    \begin{align*}
        \Pr[\mathbf{X} \geq (1+\delta)\mu] &\leq e^{-\frac{\delta^2}{2+\delta}\cdot\mu} \\
        \Pr[\mathbf{X} \leq (1-\delta)\mu] &\leq e^{-\frac{\delta^2}{2}\cdot\mu}
    \end{align*}
\end{fact}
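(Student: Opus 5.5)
We would prove \cref{fact:chernoff} by the standard moment generating function argument, applying Markov's inequality to an exponential of $\mathbf{X}$ and optimizing the free parameter. For the upper tail, fix $t > 0$. Markov's inequality gives
\[
\Pr[\mathbf{X} \geq (1+\delta)\mu] \le e^{-t(1+\delta)\mu}\,\E[e^{t\mathbf{X}}].
\]
By independence, $\E[e^{t\mathbf{X}}] = \prod_{i=1}^n \E[e^{t\mathbf{X_i}}]$. Let $p_i = \E[\mathbf{X_i}]$. Since $\mathbf{X_i} \in [0,1]$ and $x \mapsto e^{tx}$ is convex, we have $e^{t\mathbf{X_i}} \le 1 - \mathbf{X_i} + \mathbf{X_i} e^{t}$ pointwise. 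Hence
\[
\E[e^{t\mathbf{X_i}}] \le 1 + p_i(e^t - 1) \le \exp\bigl(p_i(e^t-1)\bigr).
\]
Multiplying over $i$ yields $\E[e^{t\mathbf{X}}] \le \exp(\mu(e^t - 1))$. Choosing $t = \ln(1+\delta)$ then gives the classical bound
\[
\Pr[\mathbf{X} \geq (1+\delta)\mu] \le \left(\frac{e^{\delta}}{(1+\delta)^{1+\delta}}\right)^{\mu}.
\]

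The lower tail is symmetric: apply Markov to $e^{-t\mathbf{X}}$ with $t>0$. The same convexity step gives $\E[e^{-t\mathbf{X}}] \le \exp(\mu(e^{-t}-1))$. We may assume $0<\delta<1$, since for $\delta \ge 1$ the event $\mathbf{X} \le (1-\delta)\mu$ is either trivial or impossible and the bound is immediate. Taking $t = -\ln(1-\delta)$ yields
\[
\Pr[\mathbf{X} \le (1-\delta)\mu] \le \left(\frac{e^{-\delta}}{(1-\delta)^{1-\delta}}\right)^{\mu}.
\]

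The remaining work, and the only real obstacle, is converting these exponents into the stated simplified forms.
\begin{itemize}
\item For the upper tail we need $(1+\delta)\ln(1+\delta) - \delta \ge \frac{\delta^2}{2+\delta}$ for all $\delta > 0$. This follows from $\ln(1+\delta) \ge \frac{2\delta}{2+\delta}$, which we would verify by noting that both sides vanish at $\delta=0$ and comparing derivatives: $\frac{1}{1+\delta} \ge \frac{4}{(2+\delta)^2}$, i.e.\ $(2+\delta)^2 \ge 4(1+\delta)$, which holds since $(2+\delta)^2 - 4(1+\delta) = \delta^2$. Substituting gives $(1+\delta)\ln(1+\delta) - \delta \ge \frac{2\delta(1+\delta)}{2+\delta} - \delta = \frac{\delta^2}{2+\delta}$.
\item For the lower tail we need $(1-\delta)\ln(1-\delta) + \delta \ge \frac{\delta^2}{2}$ on $[0,1)$. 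Let $g(\delta) = (1-\delta)\ln(1-\delta) + \delta - \frac{\delta^2}{2}$. Then $g(0)=0$, $g'(\delta) = -\ln(1-\delta) - \delta \ge 0$, so $g$ is nondecreasing and hence nonnegative.
\end{itemize}

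Together these two inequalities complete the proof.
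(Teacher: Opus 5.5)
The paper gives no proof of this statement. It is quoted as a standard textbook fact and used only for error reduction in \cref{cor:fulltreedirectproduct}, so there is no paper argument to compare against. Your moment-generating-function proof is the standard one, and it is correct. All of the following check out:
\begin{itemize}
\item the convexity bound $e^{tx}\le 1-x+xe^{t}$ on $[0,1]$;
\item the product over the independent $\mathbf{X_i}$;
\item the choices $t=\ln(1+\delta)$ and $t=-\ln(1-\delta)$;
\item both calculus inequalities, namely $\ln(1+\delta)\ge \frac{2\delta}{2+\delta}$ via $(2+\delta)^2-4(1+\delta)=\delta^2$, and $g'(\delta)=-\ln(1-\delta)-\delta\ge 0$.
\end{itemize}

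There is one small slip in reducing the lower tail to $0<\delta<1$: at $\delta=1$ your justification is wrong. There the event is $\{\mathbf{X}\le 0\}=\{\mathbf{X}=0\}$, which for $\mu>0$ is neither trivial nor impossible. For example, Bernoulli summands give it probability $\prod_i (1-p_i)$.

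This is easy to repair with your own MGF bound. For every $t>0$ you have $\Pr[\mathbf{X}\le 0]\le \E[e^{-t\mathbf{X}}]\le \exp\bigl(\mu(e^{-t}-1)\bigr)$. Letting $t\to\infty$ gives $e^{-\mu}\le e^{-\mu/2}$. Alternatively, use $\Pr[\mathbf{X}\le 0]\le \Pr[\mathbf{X}\le (1-\delta')\mu]$ for $\delta'<1$ and let $\delta'\uparrow 1$.

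For $\delta>1$ your remark is fine: the event is impossible when $\mu>0$, and when $\mu=0$ the right-hand side equals $1$.
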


\begin{fact}[Kraft's Inequality]\label{fact:kraft}
    There exists a binary tree such that it has $l_k$ leaves at depth $k$ for $k \in \mathbb{N}$ if and only if $\sum_{k \in \mathbb{N}} \frac{l_k}{2^k} \leq 1$.
\end{fact}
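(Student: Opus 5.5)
We will prove the two directions separately, identifying each node $v$ of the infinite complete binary tree with the binary string labelling its root-to-$v$ path, and with the dyadic interval $I_v = [0.v,\ 0.v + 2^{-|v|}) \subseteq [0,1)$. Two facts about this correspondence carry the whole argument. First, $u$ is an ancestor of $v$ exactly when $I_v \subseteq I_u$. Second, two nodes neither of which is an ancestor of the other have disjoint intervals. Throughout, ``binary tree'' means a rooted tree in which every internal node has at most two children, labelled $0$ and $1$. Such a tree is then a prefix-closed set of strings, and its leaves are exactly its maximal strings.

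\emph{Necessity.} Given a binary tree with $l_k$ leaves at depth $k$, no leaf is an ancestor of another leaf. By the second fact, the intervals $I_\ell$ over all leaves $\ell$ are pairwise disjoint subsets of $[0,1)$. Hence
\[
\sum_k l_k 2^{-k} = \sum_\ell |I_\ell| \le 1 .
\]
This works verbatim for countably many leaves, since we only sum lengths of disjoint intervals. Equivalently, one can push a unit of mass down from the root, halving it at each branching: the mass arriving at the leaves is at most $1$.

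\emph{Sufficiency.} List the required leaf depths in nondecreasing order $d_1 \le d_2 \le \cdots$, with depth $k$ appearing $l_k$ times. Set $S_i = \sum_{j<i} 2^{-d_j}$, so that $S_i < 1$ whenever a $d_i$ exists, by the hypothesis. Because every $d_j$ with $j<i$ satisfies $d_j \le d_i$, each term $2^{-d_j}$ is an integer multiple of $2^{-d_i}$. Therefore $S_i$ is a multiple of $2^{-d_i}$, and $[S_i, S_i + 2^{-d_i})$ is a dyadic interval $I_{v_i}$ for a unique string $v_i$ of length $d_i$. It lies inside $[0,1)$ because $S_i + 2^{-d_i} = S_{i+1} \le 1$.

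The intervals $[S_i, S_{i+1})$ are pairwise disjoint. So by the first fact no $v_i$ is an ancestor of any other $v_j$; in particular, two equal strings would give equal intervals, so the $v_i$ are also distinct. Now let $T$ be the set of all prefixes of all $v_i$. This set is prefix-closed, so it is a binary tree. Its leaves are exactly the $v_i$: each $v_i$ has no extension in $T$, since an extension would be a prefix of some $v_j$ and make $v_i$ an ancestor of $v_j$. Every other node of $T$ is a proper prefix of some $v_j$, and so is internal. Consequently $T$ has exactly $l_k$ leaves at depth $k$.

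The only delicate point is the divisibility claim that makes each $[S_i, S_i+2^{-d_i})$ a genuine dyadic interval. This is exactly why we sort the depths in nondecreasing order before assigning leaves. When $\sum_k l_k = \infty$, the same greedy construction produces an infinite tree, and the argument goes through unchanged.
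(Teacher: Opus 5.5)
Your proof is correct. The paper gives no proof of this statement: it cites Kraft's inequality as a known fact. Your argument is the standard one, using disjoint dyadic intervals for the ``only if'' direction, and for the ``if'' direction assigning the sorted depths to the cumulative sums $S_i$, with divisibility making each $[S_i, S_i+2^{-d_i})$ dyadic.

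The paper only uses the ``only if'' direction, in the proof of \cref{prop:mvsrdoublebar}, applied to finite decision trees. Your necessity argument covers that for arbitrary binary trees.

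Your explicit convention that internal nodes may have a single child, or that trees may be infinite, is needed for the ``if'' direction. Every finite tree whose internal nodes all have two children, such as a decision tree, satisfies $\sum_k l_k 2^{-k} = 1$ exactly. So, for example, $l_1 = 1$ with $l_k = 0$ otherwise satisfies the inequality but cannot be realized by such a tree.

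One degenerate case slips through your construction: when $l_k = 0$ for all $k$, your $T$ is the empty set rather than a rooted tree. An infinite path, which has no leaves, handles this case.
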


\begin{fact}[Jensen's Inequality]\label{fact:jensen}
    For any finite set $X = \{x_1, \ldots, x_n\}$ where $x_i \in \mathbb{R}$, a distribution $\mu$ on $X$ and a convex function $f: X \rightarrow \mathbb{R}$, we have that
    \begin{align*}
        f\left(\E_{\mu}[X]\right) \leq \E_{\mu}[f(X)]
    \end{align*}
\end{fact}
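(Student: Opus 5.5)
We prove the finite form of Jensen's inequality by induction on the size of the support of $\mu$, using only the two-point definition of convexity. First a remark on interpretation. For the statement to make sense, $f$ must be defined at $\E_\mu[X]$, which is generally not in $X$. So we read the hypothesis as saying that $f$ is convex on an interval $I \subseteq \mathbb{R}$ containing $x_1,\dots,x_n$. By convexity of intervals, $I$ then also contains every convex combination of these points, and in particular the mean. Write $p_i = \mu(x_i)$. We may discard points with $p_i = 0$, since they do not affect either side of the inequality.

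\textbf{Base cases.} If only one point has positive mass, both sides equal $f(x_1)$. If two points have positive mass, the inequality $f(p_1x_1+p_2x_2)\le p_1f(x_1)+p_2f(x_2)$ is exactly the definition of convexity.

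\textbf{Inductive step.} Suppose $m \ge 2$ points have positive mass and $p_m<1$. Set $q_i = p_i/(1-p_m)$ for $i<m$, so that $q$ is a probability distribution on $m-1$ points. Let $y = \sum_{i<m} q_i x_i$, which lies in $I$. Then
\[
\E_\mu[X] = (1-p_m)\,y + p_m x_m .
\]
Applying two-point convexity gives
\[
f(\E_\mu[X]) \le (1-p_m) f(y) + p_m f(x_m).
\]
The induction hypothesis applied to the distribution $q$ gives $f(y) \le \sum_{i<m} q_i f(x_i)$. Multiplying this by $1-p_m \ge 0$ and substituting above yields
\[
f(\E_\mu[X]) \le \sum_{i<m} p_i f(x_i) + p_m f(x_m) = \E_\mu[f(X)].
\]

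An alternative route takes a supporting line of $f$ at the mean $m=\E_\mu[X]$, that is, a slope $c$ with $f(x)\ge f(m)+c(x-m)$ for all $x \in I$, and then takes expectations. This requires justifying that a subgradient exists, which can fail when $m$ is an endpoint of $I$. The induction avoids this issue entirely.

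There is no real obstacle in this argument. The only points needing care are the interpretation of the domain of $f$, and excluding the degenerate case $p_m = 1$ so that the renormalisation is well defined; that case is already covered by the single-point base case.
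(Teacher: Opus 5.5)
Your proof is correct. The paper does not prove this statement; it quotes it as a standard fact, so there is no competing argument to compare against. Your induction on the size of the support, peeling off one point, renormalising, and using only two-point convexity, is the standard textbook argument and is complete. Note that once zero-mass points are discarded and at least two points remain, $p_m<1$ holds automatically.

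Your interpretive remark is the most useful part. As literally stated, $f$ is defined only on the finite set $X$, where ``convex'' has no content and $f(\E_\mu[X])$ need not even be defined. Reading the hypothesis as convexity on an interval containing $x_1,\ldots,x_n$ is the right repair, and it covers every use of the fact in the paper:
\begin{itemize}
\item the concave $\log$ in \cref{lem:treesizebound} and \cref{lem:sicub}, obtained by applying your statement to $-\log$ on $(0,\infty)$;
\item $x\mapsto e^{-\alpha x}$ with renormalised score-weighted leaf probabilities in \cref{lem:equivalence}.
\end{itemize}

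Two small nits. First, the obstacle you cite for the supporting-line route is not real: if the mean is an endpoint of $I$, then every point of positive mass must equal that endpoint, so $\mu$ is a point mass and the inequality is trivial. Second, you reuse the letter $m$ both for the support size and for the mean.
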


\begin{fact}\label{fact:convexityofe}
    For a random variable $y \in [0, d]$ and a distribution $\mu$ on $y$,
    \begin{align*}
        \E_{\mu}[e^{-y}] \leq 1- \frac{1-e^{-d}}{d} \E_{\mu}[y] \leq e^{-\frac{1-e^{-d}}{d} \E_{\mu}[y]}
    \end{align*}
\end{fact}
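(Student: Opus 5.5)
The first inequality follows from convexity of $y \mapsto e^{-y}$ on $[0,d]$: bound the function pointwise by the chord joining its endpoint values, then take expectations. The second inequality is the elementary bound $1 - t \le e^{-t}$. No earlier result is needed beyond convexity (the same reasoning behind \cref{fact:jensen}). Throughout I assume $d > 0$, since otherwise $\frac{1-e^{-d}}{d}$ is undefined.

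\textbf{Step 1 (chord bound).} Let $g(y) = e^{-y}$, which is convex because $g''(y) = e^{-y} > 0$. For $y \in [0,d]$ write
\[
y = \left(1 - \tfrac{y}{d}\right)\cdot 0 + \tfrac{y}{d}\cdot d,
\]
a convex combination with weight $\lambda = y/d \in [0,1]$. Convexity then gives
\[
e^{-y} \le \left(1 - \tfrac{y}{d}\right) e^{0} + \tfrac{y}{d}\, e^{-d} = 1 - \frac{1 - e^{-d}}{d}\, y .
\]
This holds pointwise for every $y$ in the support of $\mu$, since $\mu$ is supported in $[0,d]$.

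\textbf{Step 2 (take expectations).} The right-hand side of Step 1 is affine in $y$. Taking $\E_\mu$ of both sides and using linearity of expectation gives
\[
\E_\mu[e^{-y}] \le 1 - \frac{1-e^{-d}}{d}\,\E_\mu[y],
\]
which is the first claimed inequality.

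\textbf{Step 3 (exponential bound).} Set $t = \frac{1-e^{-d}}{d}\E_\mu[y]$. Because $e^{-t}$ is convex, it lies above its tangent line at $0$, so $1 - t \le e^{-t}$ for every real $t$. This gives the second inequality. The inequality holds for all real $t$, so we do not need the sign of $t$, although in fact $t \ge 0$.

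The only point needing care is Step 1: the bound must use the secant line over the whole interval $[0,d]$, not a tangent line, because tangent lines bound a convex function from below rather than above. Boundedness of $y$ in $[0,d]$ is exactly what makes this secant bound valid for every point in the support. Everything else is routine.
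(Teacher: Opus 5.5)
Your proposal is correct and follows essentially the same route as the paper: the chord (secant) bound from convexity via $y = (1-\tfrac{y}{d})\cdot 0 + \tfrac{y}{d}\cdot d$, linearity of expectation, and then $1-t \le e^{-t}$. The only cosmetic difference is in the last step, where you use the tangent line at $0$ for all real $t$ while the paper checks $e^{-cx} \ge 1 - cx$ for $x \ge 0$ by differentiating.
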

\begin{proof}
    Using convexity of $e^{-y}$, we get that for every $y \in [0, d]$, setting $y = (1-\frac{y}{d}) \cdot 0 + \frac{y}{d} \cdot d$,
    \begin{align*}
        e^{-y} \leq (1-\frac{y}{d}) \cdot e^{-0} + \frac{y}{d} \cdot e^{-d} = 1- \frac{1-e^{-d}}{d} \cdot y
    \end{align*}
    Therefore, the first inequality follows from linearity of expectation. For the second inequality, we will show that for any constant $c > 0$, $e^{-cx} \geq 1-cx$ for $x \geq 0$. To show this, consider the function $f(x) = e^{-cx} - 1 + cx$. Then $f(0) = 0$ and $f'(x) = c(1-e^{-cx}) \geq 0$ for all $x \geq 0$. Therefore, $f(x) \geq 0$ for all $x \geq 0$.
\end{proof}

\subsection{Shannon Entropy and Mutual Information}\label{sec:entropy}
In this section, we review the basics of information theory. The Shannon entropy of a random variable $X$ measures the amount of uncertainty (in bits) in $X$. For any random variable $X$ with support on $\{\sigma_1, \ldots, \sigma_n\}$ such that $\Pr(\sigma_i) = p_i$, the Shannon entropy of $X$ is defined as
\begin{align*}
    H(X) = \sum_{i=1}^n p_i \log\left(\frac{1}{p_i}\right)
\end{align*}

\noindent The conditional entropy of a random variable $X$ given some information $Y$ about it, measures the expected amount of uncertainty left in $X$ after knowing $Y$. For two random variables $X, Y$, the conditional entropy of $X$ given $Y$ is defined as
    \begin{align*}
        H(X|Y) = H(XY) - H(Y) = \E_{y \sim Y}[H(X|Y=y)]
    \end{align*}

\noindent The mutual information of two random variables $X, Y$ measures the amount of information learnt about $X$ on learning $Y$ (or vice-versa). For two random variables $X, Y$, the mutual information between $X$ and $Y$ is defined as
    \begin{align*}
        I(X;Y) = H(X) + H(Y) - H(XY) = H(X) - H(X|Y) = H(Y) - H(Y|X)
    \end{align*}

\noindent The conditional mutual information of two random variables $X, Y$ conditioned on $Z$, measures the expected amount of information learnt about $X$ on learning $Y$ (or vice-versa), when you already know $Z$. For any random variables $X, Y, Z$, the mutual information between $X$ and $Y$ conditioned on $Z$ is defined as
    \begin{align*}
        I(X;Y|Z) = H(X|Z) - H(X|YZ) = H(Y|Z) - H(Y|XZ) = \E_{z \sim Z}[I(X;Y|Z = z)]
    \end{align*}

\noindent The mutual information and entropy satisfy the following properties:
\begin{itemize}
    \item Monotonicity: For any random variables $X, Y, Z$,
    \begin{align*}
        0 &\leq H(X|Y) \leq H(X) &\text{(Conditioning reduces entropy)} \\
        0 &\leq I(X;Y|Z) \leq \min \{H(X|Z), H(Y|Z)\} &\text{(Information learnt is at most the uncertainty)}
    \end{align*}
    \item Additivity: For any random variables $X_1, X_2, Y_1, Y_2, Z$,
    \begin{align*}
        H(X_1X_2|Z) &= H(X_1|Z) + H(X_2|Z) &\text{($X_1, X_2$ independent conditioned on $Z$)} \\
        I(X_1X_2; Y_1Y_2|Z) &= I(X_1; Y_1|Z) + I(X_2; Y_2|Z) &\text{($(X_1, Y_1), (X_2,Y_2)$ independent conditioned on $Z$)}
    \end{align*}
    \item Chain rule: For random variables $X_1, X_2, Y, Z$,
    \begin{align*}
        H(X_1X_2|Y) &= H(X_1|Y) + H(X_2|X_1Y) \\
        I(X_1X_2;Y|Z) &= I(X_1;Y|Z) + I(X_2;Y|X_1Z)
    \end{align*}
\end{itemize}

\subsubsection{Typicality and Conditional Typicality}
Given a random variable $X$, consider a sequence $x = (x_1, \ldots, x_n)$ containing $n$ outcomes from the support of $X$. The empirical distribution of $x$, denoted $P_{x}$, is the probability distribution of seeing a symbol $\sigma_i$ in the support of $X$ when a uniformly random index of $x$ is sampled, i.e.,
\begin{align*}
    P_{x}(\sigma_i) &= \Pr_{j \sim [n]}[x_j = \sigma_i]
\end{align*}
We can use this empirical distribution to classify the set of strongly-typical sequences $x$.
\begin{definition}[Strong Typicality]
    For a random variable $X$ with distribution $P$ and $\delta > 0$, a sequence $x = (x_1, \ldots, x_n)$ is called a strongly $\delta$-typical sequence if
    \begin{align*}
        \max_{\sigma_i \in \mathsf{supp}(X)}\left|P_{x}(\sigma_i) - P(\sigma_i)\right| \leq \delta
    \end{align*}
    and $P_{x}(\sigma_i) = 0$ if $P(\sigma_i) = 0$. The set $T_{\delta}$ denotes the set of strongly $\delta$-typical sequences $x$. Then using the law of large numbers,
    \begin{align*}
        \lim_{n \rightarrow \infty} \Pr_{x \sim X^n}[x \in T_{\delta}] = 1
    \end{align*}
\end{definition}

\begin{definition}[Joint Typicality]
    For random variables $X, Y$ with joint distribution $P$ and $\delta > 0$, a sequence $(x, y) = ((x_1, y_1), \ldots, (x_n, y_n))$ is called a jointly $\delta$-typical sequence if
    \begin{align*}
        \max_{(\alpha, \beta) \in \mathsf{supp}(X,Y)}\left|P_{(x, y)}(\alpha, \beta) - P(\alpha, \beta)\right| \leq \delta
    \end{align*}
    and $P_{(x, y)}(\alpha, \beta) = 0$ if $P(\alpha, \beta) = 0$.
\end{definition}

\begin{definition}[Conditional Typicality]
    For random variables $X, Y$ with joint distribution $P$, $\delta > \delta' > 0$, given a strongly $\delta'$-typical sequence $x = (x_1, \ldots, x_n)$ of symbols from $X$, a sequence $y = (y_1, \ldots, y_n)$ of symbols from $Y$ is called a conditionally $\delta$-typical sequence (conditioned on $x$) if $(x, y)$ is a jointly $\delta$-typical sequence with respect to $P$. The set $T_{\delta}^{x}$ denotes the set of conditional $\delta$-typical sequences conditioned on $x$.
\end{definition}

\begin{lemma}[Conditional Typicality Lemma]\label{lem:condtyp}
    For random variables $X, Y$ with joint distribution $P$, given $\delta > \delta' > 0$, let $x = (x_1, \ldots, x_n)$ be a strongly $\delta'$-typical sequence with respect to the marginal distribution on $X$. Then for a constant $c > 0$ and sufficiently large $n$,
    \begin{align*}
        (1-\delta) 2^{n(1-c\delta)H(Y|X)} \leq |T^{x}_{\delta}| \leq 2^{n(1+c\delta)H(Y|X)}
    \end{align*}
    If $y_i$ for the sequence $y = (y_1, \ldots, y_n)$ is sampled from the marginal distribution on $Y$ conditioned on $x_i$, then
    \begin{align*}
        \lim_{n \rightarrow \infty} \Pr_{y}[y \in T^{x}_{\delta}] = 1
    \end{align*}
\end{lemma}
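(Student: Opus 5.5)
The plan is to reduce both bounds to counting, via the empirical joint type of $(x,y)$. Fix $x$ strongly $\delta'$-typical. For every $y \in T^x_\delta$, the pair $(x,y)$ is jointly $\delta$-typical, so each count $N(\alpha,\beta)$ satisfies $|N(\alpha,\beta) - nP(\alpha,\beta)| \le n\delta$. Consider the product distribution $P^n(y\mid x)=\prod_i P(y_i\mid x_i)$. We have $-\log P^n(y \mid x) = \sum_{(\alpha,\beta)} N(\alpha,\beta)\log\frac{1}{P(\beta\mid\alpha)}$. Support pairs with $P(\alpha,\beta)=0$ never occur, since typicality forces their count to zero. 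So $\log(1/P(\beta\mid\alpha))$ is finite and bounded by some constant $M$ on every occurring pair. Therefore
\[
\Bigl|-\tfrac1n\log P^n(y\mid x) - H(Y\mid X)\Bigr| \le \delta \sum_{(\alpha,\beta)\in\mathsf{supp}}\log\tfrac{1}{P(\beta\mid\alpha)} .
\]
The right side is $c'\delta$, with $c'$ depending only on $P$. When $H(Y\mid X)>0$, this can be written as $c\,\delta\, H(Y\mid X)$ for $c=c'/H(Y\mid X)$. When $H(Y\mid X)=0$, $Y$ is a function of $X$, $T^x_\delta$ is a single sequence, and both bounds are trivial. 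Hence every $y\in T^x_\delta$ has $2^{-n(1+c\delta)H}\le P^n(y\mid x)\le 2^{-n(1-c\delta)H}$.

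The upper bound follows at once. Summing the lower bound on $P^n(y\mid x)$ over $T^x_\delta$ gives at most total probability $1$, so $|T^x_\delta|\le 2^{n(1+c\delta)H(Y\mid X)}$.

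For the lower bound, and for the second claim, I first show $\Pr_y[y\in T^x_\delta]\to 1$ when $y_i\sim P(\cdot\mid x_i)$ independently. Group the indices by the symbol $\alpha$. There are $N(\alpha)$ positions with $x_i=\alpha$, and $|N(\alpha)/n - P(\alpha)|\le\delta'$. Within each group the $y_i$ are i.i.d. from $P(\cdot\mid\alpha)$. Applying the Chernoff bound (\cref{fact:chernoff}) to each indicator $\mathbf 1[y_i=\beta]$ in group $\alpha$ gives
\[
\bigl|N(\alpha,\beta)-N(\alpha)P(\beta\mid\alpha)\bigr|\le (\delta-\delta')n/2
\]
except with probability $e^{-\Omega(n)}$. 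For $\alpha$ with $P(\alpha)=0$ the group is empty, because $x$ is typical. Combining with $|N(\alpha)P(\beta\mid\alpha)-nP(\alpha,\beta)|\le \delta' n$ keeps each joint deviation below $\delta n$. Pairs with $P(\beta\mid\alpha)=0$ never appear. A union bound over the finitely many pairs then gives probability tending to $1$, and in particular at least $1-\delta$ for large $n$.

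The lower bound then follows from this:
\[
1-\delta\le\Pr[y\in T^x_\delta]\le |T^x_\delta|\cdot 2^{-n(1-c\delta)H(Y\mid X)} .
\]
The main obstacle is bookkeeping rather than anything conceptual. The Chernoff step needs the slack $\delta-\delta'>0$, which is why the hypothesis $\delta>\delta'$ is there. The constant $c$ has to be extracted from the additive error $c'\delta$. I would state the result for $H(Y\mid X)>0$ and treat the degenerate case separately as above.
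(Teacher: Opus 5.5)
Your proof is correct and is the standard argument that the paper defers to (Section 13.9 of \cite{Wil13}, with the $\delta>\delta'$ form of the definition as in El Gamal--Kim): concentration on each sub-sequence $\{i : x_i=\alpha\}$ gives the unit-probability claim, the joint counts give the equipartition bound $2^{-n(1\pm c\delta)H(Y\mid X)}$ on $P^n(y\mid x)$, and summing probabilities over $T^x_\delta$ gives both cardinality bounds. Two details are worth writing out explicitly: the multiplicative Chernoff bound of \cref{fact:chernoff} yields your additive deviation $(\delta-\delta')n/2$ with failure probability $e^{-\Omega(n)}$, uniformly over all $\delta'$-typical $x$, because each mean $N(\alpha)P(\beta\mid\alpha)$ is at most $n$; and you should keep your separate handling of the degenerate case $H(Y\mid X)=0$, since there the constant $c=c'/H(Y\mid X)$ is undefined.
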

\noindent See Section 13.9 of \cite{Wil13} for a proof of \cref{lem:condtyp}.

\subsection{Randomized Depth and Size Complexity of a Function}\label{sec:depthandsize}
In this section, we define the randomized tree depth and tree size complexity of a function.
\paragraph{Tree Depth.} For a function $f: \{0,1\}^k \rightarrow \{0,1\}$ and an input distribution $\mu$, the randomized query complexity of $f$ for success probability $\gamma$ is defined as:
\begin{align*}
    \mathsf{R}_{\gamma}(f) &= \min_{R: \mathsf{success}_f(R) \geq \gamma} \max_{x \in X} ~\max_{D \in \mathsf{supp}(R)} \mathsf{depth}(D,x) &\text{(Worst-case randomized complexity)}
\end{align*}
When the algorithm knows the input distribution $\mu$, we can also consider its average randomized query-complexity. In particular, we measure both the success and the query complexity in expectation over both the internal randomness and input distribution. This gives us the distributional randomized complexity of $f$ with success probability $\gamma$, with respect to distribution $\mu$:
\begin{align*}
    \overline{\mathsf{R}}_{\gamma}^{\mu}(f) &= \min_{R: \mathsf{success}^{\mu}_f(R) \geq \gamma} ~\E_{x \sim \mu, D \sim R}[\mathsf{depth}(D, x)] &\text{(Distributional randomized complexity)}
\end{align*}
We denote $\overline{\mathsf{depth}}^{\mu}(R) = \E_{x \sim \mu, D \sim R}[\mathsf{depth}(D, x)]$. When $\gamma = \frac{2}{3}$, we drop the subscript $\gamma$ for notational simplicity. Clearly, $\overline{\mathsf{R}}_{\gamma}^{\mu}(f) \leq \mathsf{R}_{\gamma}(f)$, and in general the distributional complexity can be asymptotically smaller than the worst-case complexity \cite{BB25}. However, when $\gamma = \frac{2}{3}$, the two quantities are asymptotically equal, when the former is maximized over input distributions.
\begin{lemma}[\cite{Ver98}]\label{lem:minmaxquery}
    For any function $f: \{0,1\}^k \rightarrow \{0,1\}$, we have the following:
    \begin{align*}
        \max_{\mu}\overline{\mathsf{R}}^{\mu}(f) = \Theta(\mathsf{R}(f))
    \end{align*}
\end{lemma}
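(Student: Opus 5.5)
The plan is to prove the two inequalities separately. The upper bound $\max_\mu \overline{\mathsf{R}}^\mu(f) \le \mathsf{R}(f)$ is immediate; the lower bound $\mathsf{R}(f) = O(\max_\mu \overline{\mathsf{R}}^\mu(f))$ will combine Markov truncation with von Neumann's minimax theorem and constant-factor success amplification.

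\emph{Upper bound.} Let $R$ be an optimal worst-case algorithm, so $\mathsf{success}_f(R) \ge \frac23$ and every tree in its support has depth at most $\mathsf{R}(f)$ on every input. For any $\mu$, averaging the per-input success over $x \sim \mu$ gives $\mathsf{success}^\mu_f(R) \ge \frac23$. Likewise $\overline{\mathsf{depth}}^\mu(R) \le \mathsf{R}(f)$. Hence $\overline{\mathsf{R}}^\mu(f) \le \mathsf{R}(f)$ for every $\mu$.

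\emph{Lower bound, step 1 (truncation).} Let $d = \max_\mu \overline{\mathsf{R}}^\mu(f)$. Fix any $\mu$ and let $R_\mu$ satisfy $\mathsf{success}^\mu_f(R_\mu) \ge \frac23$ and $\overline{\mathsf{depth}}^\mu(R_\mu) \le d$. Truncate every tree in the support of $R_\mu$ at depth $12d$, labelling each cut-off node with an arbitrary output. By Markov's inequality, over $x \sim \mu$ and $D \sim R_\mu$ the run reaches depth $12d$ with probability at most $\frac1{12}$. So the truncated algorithm $R'_\mu$ has $\mathsf{success}^\mu_f(R'_\mu) \ge \frac23 - \frac1{12} = \frac7{12}$, and every tree in its support has worst-case depth at most $12d$.

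\emph{Lower bound, step 2 (minimax).} Consider the two-player zero-sum game in which the algorithm player picks a deterministic decision tree of depth at most $12d$ over $k$ variables and the adversary picks an input $x \in \{0,1\}^k$. The payoff is $1$ if $D(x) = f(x)$ and $0$ otherwise. Both strategy sets are finite, so von Neumann's minimax theorem applies. Step 1 shows that for every mixed adversary strategy $\mu$ the algorithm player has a mixed strategy with expected payoff at least $\frac7{12}$. Minimax then yields a single distribution $R$ over depth-$12d$ trees with $\Pr_{D\sim R}[D(x)=f(x)] \ge \frac7{12}$ for every $x$.

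\emph{Lower bound, step 3 (amplification).} Run $R$ independently $m = O(1)$ times and output the majority answer. A Chernoff bound (\cref{fact:chernoff}) shows that a constant $m$ suffices to push the worst-case success from $\frac7{12}$ up to $\frac23$. The resulting algorithm has worst-case depth at most $12md = O(d)$. Therefore $\mathsf{R}(f) = O(d)$, which together with the upper bound gives $\max_\mu \overline{\mathsf{R}}^\mu(f) = \Theta(\mathsf{R}(f))$.

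The only delicate point is arranging things so that minimax applies to a linear payoff. Applied directly, the game would have to trade off expected cost against a success constraint, which is not a single linear payoff. The Markov truncation in step 1 is what removes this issue: it turns cost into a hard depth bound, leaving success as the only payoff, at the price of a constant loss in success that step 3 recovers.
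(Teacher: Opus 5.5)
Your proof is correct. The paper gives no argument of its own for this lemma; it simply imports it from \cite{Ver98}, so there is no in-paper proof for yours to match or diverge from.

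What you give is the standard self-contained derivation of this constant-factor equivalence, in three steps.
\begin{enumerate}
\item Markov truncation turns the expected-depth guarantee for each $\mu$ into a hard depth bound $\lfloor 12d\rfloor$, at a cost of $\frac{1}{12}$ in success. This is the step that lets von Neumann's theorem apply: it leaves a single bilinear payoff over finite strategy sets.
\item Minimax then gives one distribution over depth-$\lfloor 12d\rfloor$ trees with worst-case success $\frac{7}{12}$.
\item Majority voting restores success $\frac23$.
\end{enumerate}
Your route proves the lemma using only von Neumann's theorem plus tools already in the paper's preliminaries. The constants it loses are irrelevant for a $\Theta(\cdot)$ statement. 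The citation buys only brevity.

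The remaining details are routine:
\begin{itemize}
\item The minimum defining $\overline{\mathsf{R}}^{\mu}(f)$ is attained by compactness, since trees can be assumed not to re-query variables. Alternatively, work with $d+\eta$.
\item Truncating at $\lfloor 12d\rfloor$ rather than $12d$ is harmless, since depths are integers.
\item The case $d=0$ goes through unchanged.
\item Applying \cref{fact:chernoff} with $\delta=\frac17$, the majority of $m$ runs fails with probability at most $e^{-m/168}$. So an odd $m\geq 185$ suffices, giving $\mathsf{R}(f)\leq 12m\,d=O(d)$.
\end{itemize}
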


\paragraph{Tree Size.} We define the size of a deterministic decision tree $D$, denoted $\mathsf{size}(D)$, as the number of leaves in the tree. We can use this to define the randomized tree size complexity of a function $f$, for success probability $\gamma$:
\begin{align*}
    \mathsf{RSize}_{\gamma}(f) &= \min_{R:\mathsf{success}_f(R) \geq \gamma} \max_{D \in \mathsf{supp}(R)} \mathsf{size}(D) &\text{(Worst-case size with worst-case success)} \\
    \overline{\mathsf{RSize}}_{\gamma}(f) &= \min_{R:\mathsf{success}_f(R) \geq \gamma} \E_{D \sim R} [\mathsf{size}(D)] &\text{(Expected size with worst-case success)}
\end{align*}
When the algorithm knows the input distribution $\mu$, we measure the success probability of the algorithm with respect to $\mu$, leading to the distributional randomized tree size complexity of $f$ with success probability $\gamma$:
\begin{align*}
    \mathsf{RSize}^{\mu}_{\gamma}(f) &= \min_{R:\mathsf{success}^{\mu}_f(R) \geq \gamma} \max_{D \in \mathsf{supp}(R)} \mathsf{size}(D) &\text{(Worst-case size with expected success)} \\
    \overline{\mathsf{RSize}}_{\gamma}^{\mu}(f) &= \min_{R:\mathsf{success}^{\mu}_f(R) \geq \gamma} \E_{D \sim R} [\mathsf{size}(D)] &\text{(Expected size with expected success)}
\end{align*}
It is also possible to relate the distributional complexity measures to the worst-case complexity measures, using minimax theorems.
\begin{lemma}[Minimax theorem for tree size]\label{lem:minmaxsize}
    For any function $f: \{0,1\}^k \rightarrow \{0,1\}$ and success probability $\gamma$, we have the following:
    \begin{align*}
        \max_{\mu} \mathsf{RSize}^{\mu}_{\gamma}(f) &= \mathsf{RSize}_{\gamma}(f) \\
        \max_{\mu} \overline{\mathsf{RSize}}_{\gamma}^{\mu}(f) &\leq \overline{\mathsf{RSize}}_{\gamma}(f)
    \end{align*}
\end{lemma}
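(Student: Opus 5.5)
We prove the two statements separately, both by viewing the problem as a two-player zero-sum game. The algorithm player picks a randomized tree $R$ (a distribution over deterministic trees $D$ of size at most $s$), and the adversary picks an input $x$ (or a distribution $\mu$ over inputs). The payoff is the success indicator $[D(x)=f(x)]$. For fixed $k$, the set of deterministic decision trees on $k$ variables with at most $s$ leaves is finite. The set of inputs $\{0,1\}^k$ is also finite. So von Neumann's minimax theorem applies to the mixed extension of this finite game.

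\textbf{Equality for worst-case size.} First, the easy direction $\max_\mu \mathsf{RSize}^\mu_\gamma(f)\le \mathsf{RSize}_\gamma(f)$. Let $R$ achieve worst-case success $\gamma$ with every tree in its support of size at most $s=\mathsf{RSize}_\gamma(f)$. Then for every $\mu$,
\[
\mathsf{success}^\mu_f(R)=\E_{x\sim\mu}\Pr_{D\sim R}[D(x)=f(x)]\ge\gamma,
\]
so $R$ is feasible for $\mathsf{RSize}^\mu_\gamma(f)$.

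For the reverse direction, fix the integer $s=\max_\mu \mathsf{RSize}^\mu_\gamma(f)$. By definition, for every $\mu$ there is some $R$ supported on trees of size at most $s$ with $\mathsf{success}^\mu_f(R)\ge\gamma$. Equivalently,
\[
\min_\mu\max_{R}\ \mathsf{success}^\mu_f(R)\ge\gamma,
\]
where $R$ ranges over distributions on the finite set $\mathcal{D}_s$ of trees of size at most $s$. The payoff is bilinear in $(R,\mu)$, and both strategy spaces are simplices over finite sets. Minimax therefore gives some $R^*$ over $\mathcal{D}_s$ with $\min_\mu \mathsf{success}^\mu_f(R^*)\ge\gamma$. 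The minimum over $\mu$ is attained at point masses, so $\mathsf{success}_f(R^*)\ge\gamma$. Hence $\mathsf{RSize}_\gamma(f)\le s$.

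One detail needs care: the maximum over $\mu$ is attained. This holds because $\mathsf{RSize}^\mu_\gamma$ takes integer values bounded by $2^k$, so the supremum is achieved.

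\textbf{Inequality for expected size.} We only need the easy direction. Take $R$ optimal for $\overline{\mathsf{RSize}}_\gamma(f)$. As above, $R$ has $\mu$-success at least $\gamma$ for every $\mu$. Its expected size $\E_{D\sim R}[\mathsf{size}(D)]$ does not depend on $\mu$. So $\overline{\mathsf{RSize}}^\mu_\gamma(f)\le\overline{\mathsf{RSize}}_\gamma(f)$ for each $\mu$, and taking the maximum over $\mu$ gives the claim.

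The only real obstacle is the minimax step. There we must make sure the strategy sets are finite, or at least compact and convex, and the payoff is bilinear. Restricting to trees of size at most $s$ gives this, and it is the reason the first statement is an exact equality. The expected-size version is left as an inequality, since it would need a minimax over a non-bilinear constraint-plus-objective combination.
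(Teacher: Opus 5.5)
Your proof is correct. The easy directions hold because worst-case success $\gamma$ implies $\mu$-success $\gamma$ for every $\mu$, and the expected size does not depend on $\mu$. The reverse direction of the equality is von Neumann's minimax theorem on the finite game between trees of size at most $s$ and inputs in $\{0,1\}^k$, with attainment handled by integrality and compactness. The paper gives no proof of this lemma and only implicitly appeals to this standard minimax argument through its name, so your route is the intended one.
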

Note that the measure $\mathsf{RSize}^{\mu}_{\gamma}(f)$ measures the worst-case of a tree in the support of the algorithm. As such, we can pick a decision tree in the support whose expected success (over the input distribution) is at least $\gamma$, and its size will be bounded by the worst-case size. So this measure is equal to $\mathsf{DSize}^{\mu}_{\gamma}(f)$, where we only consider deterministic trees. However, this is no longer true for randomized trees computing multiple instances of a function $f$, since different trees in the support could have high expected success for different copies of $f$ and we may not be able to pick a single tree from the support with high success on all instances.

\subsection{Information Complexity of a Function}
\label{sec:icinquery}
In the distributional setting, we can measure the cost of a decision tree in terms of the amount of information we learn about the input instead of the depth of its leaves.
Specifically, if $X \sim \mu$ and $L$ is a random variable that denotes the leaf of $T$ that corresponds to $X$, we define the \emph{information cost} of a deterministic tree $T$ to be the mutual information between $X$ and $L$:
\[
\mathsf{info}^\mu(T) = \rI(X; L).
\]
This amount also equals the entropy of the sequence of bits revealed by the queries of the deterministic tree $T$ during its execution. The information cost of a randomized decision tree $R$ is the expected information cost of a tree $T$ drawn from its distribution: $\mathsf{info}^\mu(R) = \E_{T \sim R}[ \mathsf{info}^\mu(T) ]$. This quantity also equals the expected (over the randomness of the tree) entropy of the sequence of bits revealed by the queries of the tree $R$.

The \emph{information complexity} of a Boolean function $f$ with respect to the distribution $\mu$ and success parameter $\gamma \in (\frac12, 1]$ is the minimal information cost of a randomized decision tree that computes $f$ with success probability at least $\gamma$ over inputs drawn from $\mu$.
\[
\mathsf{IC}^\mu_\gamma(f) = \min_{R : \sf success^\mu_f(R) \ge \gamma} \mathsf{info}^\mu(R).
\]
We can also define a distribution-free information complexity measure of Boolean functions by taking the supremum over all distributions on its domain: ${\sf IC}_\gamma(f) = \sup_{\mu} {\sf IC}^\mu_\gamma(f)$.
Furthermore, for notational simplicity, when $\gamma = \frac23$, we simply write ${\sf IC}(f) = {\sf IC}_{\frac23}(f)$ and ${\sf IC}^\mu(f) = {\sf IC}^{\mu}_{\frac 23}(f)$.

In this section, we show some basic properties of the information complexity of a function $f$. We first show that the information complexity is upper bounded by the logarithm of expected tree size with success probability $\gamma$ with respect to distribution $\mu$.
\begin{lemma}\label{lem:treesizebound}
    For any function $f: \{0,1\}^k \rightarrow \{0,1\}$, distribution $\mu$ on $\{0,1\}^k$ and score $\gamma$,
    \begin{align*}
        \IC^\mu_\gamma(f) \leq \log\left(\overline{\mathsf{RSize}}^\mu_\gamma(f)\right)
    \end{align*}
\end{lemma}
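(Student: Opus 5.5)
Take a randomized tree $R$ that attains $\overline{\mathsf{RSize}}^\mu_\gamma(f)$: it has $\mathsf{success}^\mu_f(R) \geq \gamma$ and $\E_{D\sim R}[\mathsf{size}(D)] = \overline{\mathsf{RSize}}^\mu_\gamma(f)$. By the definition of $\IC^\mu_\gamma(f)$ as a minimum over trees with success at least $\gamma$, we have $\IC^\mu_\gamma(f) \le \mathsf{info}^\mu(R) = \E_{D\sim R}[\mathsf{info}^\mu(D)]$. So it is enough to bound $\E_{D\sim R}[\mathsf{info}^\mu(D)]$ by $\log \E_{D \sim R}[\mathsf{size}(D)]$.

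The argument has two steps.

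\textbf{Deterministic bound.} Fix a deterministic tree $D$ with leaf random variable $L$. Then
\[
\mathsf{info}^\mu(D) = \rI(X;L) = H(L) - H(L\mid X) = H(L),
\]
because $L$ is a deterministic function of $X$. The variable $L$ takes at most $\mathsf{size}(D)$ values, so $H(L) \le \log \mathsf{size}(D)$.

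\textbf{Averaging over $R$.} The function $\log$ is concave, so Jensen's inequality (\cref{fact:jensen}, applied to the convex function $-\log$) gives
\[
\E_{D\sim R}[\log \mathsf{size}(D)] \le \log \E_{D\sim R}[\mathsf{size}(D)].
\]
Chaining the inequalities yields
\[
\IC^\mu_\gamma(f) \le \E_{D\sim R}[\log\mathsf{size}(D)] \le \log \overline{\mathsf{RSize}}^\mu_\gamma(f).
\]

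There is no real obstacle here. The only points needing care are attainment of the minimum and the randomized-tree conditioning.

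On attainment: if the minimum defining $\overline{\mathsf{RSize}}^\mu_\gamma(f)$ is not attained, take a sequence of near-optimal $R$ and pass to the limit. In fact it is attained, since over a finite input space there are finitely many relevant deterministic trees and the feasible set of distributions is compact.

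On conditioning: the paper's formulation uses $\rI(X;L\mid T)$, and this equals $\E_T[\rI(X;L\mid T=t)]$. This is consistent with the definition $\mathsf{info}^\mu(R) = \E_{T \sim R}[\mathsf{info}^\mu(T)]$, because $X$ is independent of $T$.
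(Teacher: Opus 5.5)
Your proof is correct and follows the paper's argument: you bound $\mathsf{info}^\mu(D)=H(L)\le\log\mathsf{size}(D)$ for each deterministic tree, then apply Jensen's inequality to the concave $\log$ to pass to the expected size. The only difference is that the paper fixes an arbitrary $R$ with $\mathsf{success}^\mu_f(R)\ge\gamma$ and minimizes over $R$ at the end, which makes your discussion of whether the minimum is attained unnecessary.
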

\begin{proof}
    Let $R$ be any randomized decision tree computing the function $f$ with $\mathsf{success}^{\mu}_f(R) \geq \gamma$.
    \begin{align*}
        \info^{\mu}(R) &= \E_{D \sim R} [\mathsf{info^{\mu}(D)}] = \E_{D \sim R} [\rI(X; L)] \\
        &\leq \E_{D \sim R} [\log(\mathsf{size}(D))] \leq \log\left(\E_{D \sim R} [(\mathsf{size}(D))]\right) &\text{(Jensen's inequality)}
    \end{align*}
    Since the choice of $R$ was arbitrary, the claim follows.
\end{proof}
We now show that the information complexity of a function is continuous in the success probability parameter $\gamma$ in the interval $\gamma \in \left(\frac{1}{2}, 1\right)$.

\begin{lemma}
    For any function $f: \{0,1\}^k \rightarrow \{0,1\}$, and distribution $\mu$ on $\{0,1\}^k$, the information complexity $\IC^{\mu}_{\gamma}(f)$ is a convex function of the score $\gamma \in \left[\frac{1}{2}, 1\right]$.
\end{lemma}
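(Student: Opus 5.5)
The plan is to prove convexity directly from the definition by mixing randomized decision trees. Fix $\gamma_1, \gamma_2 \in [\frac12, 1]$ and $\lambda \in [0,1]$, and set $\gamma = \lambda\gamma_1 + (1-\lambda)\gamma_2$. The goal is
\[
\IC^\mu_\gamma(f) \le \lambda \IC^\mu_{\gamma_1}(f) + (1-\lambda)\IC^\mu_{\gamma_2}(f).
\]

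First I would note that the minimum in the definition of $\IC^\mu_{\gamma_i}(f)$ is attained, or at least approached to within any $\eta > 0$. The set of deterministic trees on $k$ variables can be taken finite, since repeated queries on a path can be removed without changing the leaf partition or the information cost. Randomized trees are then points of a simplex. Both $\mathsf{success}^\mu_f(R)$ and $\mathsf{info}^\mu(R)$ are linear in $R$, so the feasible set is compact and the minimum exists. Let $R_1$ and $R_2$ be optimal for $\gamma_1$ and $\gamma_2$ respectively.

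Next I would define the mixture $R = \lambda R_1 + (1-\lambda) R_2$. This is the randomized tree that samples $D \sim R_1$ with probability $\lambda$ and $D \sim R_2$ otherwise. Linearity of expectation over the internal randomness then gives two facts:
\begin{align*}
\mathsf{success}^\mu_f(R) &= \lambda\,\mathsf{success}^\mu_f(R_1) + (1-\lambda)\,\mathsf{success}^\mu_f(R_2) \ge \gamma,\\
\mathsf{info}^\mu(R) &= \E_{D\sim R}[\mathsf{info}^\mu(D)] = \lambda\,\mathsf{info}^\mu(R_1) + (1-\lambda)\,\mathsf{info}^\mu(R_2).
\end{align*}
The second identity holds because $\mathsf{info}^\mu(R)$ is defined as the expected information cost $\rI(X;L)$ of the sampled deterministic tree; it is not the mutual information of the mixture. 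This is exactly the point that makes the argument work, so I would state it explicitly. Since $R$ is feasible for $\gamma$, the desired inequality follows.

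I do not expect a genuine obstacle. The only care needed is at the endpoints. At $\gamma = \frac12$, the trivial one-leaf tree that outputs the majority value under $\mu$ already achieves success at least $\frac12$ with zero information, so the function is well defined and finite on the closed interval. I would also remark that $\IC^\mu_\gamma(f)$ is nondecreasing in $\gamma$, which is immediate because the feasible set shrinks as $\gamma$ grows. Together with convexity, this gives continuity on the open interval $(\frac12,1)$, as the text preceding the lemma announces.
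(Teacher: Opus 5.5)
Your proposal is correct and follows the paper's proof: the paper also takes optimal $R_1, R_2$ for $\gamma_1, \gamma_2$, runs $R_1$ with probability $\alpha$ and $R_2$ otherwise, and uses linearity of both the success probability and $\mathsf{info}^\mu(R) = \E_{D \sim R}[\mathsf{info}^\mu(D)]$. Your justification that the minimum is attained (finitely many trees without repeated queries, so a linear program over a simplex) fills in a step the paper takes for granted.
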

\begin{proof}
    For any scores $\gamma_1, \gamma_2 \in \left[\frac{1}{2}, 1\right]$ and $\alpha \in (0,1)$, consider randomized decision trees $R_1, R_2$ with score at least $\gamma_1, \gamma_2$ respectively such that $\info^{\mu}(R_i) = \IC^{\mu}_{\gamma_i}(f)$ for $i \in \{1,2\}$. Set $\gamma = \alpha \gamma_1 + (1-\alpha)\gamma_2$. Then consider the decision tree $R$ which runs $R_1$ with probability $\alpha$ and $R_2$ with probability $1-\alpha$. Then $R$ has score at least $\gamma$, and
    \begin{align*}
        \info^{\mu}(R) = \E_{D \sim R} [\mathsf{info^{\mu}(D)}] = \alpha \E_{D \sim R_1} [\mathsf{info^{\mu}(D)}] + (1-\alpha) \E_{D \sim R_2} [\mathsf{info^{\mu}(D)}]
    \end{align*}
    Therefore, we get that
    \begin{align*}
        \IC^{\mu}_{\gamma}(f) \leq \alpha \IC^{\mu}_{\gamma_1}(f) + (1-\alpha) \IC^{\mu}_{\gamma_2}(f) &\qedhere
    \end{align*}
\end{proof}
\begin{corollary}\label{cor:conticscore}
    For any function $f: \{0,1\}^k \rightarrow \{0,1\}$, and distribution $\mu$ on $\{0,1\}^k$, the information complexity $\IC^{\mu}_{\gamma}(f)$ is a continuous function of the score $\gamma \in \left(\frac{1}{2}, 1\right)$.
\end{corollary}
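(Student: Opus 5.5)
The corollary follows from the preceding convexity lemma by the standard fact that a convex function on an interval is continuous on the interior of that interval. Set $g(\gamma) = \IC^{\mu}_{\gamma}(f)$ on $[\frac12, 1]$.

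The first step is to check that $g$ is finite and bounded, so that convexity really applies as a real-valued function. Take the deterministic tree that queries every one of the $k$ bits and outputs $f(x)$. It succeeds with probability $1$, so it is feasible for every $\gamma \le 1$, and its information cost is $\rI(X;L) \le H(X) \le k$. Information cost is also nonnegative. Hence $0 \le g(\gamma) \le k$ on the whole interval.

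The same feasibility argument shows that $g$ is nondecreasing in $\gamma$: any $R$ with success at least $\gamma'$ also has success at least $\gamma$ whenever $\gamma \le \gamma'$. Monotonicity is not needed for the conclusion, but it is a useful sanity check.

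The main step is the usual slope argument for a real-valued convex function $g$ on $[\frac12,1]$. Fix $\gamma_0 \in (\frac12, 1)$ and choose $a < \gamma_0 < b$ in $[\frac12,1]$. Convexity gives the three-chord inequality: the slope $\frac{g(\gamma)-g(\gamma_0)}{\gamma-\gamma_0}$ is nondecreasing in $\gamma$. So for $\gamma \in (\gamma_0, b]$ this slope lies between $\frac{g(\gamma_0)-g(a)}{\gamma_0-a}$ and $\frac{g(b)-g(\gamma_0)}{b-\gamma_0}$. The same bounds hold symmetrically for $\gamma \in [a,\gamma_0)$. Both bounding slopes are finite because $g$ is bounded. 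Therefore $|g(\gamma)-g(\gamma_0)| \le M|\gamma-\gamma_0|$ for some finite $M$, so $g$ is locally Lipschitz, and in particular continuous, at $\gamma_0$.

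The only potential obstacle is that the convexity lemma assumes the minimum defining $\IC^{\mu}_{\gamma_i}(f)$ is attained. If attainment is in doubt, I would rerun that argument with $\delta$-optimal $R_i$ and let $\delta \to 0$, which gives the same convexity inequality. Continuity at the endpoints is not claimed and can fail there, which is why the statement is restricted to the open interval.
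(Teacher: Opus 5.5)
Your proof is correct and follows the same route as the paper, which gives no separate argument and treats the corollary as an immediate consequence of the convexity lemma (a finite convex function is continuous on the interior of its interval); your boundedness check via the query-everything tree and your $\delta$-optimal fallback for attainment make explicit what the paper leaves implicit. One small correction to your closing aside: there are only finitely many deterministic trees on $k$ bits without repeated queries, so the minimum is attained by compactness, and $\IC^{\mu}_{\gamma}(f)$ is in fact continuous on all of $[\frac12,1]$ (monotonicity plus convexity give continuity at $\frac12$, and compactness gives it at $1$), so continuity does not fail at the endpoints for this function.
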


We now define a cost measure on the leaves of a tree using its surprisal. We then show that the information cost of a tree is the same as its expected surprisal cost.

\paragraph{Surprisal based Cost Measure.} Given an input distribution $\mu$, and a leaf $\ell$ of a deterministic decision tree $D$, we define the surprisal cost of $\ell$ with respect to $\mu$ as:
\begin{align*}
    \mathsf{surprisal}^{\mu}(\ell) = -\log\left(\Pr_{\mu}[\ell]\right)
\end{align*}
where $\Pr_{\mu}[\ell]$ is the probability of sampling an input from $\mu$ that leads to the leaf $\ell$. We now show some properties of the surprisal cost.
\begin{itemize}
    \item Empty leaf (no queries) has surprisal cost $0$ because every input is consistent with it.
    \item Monotonicity: Consider an internal node $v$ of $D$ on the path from root to leaf $\ell$. Then $\Pr_{\mu}[v] \geq \Pr_{\mu}[\ell]$, therefore $\mathsf{surprisal}^{\mu}(v) \leq \mathsf{surprisal}^{\mu}(\ell)$. So the surprisal cost increases as we traverse down a path of the tree.
    \item Additivity: Suppose $\ell = \ell_1 \times \ldots \times \ell_n$ is a leaf with queries on $(\{0,1\}^{m})^{n}$ and $\ell_i$ consists of the queries on the $i^{th}$ input from $\{0,1\}^m$ for $i \in [n]$, then
    \begin{align*}
        \mathsf{surprisal}^{\mu^n}(\ell) = -\log\left(\Pr_{\mu^n}[\ell_1 \times \ldots \times \ell_n]\right) = -\sum_{i=1}^n \log\left(\Pr_{\mu}[\ell_i]\right) = \sum_{i=1}^n \mathsf{surprisal}^{\mu}(\ell_i)
    \end{align*}
\end{itemize}
The surprisal cost of a deterministic tree $D$ is the expected surprisal cost of the leaves of the tree with respect to the distribution $\mu$.
\begin{align*}
    \overline{\mathsf{surprisal}}^{\mu}(D) = \E_{\ell \sim L(\mu, D)}[\mathsf{surprisal}^{\mu}(\ell)]
\end{align*}
The surprisal cost of a randomized tree $R$ is then the expected surprisal cost of the deterministic trees $D$ in the support of $R$ with respect to the distribution $\mu$.
\begin{align*}
    \overline{\mathsf{surprisal}}^{\mu}(R) = \E_{D \sim R}[\overline{\mathsf{surprisal}}^{\mu}(D)]
\end{align*}
 We now show that the information cost of a deterministic decision tree is equal to the expected surprisal of its leaves with respect to the input distribution $\mu$.
\begin{proposition}
    For any deterministic decision tree $D$, the information cost of $D$ is equal to its expected surprisal cost with respect to the input distribution $\mu$:
    \begin{align*}
        \mathsf{info}^{\mu}(D) = \overline{\mathsf{surprisal}}^{\mu}_f(D)
    \end{align*}
\end{proposition}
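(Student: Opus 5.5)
The plan is to write the mutual information as an entropy difference and exploit the fact that the leaf reached is a deterministic function of the input. With $X \sim \mu$ and $L$ the leaf of $D$ reached on $X$, we have
\begin{align*}
\mathsf{info}^{\mu}(D) = \rI(X;L) = H(L) - H(L \mid X).
\end{align*}
Since $D$ is deterministic, $L$ is completely determined by $X$. Hence $H(L \mid X = x) = 0$ for every $x$ in the support of $\mu$, and so $H(L\mid X)=0$. This leaves $\mathsf{info}^{\mu}(D) = H(L)$.

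Next I will expand $H(L)$ according to the definition of Shannon entropy. The distribution of $L$ is exactly $L(\mu, D)$, which assigns probability $\Pr_{\mu}[\ell]$ to each leaf $\ell$. Therefore
\begin{align*}
H(L) = \sum_{\ell} \Pr_{\mu}[\ell]\,\log\frac{1}{\Pr_{\mu}[\ell]} = \E_{\ell \sim L(\mu,D)}\left[-\log \Pr_{\mu}[\ell]\right] = \E_{\ell \sim L(\mu,D)}\left[\mathsf{surprisal}^{\mu}(\ell)\right].
\end{align*}
The right-hand side is by definition $\overline{\mathsf{surprisal}}^{\mu}(D)$. The only care needed is with leaves of probability zero: they contribute $0$ to the entropy under the convention $0\log\frac10=0$, and they are never sampled in the expectation, so the two sums agree.

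I do not expect any real obstacle here. The one point to state explicitly is the remark made earlier in the paper that $\mathsf{info}^\mu(D)$ also equals the entropy of the sequence of revealed query answers. This holds because the leaf and the transcript of answered bits determine each other along the unique root-to-leaf path. That observation reconciles the two descriptions, but it is not needed for the identity itself. Finally, averaging over $D \sim R$ gives the randomized version $\mathsf{info}^\mu(R) = \overline{\mathsf{surprisal}}^\mu(R)$ immediately by linearity of expectation.
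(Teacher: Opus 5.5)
Your proof is correct and matches the paper's argument: you write $\rI(X;L) = H(L) - H(L \mid X)$, note that $H(L \mid X) = 0$ because the leaf of a deterministic tree is a function of the input, and identify $H(L)$ with the expected surprisal of the leaf reached. Your remarks on zero-probability leaves and on averaging over $D \sim R$ are fine but are not needed for this statement.
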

\begin{proof}
    Let $L$ be the random variable corresponding to the leaf $\ell$ of $D$. The claim follows because
    \begin{align*}
        \mathsf{info}^{\mu}(D) = \rI(X; L) = H(L) - H(L|X) = H(L) = \E_{\ell \sim L(\mu, D)}[\mathsf{surprisal}^{\mu}(\ell)] &\qedhere
    \end{align*}
\end{proof}

\begin{lemma}
    For any function $f: \{0,1\}^k \rightarrow \{0,1\}$, input distribution $\mu$ and success probability $\gamma$, we have that
    \begin{align*}
        \IC^{\mu}_{\gamma}(f) = \min_{R: \mathsf{success}^{\mu}_f(R) \geq \gamma} \overline{\mathsf{surprisal}}^{\mu}_f(R)
    \end{align*}
\end{lemma}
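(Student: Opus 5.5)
This lemma follows directly from the preceding proposition together with linearity of expectation. I will unfold both sides for a fixed randomized decision tree $R$ and show that they coincide. Once $\mathsf{info}^{\mu}(R) = \overline{\mathsf{surprisal}}^{\mu}(R)$ holds for every $R$, both minimizations range over the same feasible set $\{R : \mathsf{success}^{\mu}_f(R) \geq \gamma\}$. They therefore have the same optimal value, so the identity follows.

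The pointwise identity is a two-step calculation. By definition, $\mathsf{info}^{\mu}(R) = \E_{D \sim R}[\mathsf{info}^{\mu}(D)]$. For each deterministic $D$ in the support, the previous proposition gives $\mathsf{info}^{\mu}(D) = \rI(X;L) = H(L) = \E_{\ell \sim L(\mu,D)}[\mathsf{surprisal}^{\mu}(\ell)] = \overline{\mathsf{surprisal}}^{\mu}(D)$. Here $H(L \mid X) = 0$ because the leaf is a deterministic function of the input once $D$ is fixed. Substituting inside the expectation over $D \sim R$ gives
\begin{align*}
\mathsf{info}^{\mu}(R) = \E_{D \sim R}\left[\overline{\mathsf{surprisal}}^{\mu}(D)\right] = \overline{\mathsf{surprisal}}^{\mu}(R),
\end{align*}
where the last step is just the definition of the surprisal cost of a randomized tree.

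I do not expect a real obstacle. The only points to check are minor. First, leaves with $\Pr_\mu[\ell]=0$ have infinite surprisal but receive zero weight under $L(\mu,D)$, so the convention $0 \cdot \infty = 0$ (equivalently, $0\log\frac10 = 0$ in the entropy) handles them. Second, the minimum is actually attained on both sides. Since the two objective functions agree on every $R$, any attainment issue, if one exists, is identical on the two sides, so writing min (or inf) on both sides is consistent.
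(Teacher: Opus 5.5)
Your proposal is correct and follows the paper's proof: both establish $\mathsf{info}^{\mu}(R) = \E_{D\sim R}[\overline{\mathsf{surprisal}}^{\mu}(D)] = \overline{\mathsf{surprisal}}^{\mu}(R)$ for every $R$ via the preceding proposition, then minimize over the same feasible set. Your remarks on zero-probability leaves and on attainment are harmless additions that the paper leaves implicit.
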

\begin{proof}
    This follows because
    \begin{align*}
        \mathsf{IC}^\mu_\gamma(f) &= \min_{R : \mathsf{success}^\mu_f(R) \ge \gamma} \mathsf{info}^\mu(R) = \min_{R : \mathsf{success}^\mu_f(R) \ge \gamma} \E_{D\sim R}[\mathsf{info}^\mu(D)] \\ 
        &= \min_{R : \mathsf{success}^\mu_f(R) \ge \gamma} \E_{D\sim R}[\overline{\mathsf{surprisal}}^\mu(D)] = \min_{R: \mathsf{success}^{\mu}_f(R) \geq \gamma} \overline{\mathsf{surprisal}}^{\mu}_f(R) \qedhere
    \end{align*}
\end{proof}

\paragraph{Success-conditioned Information Complexity.} We now define a success-conditioned version of information complexity. This was defined in \cite{BB25} for randomized query complexity, i.e., when we measure the depth of the tree. The success-conditioned information cost measures the expected surprisal cost of an algorithm $R$ conditioned on the algorithm giving the correct answer. This measure rules out algorithms which achieve low expected surprisal by failing most of the time and succeeding with exponentially small probability by querying the entire input. For any leaf $\ell$, we use $\mathsf{success}^{\mu}_f(\ell)$ to denote the expected (over input distribution) success probability conditioned on reaching leaf $\ell$. The success-conditioned information cost of an algorithm $R$ is defined as
\begin{align*}
    \mathsf{sinfo}^{\mu}_f(R) = \frac{\E_{\ell \sim L(\mu, R)}[\mathsf{success}^{\mu}_f(\ell)\cdot \mathsf{surprisal}^{\mu}(\ell)]}{\mathsf{success}^{\mu}_f(R)}
\end{align*}
The success-conditioned information complexity of $f$ with success probability $\gamma$ is defined as:
\begin{align*}
    \mathsf{sIC}_{\gamma}^{\mu}(f) &= \min_{R: \mathsf{success}^{\mu}_f(R) \geq \gamma} \mathsf{sinfo}^{\mu}_f(R) &\text{(Success-conditioned information complexity)}
\end{align*}
We can now also define the maximum success-conditioned information complexity of $f$ for success probability $\gamma$.
\begin{align*}
    \mathsf{sIC}_{\gamma}(f) &:= \max_{\mu} \mathsf{sIC}_{\gamma}^{\mu}(f) &\text{(Maximum success-conditioned information complexity)}
\end{align*}
We now relate the success-conditioned information complexity to information complexity, a similar relation was shown for depth complexity in \cite{BB25}.
\begin{proposition}\label{prop:srcvsrc}
    For any function $f: \{0,1\}^k \rightarrow \{0,1\}$, input distribution $\mu$ and success probability $\gamma$,
    \begin{align*}
        \frac{1}{2}\IC^{\mu}_{\gamma}(f) \leq \mathsf{sIC}_{\gamma}^{\mu}(f)
    \end{align*}
\end{proposition}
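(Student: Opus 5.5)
The plan is to start from a randomized tree $R$ that witnesses $\mathsf{sIC}^\mu_\gamma(f)$, so that $\mathsf{success}^\mu_f(R) = p \ge \gamma$ and $\mathsf{sinfo}^\mu_f(R) = s$. From it we build a tree $R'$ with $\mathsf{success}^\mu_f(R') \ge \gamma$ and $\overline{\mathsf{surprisal}}^\mu(R') \le 2s$. The earlier lemma expressing $\IC^\mu_\gamma(f)$ as a minimum of expected surprisal then gives the claim.

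For a leaf $\ell$, write $c(\ell) = \mathsf{surprisal}^\mu(\ell)$ and $w(\ell) = \mathsf{success}^\mu_f(\ell)$. The definition of $\mathsf{sinfo}$ gives
\[
\E_{\ell \sim L(\mu,R)}[w(\ell)\, c(\ell)] = s\,p \le s.
\]
Hence, if every leaf satisfied $w(\ell) \ge \frac12$, we would immediately get
\[
\E_\ell[c(\ell)] \le 2\,\E_\ell[w(\ell)c(\ell)] \le 2s,
\]
and we could take $R' = R$. The whole proof therefore reduces to getting rid of \emph{bad} leaves, those with $w(\ell) < \frac12$, without increasing the surprisal-weighted success mass $\E[w c]$ or decreasing the overall success probability.

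\textbf{Step 1 (easy but insufficient): relabelling.} Each leaf can be relabelled to output the majority value of $f$ under $\mu$ conditioned on that leaf. This leaves $c$ unchanged, does not decrease success, and makes $w \ge \frac12$ everywhere. However, it can increase $\E[wc]$, since a bad leaf with large $c$ now contributes $(1-w)c$. So relabelling alone is not enough.

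\textbf{Step 2 (main obstacle): pruning.} The bad leaves are where an algorithm can hide arbitrarily large information cost, for example by querying everything on paths where it is going to fail. These must be cut off high in the tree rather than relabelled.

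The plan is to prune each deterministic $D \in \mathsf{supp}(R)$ bottom-up. At each internal node $v$, compare two quantities over the subtree $S_v$ below $v$:
\begin{itemize}
\item the success mass $\sum_{\ell \in S_v} \Pr[\ell]\,w(\ell)$ obtained by keeping $S_v$;
\item the success mass $\Pr[v]\cdot\max_b \Pr[f = b \mid v] \ge \Pr[v]/2$ obtained by making $v$ a leaf that outputs the majority answer.
\end{itemize}
We prune at $v$ whenever the second is at least the first. By monotonicity of surprisal (noted earlier in the paper), pruning never increases the surprisal of any input, since $c(v) \le c(\ell)$ for every $\ell$ below $v$. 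It also never decreases success, so $\mathsf{success}^\mu_f(R') \ge p \ge \gamma$.

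It remains to show $\E_{R'}[c] \le 2s$. I would try a charging argument. After pruning, every surviving subtree has success mass strictly larger than half its probability. For a pruned node $v$, I would want to charge $\Pr[v]\,c(v) \le 2 \sum_{\ell \in S_v} \Pr[\ell]\, w(\ell)\, c(\ell)$. This needs the comparison in the wrong direction: at a pruned node we only know the subtree's success mass is at most the majority mass, not at least $\Pr[v]/2$.

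This is precisely the delicate point. If it fails, the fallback is to change the pruning criterion to \emph{prune at $v$ if the subtree success mass is below $\Pr[v]/2$}. We then accept a small loss in success and recover it by mixing with $R$. Alternatively, we can run an induction on the tree, with hypothesis
\[
\sum_{\ell\in S_v}\Pr[\ell]\,c'(\ell) \;\le\; 2\sum_{\ell \in S_v}\Pr[\ell]\,w(\ell)\,c(\ell),
\]
where $c'$ is the surprisal after pruning. The base case uses $w \ge \frac12$ at surviving leaves, and the inductive step uses $c(v) \le c(\ell)$. Establishing this inductive inequality is the step I expect to require the most care.
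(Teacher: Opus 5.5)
Your opening observation is already the paper's entire proof. If $w(\ell)\ge\frac12$ on every leaf, then $\E_\ell[c(\ell)]\le 2\E_\ell[w(\ell)c(\ell)]\le 2s$, and in the paper's model that hypothesis always holds. There each leaf outputs the value of $f$ that is more likely conditioned on reaching it, so $\mathsf{success}^\mu_f(\ell)=\max\{\Pr[f=1\mid\ell],\Pr[f=0\mid\ell]\}\ge\frac12$. This is exactly the fact the paper's one-line proof invokes, and it matches the success score $\phi_{\mathrm{success}}(p)=\max\{p,1-p\}$ (range $[\frac12,1]$) used later. Combined with $\mathsf{success}^\mu_f(R)\le 1$, it gives $\mathsf{sinfo}^\mu_f(R)\ge\frac12\overline{\mathsf{surprisal}}^\mu(R)\ge\frac12\IC^\mu_\gamma(f)$ for every $R$ with success at least $\gamma$. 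So $R'=R$ works, and Steps~1--2 should simply be dropped.

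You are right that relabelling can increase $\mathsf{sinfo}$. So if you allow arbitrary leaf labels, you are proving a stronger claim: $\mathsf{sIC}$ is then a minimum over more algorithms while $\IC$ is unchanged. In that setting the inequality is not immediate, and your Step~2 leaves a genuine gap.

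The proposed inductive hypothesis $\sum_{\ell\in S_v}\Pr[\ell]c'(\ell)\le 2\sum_{\ell\in S_v}\Pr[\ell]w(\ell)c(\ell)$ is false in general, for two reasons. First, its base case fails for any surviving leaf with $w(\ell)<\frac12$. Your rule only cuts at internal nodes, so such a leaf survives whenever the rest of its parent's subtree beats the majority answer. Second, it fails at every node $v$ where your rule cuts off a subtree whose leaves all have $w=0$: the left side is $\Pr[v]\,c(v)>0$ and the right side is $0$. The surprisal of such cut points would have to be charged globally against success mass elsewhere in the tree, and nothing in your sketch does this.

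The fallback does not repair it either. Cutting when the subtree's success mass is below $\Pr[v]/2$ never loses success in the first place, since the majority label already earns at least $\Pr[v]/2$. Mixing with $R$ would bring back exactly the uncontrolled surprisal on the bad leaves that you are trying to remove.
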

\begin{proof}
    Since $\mathsf{success}_f^{\mu}(\ell) \geq \frac{1}{2}$ for any leaf $\ell$ and $\mathsf{success}^{\mu}_f(R) \leq 1$ for any algorithm $R$,
    \begin{align*}
        \mathsf{sIC}_{\gamma}^{\mu}(f) \geq \min_{R: \mathsf{success}^{\mu}_f(R) \geq \gamma} \E_{\ell \sim L(\mu, R)}\left[\frac{\mathsf{surprisal}^{\mu}(\ell)}{2}\right] = \frac{1}{2} \IC^{\mu}_{\gamma}(f) &\qedhere
    \end{align*}
\end{proof}
Now we show that $\mathsf{sIC}^{\mu}_{\gamma}(f)$ is upper bounded by the logarithm of the expected decision-tree size (up to an additive loss).
\begin{lemma}\label{lem:sicub}
    For any function $f: \{0,1\}^k \rightarrow \{0,1\}$, distribution $\mu$ on $\{0,1\}^k$ and success probability $\gamma$,
    \begin{align*}
        \mathsf{sIC}^{\mu}_{\gamma}(f) &\leq 1+ \log\left(\overline{\mathsf{RSize}}_{\gamma}^{\mu}(f)\right) \\
        \mathsf{sIC}^{\mu^n}_{\gamma^n}(f^n) &\leq n+ \log\left(\overline{\mathsf{RSize}}_{\gamma^n}^{\mu^n}(f^n)\right)
    \end{align*}
\end{lemma}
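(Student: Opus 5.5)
The plan is to bound $\mathsf{sinfo}^{\mu}_f(R)$ directly for a randomized tree $R$ that attains $\overline{\mathsf{RSize}}^{\mu}_{\gamma}(f)$. Such an $R$ has $\sigma := \mathsf{success}^{\mu}_f(R) \geq \gamma$ and $\E_{D\sim R}[\mathsf{size}(D)] = S$. We work tree by tree, then aggregate with a joint-concavity (perspective) argument.

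\emph{Step 1 (per deterministic tree).} Fix $D \in \mathsf{supp}(R)$. For each leaf $\ell$ write $p_\ell = \Pr_\mu[\ell]$ and $s_\ell = \mathsf{success}^{\mu}_f(\ell)$, and let $\sigma_D = \sum_\ell p_\ell s_\ell$. Assuming $\sigma_D > 0$ (otherwise $D$ contributes nothing to the numerator), define the probability distribution $q_\ell = p_\ell s_\ell/\sigma_D$ on the leaves of $D$. Then
\begin{align*}
\sum_\ell p_\ell s_\ell \log\frac{1}{p_\ell} = \sigma_D \sum_\ell q_\ell \log\frac{s_\ell}{\sigma_D q_\ell} = \sigma_D\Big(H(q) + \sum_\ell q_\ell \log s_\ell + \log\frac{1}{\sigma_D}\Big) \leq \sigma_D \log\frac{\mathsf{size}(D)}{\sigma_D}.
\end{align*}
The inequality uses $H(q) \leq \log \mathsf{size}(D)$ and $s_\ell \leq 1$.

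\emph{Step 2 (aggregate over the randomness).} The numerator of $\mathsf{sinfo}^{\mu}_f(R)$ is $\E_{D\sim R}\big[\sum_\ell p_\ell s_\ell \log(1/p_\ell)\big]$. By Step 1 it is at most $\E_D[\sigma_D \log(\mathsf{size}(D)/\sigma_D)]$. The map $(a,b) \mapsto b\log(a/b)$ is the perspective of the concave function $\log$, so it is jointly concave on $a,b>0$. Jensen's inequality (\cref{fact:jensen}, applied to a concave function) therefore gives
\[
\E_D\big[\sigma_D \log(\mathsf{size}(D)/\sigma_D)\big] \leq \sigma \log(S/\sigma).
\]
Dividing by $\sigma$ yields
\[
\mathsf{sinfo}^{\mu}_f(R) \leq \log S + \log(1/\sigma) \leq \log S + \log(1/\gamma).
\]
Since $\gamma > \frac12$, we have $\log(1/\gamma) \leq 1$. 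Hence $\mathsf{sIC}^{\mu}_{\gamma}(f) \leq \mathsf{sinfo}^{\mu}_f(R) \leq 1 + \log \overline{\mathsf{RSize}}^{\mu}_{\gamma}(f)$.

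\emph{Step 3 (the $n$-fold statement).} Apply the identical argument to $f^n$ and $\mu^n$, taking $R$ optimal for $\overline{\mathsf{RSize}}^{\mu^n}_{\gamma^n}(f^n)$. Nothing in Steps 1 and 2 used the structure of $f$, only that leaf success values lie in $[0,1]$. The additive term becomes $\log(1/\sigma) \leq \log(1/\gamma^n) = n\log(1/\gamma) \leq n$.

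\emph{Main obstacle.} The main obstacle is Step 2. A naive bound, dividing $\E[\text{surprisal}]$ by $\sigma$, loses a multiplicative factor $1/\gamma$. Bounding the two terms $\E[\sigma_D\log\mathsf{size}(D)]$ and $\E[\sigma_D\log(1/\sigma_D)]$ separately loses an extra $\log(1/\sigma)$, because the $\sigma_D$-weighted average of $\mathsf{size}(D)$ can exceed $S$. Keeping the two terms together as $\sigma_D\log(\mathsf{size}(D)/\sigma_D)$ and invoking joint concavity is what gives the clean additive $\log(1/\gamma)$.
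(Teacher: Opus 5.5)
Your proof is correct and reaches the same intermediate bound as the paper, $\mathsf{sinfo}^{\mu}_f(R)\le\log(S/\sigma)$, but by a different, two-stage route: a per-tree entropy bound $H(q)\le\log\mathsf{size}(D)$, then joint concavity of the perspective $b\log(a/b)$ to average over the randomness. The paper instead applies Jensen's inequality once, to the concave function $\log$, under a single probability distribution on pairs $(D,\ell)$ with weights $\Pr_R[D]\,p_\ell s_\ell/\sigma$:
\[
\mathsf{sinfo}^{\mu}_f(R)=\sum_{D,\ell}\frac{\Pr_R[D]\,p_\ell s_\ell}{\sigma}\log\frac{1}{p_\ell}\le\log\Big(\frac{1}{\sigma}\sum_{D}\Pr_R[D]\sum_{\ell}s_\ell\Big)\le\log\frac{\E_{D\sim R}[\mathsf{size}(D)]}{\sigma}.
\]
The last step uses only $s_\ell\le 1$. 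So the ``main obstacle'' you describe in Step 2 comes from processing tree by tree. Once the tree and the leaf are treated as one joint random variable, there is no cross-tree averaging to control, and you need neither the perspective function nor a multivariate Jensen inequality; the paper stays within the one-variable \cref{fact:jensen}. Your decomposition does buy something: an explicit bound $\sigma_D\log(\mathsf{size}(D)/\sigma_D)$ on each tree's contribution, and a transparent account of where the additive $\log(1/\sigma)$ comes from. The one loose end is trees with $\sigma_D=0$. These must still be included in the expectation in Step 2, which is fine if you extend $b\log(a/b)$ by $0$ at $b=0$, since the extension remains concave on $\{a>0,\,b\ge 0\}$. Your final steps, $\log(1/\gamma)\le 1$ and $\log(1/\gamma^n)\le n$, match the paper's.
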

\begin{proof}
    Let $R$ be a decision tree computing $f$ such that $\mathsf{success}^{\mu}_f(R) \geq \gamma$. Then
    \begin{align*}
        \mathsf{sinfo}^{\mu}_f(R) &= \frac{\E_{D \sim R}\E_{\ell \sim L(\mu, D)}[\mathsf{success}^{\mu}_f(\ell) \cdot \mathsf{surprisal}^{\mu}(\ell)]}{\mathsf{success}^{\mu}_f(R)} \\
        &= \frac{1}{\mathsf{success}^{\mu}_f(R)}\sum_{D \in R}\left[\Pr_R[D]\sum_{\ell \in  \mathsf{supp}(D)}\left[\Pr_{\mu}[\ell] \cdot \mathsf{success}^{\mu}_f(\ell) \cdot \log\left(\frac{1}{\Pr_{\mu}[\ell]}\right)\right]\right] \\ 
        &\leq \log\left( \frac{\sum_{D \in R}\left[\Pr_R[D]\sum_{\ell \in  \mathsf{supp}(D)}\left[\Pr_{\mu}[\ell] \cdot \mathsf{success}^{\mu}_f(\ell) \cdot \frac{1}{\Pr_{\mu}[\ell]}\right]\right]}{\mathsf{success}^{\mu}_f(R)}\right) \\
        &\leq \log\left(\frac{\sum_{D \in R}\Pr_R[D]\mathsf{size}(D)}{\mathsf{success}^{\mu}_f(R)}\right) \\
        &\leq \log\left(\frac{\E_{D \sim R}[\mathsf{size}(D)]}{\gamma}\right) \\
        &\leq 1 + \log\left(\E_{D \sim R}[\mathsf{size}(D)]\right)
    \end{align*}
    Here the third inequality follows from Jensen's inequality. Minimizing over choices of $R$ satisfying $\mathsf{success}^{\mu}_f(R) \geq \gamma$ on both sides above gives us that
    \begin{align*}
        \mathsf{sIC}^{\mu}_{\gamma}(f) &\leq 1+\log\left(\overline{\mathsf{RSize}}_{\gamma}^{\mu}(f)\right)
    \end{align*}
    A similar proof can be done for $f^n$, and we get the bound from the statement using the fact that $\gamma^n \geq \frac{1}{2^n}$.
\end{proof}

\paragraph{Computing $n$ copies of $f$.} When defining the complexity of computing $n$ copies of $f$, we will consider the two notions of success probability -- per-copy success and overall success. The information complexity for $f^n$ with respect to distribution $\mu^n$ can then be defined as
\begin{align*}
    \IC_{\gamma^{\otimes n}}^{\mu^n}(f^n) &= \min_{R:\forall i~\mathsf{success}^{\mu}_{f_i}(R) \geq \gamma} \mathsf{info}^{\mu^n}(R) &\text{(Information complexity with per-copy success)} \\
    \IC_{\gamma^{n}}^{\mu^n}(f^n) &= \min_{R:\mathsf{success}^{\mu^n}_{f^n}(R) \geq \gamma^n} \mathsf{info}^{\mu^n}(R) &\text{(Information complexity with overall success)} \\
    \mathsf{sIC}_{\gamma^{n}}^{\mu^n}(f^n) &= \min_{R:\mathsf{success}^{\mu^n}_{f^n}(R) \geq \gamma^n} \mathsf{sinfo}^{\mu^n}_{f^n}(R) &\text{(Success-conditioned information with overall success)}
\end{align*}
We can define analogous versions of tree size (and depth) for computing $f^n$. When considering per-copy success probability, we use the subscript $\gamma^{\otimes n}$, and when considering overall success probability, we use the subscript $\gamma^n$.

\section{Information is Amortized Size Complexity}\label{sec:amortization}
In this section, we show that information complexity is equal to amortized (per-copy) logarithm of tree size (\cref{thm:amortization}). This result is analogous to the ``information equals amortized communication'' result in communication complexity.

\subsection{Proof Overview} 

The proof of \cref{thm:amortization} regarding the amortized size complexity of functions uses the per-leaf surprisal based definition of information and is obtained by establishing inequalities in both directions. 

To prove that the amortized size complexity is bounded above by information complexity, we start by considering the tree $B$  computing $f^n$ on $\mu^n$ that is obtained by simply simulating $n$ independent copies of a randomized tree $A$ computing $f$ on $\mu$. This tree $B$ may have size much larger than the desired bound $2^{n \cdot \info^{\mu}(A)}$. But when $n$ is large enough, the law of large numbers lets us categorize all inputs $x \sim \mu^n$ into ``typical'' and ``atypical'' inputs. In turn, this lets us partition the set of leaves of $B$ into typical and atypical leaves. The upper bound is completed by noticing that the set of atypical leaves has measure that tends to $0$ when $n \to \infty$ and that when we truncate $B$ to remove these leaves, we do obtain a tree $B'$ with the desired size complexity.

The lower bound of \cref{thm:amortization} is established using the additivity of information complexity, as established in \cref{lem:additivity}. The proof of this lemma is obtained by noting that for any algorithm $B$ computing $f^n$ on the input distribution $\mu^n$, a smaller algorithm $A$ can be constructed to compute $f$ on the input distribution $\mu$, which places the input at a random location of $B$, and samples the remaining inputs for $B$ itself. Then if $B$ makes a query to a fake input, it learns no information about the real input, since the $n$ inputs are independent conditioned on the query responses seen by $B$ so far. So the average information complexity of $A$ is exactly $1/n$ times the information complexity of $B$.

\subsection{Upper bound on Amortized Size Complexity} 

We start by showing that when computing $n$ copies of $f$ (as $n \rightarrow \infty$) on the distribution $\mu^n$, we can compress the logarithm of worst-case tree size with expected success $\gamma$ for each copy of $f$, to $n$ times the information complexity of a single copy of $f$.
\begin{lemma}\label{lem:sizecomplim}
    For any function $f: \{0,1\}^k \rightarrow \{0,1\}$, distribution $\mu$ on $\{0,1\}^k$ and success probability $\gamma < 1$,
    \begin{align*}
        \lim_{n \rightarrow \infty} \frac{\log\left(\mathsf{RSize}^{\mu^n}_{\gamma^{\otimes n}}(f^n)\right)}{n} \leq \IC^\mu_\gamma(f)
    \end{align*}
\end{lemma}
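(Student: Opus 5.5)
Fix $\eta>0$. By \cref{cor:conticscore} (continuity of $\IC^\mu_\gamma$ in $\gamma$, using $\gamma<1$), pick $\gamma'\in(\gamma,1)$ with $\IC^\mu_{\gamma'}(f)\le \IC^\mu_\gamma(f)+\eta$. Then take a randomized tree $A$ with $\mathsf{success}^\mu_f(A)\ge\gamma'$ and $\info^\mu(A)\le \IC^\mu_\gamma(f)+\eta$. The slack $\gamma'-\gamma$ will pay for truncating atypical leaves.

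Let $B$ run $n$ independent copies of $A$, with independent internal randomness, one per input. A realization of $B$'s randomness is a tuple $(D_1,\dots,D_n)$ of deterministic trees. The leaf reached is $(\ell_1,\dots,\ell_n)$, and its $\mu^n$-probability is $\prod_i \Pr_\mu[\ell_i]$. We view the pairs $(D_i,\ell_i)$ as i.i.d. samples of the pair $(D,L)$, where $D\sim A$ and $L\sim L(\mu,D)$. Using the surprisal identity $\info^\mu(A)=\E_{D}\E_{\ell}[\mathsf{surprisal}^\mu(\ell)]$, the weak law of large numbers applies: with probability $\to1$ over the randomness of $B$ and over $x\sim\mu^n$,
\[
\sum_i \mathsf{surprisal}^\mu(\ell_i)\le n(\info^\mu(A)+\eta).
\]
Since $A$ is supported on finitely many trees, the surprisals take finitely many values and are bounded, so this step is routine. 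Alternatively, one can phrase it via strong typicality of the sequence $(D_i,\ell_i)$ and invoke \cref{lem:condtyp} with $Y=L$ and $X=D$.

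We then truncate. Fix a realization $(D_1,\dots,D_n)$ and form $B'$ from it. Whenever the partial leaf reached so far has $-\log\Pr_{\mu^n}[\text{current node}]$ exceeding $n(\info^\mu(A)+\eta)$, we stop and output arbitrary answers. By monotonicity of surprisal along paths, every leaf of $B'$ other than these stopping points has probability at least $2^{-n(\info^\mu(A)+\eta)}$. We then need to bound the stopping points. Each stopping node's parent has probability at least $2^{-n(\cdots)}$, and the parents are disjoint events, so this bounds the count of parents; each parent has at most $2$ children. Hence $\mathsf{size}(B')\le 2\cdot 2^{n(\info^\mu(A)+\eta)}$ for every realization, which gives the worst-case size bound $\log \mathsf{RSize}/n\le \IC^\mu_\gamma(f)+2\eta+o(1)$.

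The main obstacle is the per-copy success guarantee. For each $i$, copy $i$ of $B$ succeeds with probability $\ge\gamma'$, and $B'$ differs from $B$ only on the truncated event $E$, where $\Pr[E]\to0$. So copy $i$'s success in $B'$ is at least $\gamma'-\Pr[E]\ge\gamma$ for large $n$, uniformly in $i$, since $\Pr[E]$ does not depend on $i$. One subtlety must be handled: the law of large numbers needs the expected surprisal sum, which is exactly $n\,\info^\mu(A)$ by independence and the additivity of surprisal. Also, the concentration must be taken jointly over internal randomness and input; that is fine because the truncation is applied per realization and the size bound holds for every realization. Letting $n\to\infty$ and then $\eta\to0$ gives the lemma, interpreting the limit as a $\limsup$.
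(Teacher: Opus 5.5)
Your route differs from the paper's and is sound, except for one counting step.

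\textbf{Comparison with the paper.} The paper handles the $n$-fold tree $A^n$ as follows:
\begin{itemize}
\item it conditions on the randomness string $r=(r_1,\dots,r_n)$ being strongly $\delta'$-typical, outputting $0^n$ otherwise;
\item it keeps only leaf sequences that are conditionally $\delta$-typical given $r$;
\item it bounds their number by $|T^r_\delta|\le 2^{n(1+c\delta)H(L|R)}$ using \cref{lem:condtyp}.
\end{itemize}
You instead apply the weak law of large numbers to the additive leaf surprisal $\mathsf{surprisal}^{\mu^n}(\ell)=\sum_i \mathsf{surprisal}^\mu(\ell_i)$. You then truncate every realization at the threshold $T=n(\info^\mu(A)+\eta)$. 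In effect you combine concentration with the same low-probability-leaf truncation used in the proof of \cref{thm:compression}. This gives a more elementary argument that needs only additivity and monotonicity of surprisal. The size bound holds for every realization, so atypical randomness need not be discarded. By monotonicity, the error event is exactly $E=\{\mathsf{surprisal}^{\mu^n}(\ell)>T\}$, which is the same for every copy $i$. Your success accounting and the continuity step are correct.

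\textbf{The gap.} You bound the stopping points by claiming their parents are disjoint events, but this fails in general: such parents can be nested along a single path. Take a chain of queries, each splitting off a branch of probability below $2^{-T}$ while the other branch stays above $2^{-T}$. Every node of that chain is a parent of a stopping point. For a single caterpillar-shaped tree, the number of leaves after truncation then grows with the depth while $2^T$ stays fixed. So $\mathsf{size}(B')\le 2\cdot 2^{T}$ does not follow from your argument. Either of two easy repairs works.

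(a) Internal nodes at a fixed depth are disjoint events. The product tree has depth at most $nk$, assuming no repeated queries. Hence there are at most $nk\cdot 2^{T}$ parents and $\mathsf{size}(B')\le (2nk+1)\,2^{T}$. The extra factor vanishes in the limit since $\log(2nk+1)/n\to 0$.

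(b) Change the truncation rule. At a node of probability at least $2^{-T}$ with exactly one light child, skip the query and continue into the heavy child. Turn the node into a leaf only when both children are light. Then every leaf of $B'$ is a node of $B$ of probability at least $2^{-T}$, and no two are in ancestor relation. Hence $\mathsf{size}(B')\le 2^{T}$, and $B'$ still agrees with $B$ outside $E$.

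The paper's step ``remove $\ell$ from $D_r$'' tacitly needs the same kind of contraction as (b) to justify $\mathsf{size}(D_r^{\mathsf{typ}})\le |T^r_\delta|$.
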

\begin{proof}
     For arbitrary $\epsilon \in (0, 1-\gamma)$, let $A$ be a randomized decision tree computing $f$ on input distribution $\mu$ with success probability $\gamma + \epsilon$. Denote the randomness of $A$ by the random variable $R$, and the sampled leaf by $L$. Then we know that $\mathsf{info}^{\mu}(A) = \rI(L;X|R) = H(L|R) - H(L|XR) = H(L|R)$. Define the tree $A^n$ which runs $n$ independent copies of $A$. Clearly, $A^n$ computes $f^n$ on the distribution $\mu^n$, with per-copy success probability $\gamma + \epsilon$. Let $0 < \delta' < \delta$. Then we define a truncated randomized decision tree $A^n_{\delta, \delta'}$ which is constructed as follows:
    \begin{enumerate}
        \item Sample randomness $r = r_1, \ldots, r_n$ according to the randomized tree $A^n$. Denote the set of strongly $\delta'$-typical sequences by $T_{\delta'}$. If $r \notin T_{\delta'}$, the corresponding tree in the support of $A^n_{\delta, \delta'}$ makes no queries and outputs $0^n$. Otherwise, let the corresponding decision tree in support of $A^n$ be $D_r$.
        \item Let $\ell = \ell_1, \ldots, \ell_n$ be sampled according to the distribution $L(\mu^n, D_r)$. Denote the set of conditional $\delta$-typical sequences by $T^r_{\delta}$. If $\ell \notin T^r_{\delta}$, remove $\ell$ from $D_r$. Put the resulting tree $D_r^{\mathsf{typ}}$ in the support of $A^n_{\delta, \delta'}$.
    \end{enumerate}
    We know from the law of large numbers that $\lim_{n \rightarrow \infty}\Pr_{r \sim A^n}[r \in T_{\delta'}] = 1$. Therefore, for every $\epsilon, \delta' > 0$, there exists sufficiently large $n_0 \in \mathbb{N}$ such that for all $n \geq n_0$,
    \begin{align*}
        \Pr_{r \sim A^{n}}[r \in T_{\delta'}] \geq 1 - \frac{\epsilon}{2}
    \end{align*}
    Further, conditioned on the fact that $r \in T_{\delta'}$, we know from the conditional typicality lemma (\cref{lem:condtyp}) that $\lim_{n \rightarrow \infty}\Pr_{\ell \sim D_r}[\ell \in T^r_{\delta}] = 1$. Therefore, for every $\epsilon > 0$ and $\delta > \delta' > 0$, and $r \in T_{\delta'}$ there exists sufficiently large $n_1 \in \mathbb{N}$ such that for all $n \geq n_1$,
    \begin{align*}
        \Pr_{\ell \sim D_r}[\ell \in T^r_{\delta}] \geq 1 - \frac{\epsilon}{2}
    \end{align*}
    In particular, this implies that for sufficiently large $n \geq \max\{n_0, n_1\}$, $A^n_{\delta, \delta'}$ computes $f^n$ on the distribution $\mu^n$, with per-copy success probability at least $\gamma + \epsilon - \frac{\epsilon}{2} - \frac{\epsilon}{2} = \gamma$. Moreover, conditioned on the fact that $r \in T_{\delta'}$, for sufficiently large $n$ and a constant $c > 0$,
    \begin{align*}
        &|T^r_{\delta}| \leq 2^{n(1+c\delta)H(L|R)} = 2^{n(1+c\delta)\mathsf{info}^{\mu}(A)} \\
        \implies &\mathsf{size}(D^{\mathsf{typ}}_r) \leq 2^{n(1+c\delta)\mathsf{info}^{\mu}(A)}
    \end{align*}
    In particular,
    \begin{align*}
        \lim_{\delta \rightarrow 0}\lim_{n \rightarrow \infty}\frac{\log\left(\max_{D \in \mathsf{supp}(A^n_{\delta, \delta'})}[\mathsf{size}(D)]\right)}{n} \leq \lim_{\delta \rightarrow 0}\lim_{n \rightarrow \infty}\frac{\log\left(2^{n(1+c\delta)\mathsf{info}^{\mu}(A)}\right)}{n}  \leq \mathsf{info}^{\mu}(A)
    \end{align*}
    Therefore, we get that for every $\epsilon \in (0, 1-\gamma)$,
    \begin{align*}
        \lim_{n \rightarrow \infty} \frac{\log\left(\mathsf{RSize}^{\mu^n}_{\gamma^{\otimes n}}(f^n)\right)}{n} \leq \IC^\mu_{\gamma+\epsilon}(f)
    \end{align*}
    Using continuity of information complexity (\cref{cor:conticscore}) in the success probability, we get that
    \begin{align*}
        \lim_{n \rightarrow \infty} \frac{\log\left(\mathsf{RSize}^{\mu^n}_{\gamma^{\otimes n}}(f^n)\right)}{n} \leq \lim_{\epsilon \rightarrow 0} \IC^\mu_{\gamma+\epsilon}(f) = \IC^\mu_\gamma(f) &\qedhere
    \end{align*}
\end{proof}

\subsection{Lower bound on Amortized Size Complexity} 

We now want to show that this is indeed the smallest possible tree size in the amortized setting. We establish this by showing that the information complexity $\IC$ satisfies an additivity property. That is, the information complexity for computing $n$ copies of a function $f$ with respect to $\mu^n$ and per-copy success probability $\gamma$ is equal to $n$ times the information complexity for computing the function $f$ with respect to $\mu$ with success probability $\gamma$.

\begin{lemma}\label{lem:qicdsub}
    Let $T_1$ and $T_2$ be deterministic decision trees running on input distributions $\mu_1$ and $\mu_2$ on $\{0,1\}^k$ respectively. Let $T$ be a decision tree that runs $T_1$ and $T_2$ on inputs sampled from $\mu = \mu_1 \otimes \mu_2$. Then
    \begin{align*}
        \info^\mu(T) = \info^{\mu_1}(T_1) + \info^{\mu_2}(T_2)
    \end{align*}
\end{lemma}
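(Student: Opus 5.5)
The plan is to pass to the surprisal characterization of information cost. By the Proposition identifying $\info^\mu(D)$ with $\overline{\mathsf{surprisal}}^\mu(D)$, we have $\info^\mu(T) = H(L)$, where $L$ is the leaf of $T$ reached on $X=(X_1,X_2)\sim\mu_1\otimes\mu_2$. Similarly $\info^{\mu_j}(T_j)=H(L_j)$, where $L_j$ is the leaf of $T_j$ reached on $X_j$. Everything therefore reduces to a statement about entropies of leaf random variables.

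The first step is to fix what ``runs $T_1$ and $T_2$'' means. The natural reading is that $T$ queries bits of the first block exactly as $T_1$ would and bits of the second block exactly as $T_2$ would, possibly interleaving the two executions in some order. The conclusion of this step is that a leaf $\ell$ of $T$ determines, and is determined by, the pair $(\ell_1,\ell_2)$ of leaves reached in $T_1$ and $T_2$. The injectivity direction holds because the path in $T$ is determined by the answers to the queries, and those answers are exactly the transcripts of $T_1$ and $T_2$. Hence $L$ is in bijection with $(L_1,L_2)$ as random variables, and $H(L)=H(L_1L_2)$.

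The second step uses independence. $L_1$ is a function of $X_1$ alone and $L_2$ is a function of $X_2$ alone, and $X_1,X_2$ are independent under $\mu_1\otimes\mu_2$. So $L_1,L_2$ are independent and, by additivity of entropy, $H(L_1L_2)=H(L_1)+H(L_2)$. Equivalently, one can argue per leaf with the Additivity property of surprisal: $\Pr_\mu[\ell]=\Pr_{\mu_1}[\ell_1]\Pr_{\mu_2}[\ell_2]$, so $\mathsf{surprisal}^\mu(\ell)=\mathsf{surprisal}^{\mu_1}(\ell_1)+\mathsf{surprisal}^{\mu_2}(\ell_2)$. Taking expectations over $\ell\sim L(\mu,T)$, whose marginals on $\ell_1,\ell_2$ are $L(\mu_1,T_1)$ and $L(\mu_2,T_2)$, gives the claim.

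The main obstacle is the first step: making precise that an interleaved execution still has leaves corresponding bijectively to pairs of leaves. One subtlety is that if the interleaving order depends adaptively on answers from both blocks, two different paths might produce the same pair $(\ell_1,\ell_2)$. This cannot happen, because a leaf of $T$ is the set of query/answer constraints it imposes, which is exactly $\ell_1\wedge\ell_2$. Different leaves of $T$ are disjoint events, so equal pairs would force equal leaves. With that settled, the rest is the routine independence computation above.
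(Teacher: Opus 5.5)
Your proof is correct and follows the same route as the paper. The paper likewise identifies the leaf $L$ of $T$ with the pair $(L_{T_1},L_{T_2})$ and then splits $\rI(L_{T_1},L_{T_2};\rX_1,\rX_2)$ using the independence of $(L_{T_1},\rX_1)$ and $(L_{T_2},\rX_2)$. Working with $H(L)$ instead of mutual information is equivalent here because $H(L\mid X)=0$, and your argument that interleaved executions still give a bijection with pairs of leaves fills in a step the paper only asserts.
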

\begin{proof}
We compute the information cost of $T$.
    \begin{align*}
        \info^\mu(T) &= \rI(L;\rX_1,\rX_2) \\
        &= \rI(L_{T_1},L_{T_2};\rX_1,\rX_2) \\
        &= \rI(L_{T_1};\rX_1) + \rI(L_{T_2};\rX_2) &\text{(Because $(L_{T_1}, \rX_1)$ is independent of $(L_{T_2}, \rX_2)$)} \\
        &= \info^{\mu_1}(T_1) + \info^{\mu_2}(T_2) &\qedhere
    \end{align*}
\end{proof}

\begin{lemma}\label{lem:qicdslb}
    Suppose $T$ is a deterministic decision tree running on input sampled from the distribution $\mu = \mu_1 \otimes \mu_2$ on $\{0,1\}^k \times \{0,1\}^k$. Define the randomized decision tree $T_1(\cdot)$ which runs $T(\cdot, \mu_2)$ on inputs sampled from $\mu_1$ and $T_2(\cdot)$ which runs $T(\mu_1, \cdot)$ on inputs sampled from $\mu_2$. Then
    \begin{align*}
        \info^{\mu_1}(T_1) + \info^{\mu_2}(T_2) = \info^\mu(T)
    \end{align*}
\end{lemma}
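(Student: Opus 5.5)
We will express all three information costs as mutual informations between leaves and inputs, then split $\info^\mu(T)$ with the chain rule and match each piece to one of the simulated trees. Let $\rX_1 \sim \mu_1$ and $\rX_2 \sim \mu_2$ be independent, and let $L$ be the leaf of $T$ reached on $(\rX_1,\rX_2)$. Since $T$ is deterministic, $L$ is a function of $(\rX_1,\rX_2)$, so
\[
\info^\mu(T) = \rI(L;\rX_1,\rX_2) = \rI(L;\rX_1) + \rI(L;\rX_2 \mid \rX_1)
\]
by the chain rule.

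The randomized tree $T_1$ uses as its internal randomness a sample $Y_2 \sim \mu_2$ and runs the deterministic tree $T(\cdot, Y_2)$ on its real input $\rX_1$. Queries $T$ makes to the second block are answered from $Y_2$, so they are hard-wired once $Y_2$ is fixed. The leaf of $T(\cdot,Y_2)$ is therefore determined by $L$ together with $Y_2$. Conversely, given $Y_2$, the leaf $L$ of $T$ determines the leaf of $T(\cdot,Y_2)$ and vice versa.

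By definition of the information cost of a randomized tree,
\[
\info^{\mu_1}(T_1) = \E_{y_2}\big[\rI(L;\rX_1 \mid Y_2 = y_2)\big] = \rI(L;\rX_1 \mid \rX_2),
\]
after renaming $Y_2$ as $\rX_2$, since the joint distribution is identical. Symmetrically, $T_2$ samples $Y_1\sim\mu_1$ as randomness and runs $T(Y_1,\cdot)$, which gives $\info^{\mu_2}(T_2) = \rI(L;\rX_2 \mid \rX_1)$.

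It therefore suffices to show $\rI(L;\rX_1 \mid \rX_2) = \rI(L;\rX_1)$. This is the one step where independence, combined with the structure of decision trees, really matters; in general conditioning can change mutual information. Using $H(L\mid \rX_1,\rX_2)=0$, we have $\rI(L;\rX_1\mid\rX_2) = H(L\mid \rX_2)$ and $\rI(L;\rX_1) = H(L) - H(L\mid\rX_1)$. The required identity is then equivalent to
\[
H(L) = H(L\mid \rX_1) + H(L\mid \rX_2),
\]
which we verify via the surprisal view.

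A leaf $\ell$ of $T$ corresponds to a subcube $\ell = \ell_1 \times \ell_2$. Independence gives $\Pr[\ell] = \mu_1(\ell_1)\mu_2(\ell_2)$, and conditioned on $\rX_2 = x_2$ the probability of $\ell$ is $\mu_1(\ell_1)\mathbf 1[x_2\in\ell_2]$. So
\[
H(L\mid\rX_2) = \E_{\ell}\big[-\log \mu_1(\ell_1)\big], \qquad H(L\mid\rX_1) = \E_\ell\big[-\log\mu_2(\ell_2)\big],
\]
where in each case the expectation is over $\ell \sim L(\mu,T)$. By the additivity of surprisal shown in \cref{sec:icinquery}, these two terms sum to $\E_\ell[-\log\Pr[\ell]] = H(L)$.

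The main obstacle is getting $H(L\mid\rX_2)$ right. One must check that, for fixed $x_2$, the leaves of $T(\cdot,x_2)$ reached with positive probability are exactly the leaves $\ell$ of $T$ with $x_2\in\ell_2$, and that their conditional probabilities are $\mu_1(\ell_1)$. This holds because the leaves of $T$ partition the cube, so the sets $\{\ell_1 : x_2 \in \ell_2\}$ partition $\{0,1\}^k$.
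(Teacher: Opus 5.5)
Your argument is correct, and it takes a genuinely different route from the paper. The paper expands both $\info^\mu(T)=\rI(L;\rX_1,\rX_2)$ and $\info^{\mu_1}(T_1)=\rI(L;\rX_1\mid\rX_2)$ query by query with the chain rule. It tags each query with the block $b_i$ it touches, and uses \cref{fact:indepinputs} (the two blocks stay independent given the partial transcript) to show that each term $\rI(q_i;\rX_1,\rX_2\mid p_{i-1},l_i)$ is credited to exactly one of $T_1,T_2$.

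You never look inside the tree. You use only that each leaf is a product set $\ell_1\times\ell_2$, so $\Pr[L=\ell\mid\rX_2=x_2]=\mu_1(\ell_1)\mathbf 1[x_2\in\ell_2]$. From this, $H(L\mid\rX_2)=\E_\ell[\mathsf{surprisal}^{\mu_1}(\ell_1)]$, and additivity of surprisal gives $H(L)=H(L\mid\rX_1)+H(L\mid\rX_2)$. Your chain-rule step through $\rI(L;\rX_1)$ is actually unnecessary: you already have $\info^{\mu_1}(T_1)+\info^{\mu_2}(T_2)=H(L\mid\rX_2)+H(L\mid\rX_1)$ and $\info^\mu(T)=H(L)$ directly.

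Your route is shorter and fits the paper's per-leaf surprisal viewpoint. It also extends verbatim to $n$ blocks as $H(L)=\sum_{i} H(L\mid \rX_{-i})$, with $\rX_{-i}$ denoting all blocks other than $i$. That is exactly what the lower-bound direction of \cref{lem:additivity} uses, and it avoids iterating a two-block statement. The paper's route needs more bookkeeping, but it makes the simulation intuition explicit at the level of individual queries: a query to the simulated block contributes nothing to the information about the real input. Both proofs ultimately rest on the same fact---under a product distribution, the set of inputs reaching any node of $T$ is a product set. The paper applies it at every internal node; you apply it only at the leaves.
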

\begin{proof}
    Suppose $T$ makes $t$ queries in the worst case. By definition, $T_1$ samples the inputs from $\mu_2$ to feed into $T$ using its own randomness, and similarly for $T_2$. Let $l_i$ denote the location of the $i^{th}$ query of $T$, and $q_i$ denote the query response for the $i^{th}$ query. Let $b_{i}$ in $l_i$ denote whether the query is made to $X_1$ or $X_2$. Note that if $T$ has already reached a leaf before the $i^{th}$ query, then $l_i, q_i, b_i$ will all be the empty string. Let $p_i$ denote the path on the decision tree up to the $i^{th}$ query, i.e., $p_i = l_1, q_1,\ldots, l_{i-1}, q_{i-1}, l_{i}, q_{i}$. Let $(p_{i-1}, l_i):b_i$ denote that the choice of $p_{i-1}, l_i$ is consistent with $b_i$. Then
    \begin{align*}
        \info^\mu(T) &= \rI(L; \rX_1,\rX_2) \\
        &= \sum_{i=1}^t \rI(q_i; \rX_1,\rX_2|l_1, q_1,\ldots, l_{i-1}, q_{i-1}, l_{i}) \\
        &= \sum_{i=1}^t \E_{(p_{i-1}, l_i)}[\rI(q_i; \rX_1, \rX_2|p_{i-1}, l_{i})] \\
        &= \sum_{i=1}^t \E_{b_i}\left[\E_{(p_{i-1}, l_i):b_i}[\rI(q_i; \rX_1, \rX_2|p_{i-1}, l_{i}, b_i)]\right] \\
        &= \sum_{i=1}^t \E_{b_i}\left[\E_{(p_{i-1}, l_i):b_i}[\rI(q_i; \rX_{b_i}|p_{i-1}, l_{i}, b_i)]\right]
    \end{align*}
    Here the last equality follows from \cref{fact:indepinputs} which says that conditioned on the partial assignment $p_{i-1}$, $\rX_1$ and $\rX_2$ are independent. So if $l_i$ is a query to $\rX_1$, $q_{i}$ is independent of $\rX_2$. Let us now look at the information cost of the trees $T_1$ and $T_2$.
    \begin{align*}
        \info^{\mu_1}(T_1) &= \rI(L;\rX_1|\rX_2) \\
        &= \sum_{i=1}^t \rI(q_{i};\rX_1|l_1, q_1,\ldots, l_{i-1}, q_{i-1}, l_{i},\rX_2) \\
        &= \sum_{i=1}^t \E_{p_{i-1}, l_i} [\rI(q_{i};\rX_1|p_{i-1}, l_i, \rX_2)] \\
        &= \sum_{i=1}^t \E_{b_i}\left[\E_{(p_{i-1}, l_i):b_i} [\rI(q_{i};\rX_1|p_{i-1}, l_i, b_i, \rX_2)]\right] \\
        &= \sum_{i=1}^t \left(\Pr[b_i = 1] \cdot \E_{(p_{i-1}, l_i)} [\rI(q_{i};\rX_1|p_{i-1}, l_i, b_i = 1)]\right)
    \end{align*}
    Here the last equality again follows from \cref{fact:indepinputs}, because conditioned on $b_i = 2$, $q_i$ is independent of $X_1$, and conditioned on $b_i = 1$, $X_2$ is independent of both $q_i$ and $X_1$.
    Similarly,
    \begin{align*}
        \info^{\mu_2}(T_2) = \rI(L;\rX_2|\rX_1) = \sum_{i=1}^t \left(\Pr[b_i = 2] \cdot \E_{(p_{i-1}, l_i)} [\rI(q_{i};\rX_2|p_{i-1}, l_i, b_i = 2)]\right)
    \end{align*}
    Therefore,
    \begin{align*}
        \info^{\mu_1}(T_1) + \info^{\mu_2}(T_2) = \info^{\mu}(T) &\qedhere
    \end{align*}
\end{proof}

\additivity*
\begin{proof}
    We first show that $\IC_{\gamma^{\otimes n}}^{\mu^n}(f^n) \leq n\IC_{\gamma}^{\mu}(f)$. Suppose $A$ is a randomized decision tree computing $f$ on distribution $\mu$, with success probability at least $\gamma$. Then define the decision tree $B$ which runs $n$ independent copies of $A$ on inputs sampled from $\mu^n$. Then clearly per-copy success probability of $B$ is $\gamma$. Moreover,
    \begin{align*}
        \info^{\mu^n}(B) &= \E_{r_1, \ldots, r_n} [\info^{\mu^n}(A_{r_1}\ldots A_{r_n})] \\
        &= \sum_{i=1}^n \E_{r_i}[\info^{\mu}(A_{r_i})] &\text{(From \cref{lem:qicdsub})}  \\
        &= n\cdot\info^{\mu}(A)
    \end{align*}
    Since $A$ is an arbitrary algorithm, we get that $\IC_{\gamma^{\otimes n}}^{\mu^n}(f^n) \leq n\IC_{\gamma}^{\mu}(f)$.\\
    \\
    \noindent For the other direction, suppose $B$ is a randomized decision tree computing $f^n$ on inputs sampled from $\mu^n$, with per-copy success probability at least $\gamma$. Let $A_i$ be the algorithm that runs $B$ on inputs sampled from $\mu$ in the $i^{th}$ location, and samples all the remaining inputs from $\mu$ itself. Then the success probability of $A_i$ is at least $\gamma$. When the deterministic tree in the support of $B$ is $D$, we call the corresponding randomized algorithm of $A_i$ as $D_i$. Now we analyze the information cost,
    \begin{align*}
        \info^{\mu^n}(B) &= \E_{D \sim B} \left[\info^{\mu^n}(D)\right] \\
        &= \E_{D \sim B} \left[\sum_{i=1}^n \info^{\mu}(D_i)\right] &\text{(From \cref{lem:qicdslb})} \\
        &= \sum_{i=1}^n \E_{D \sim B} [\info^{\mu}(D_i)] \\
        &= \sum_{i=1}^n \info^{\mu}(A_i)
    \end{align*}
    Now let $A$ be the algorithm that samples $i$ uniformly at random from $[n]$ and runs $A_i$. Then the success probability of $A$ is at least $\gamma$, and $\info^{\mu}(A) = \E_{i \sim [n]}[\info^{\mu}(A_i)] = \frac{\info^{\mu^n}(B)}{n}$. Since $B$ is an arbitrary algorithm, we get that $\IC_{\gamma^{\otimes n}}^{\mu^n}(f^n) \geq n\IC_{\gamma}^{\mu}(f)$.
\end{proof}

\noindent Using the additivity property, we get that information complexity equals amortized (per-copy) logarithm of tree size.

\amortization*
\begin{proof}
    The claim follows because
    \begin{align*}
        \IC^\mu_\gamma(f) \leq \lim_{n \rightarrow \infty} \frac{\IC_{\gamma^{\otimes n}}^{\mu^n}(f^n)}{n} \leq \lim_{n \rightarrow \infty} \frac{\log\left(\mathsf{RSize}^{\mu^n}_{\gamma^{\otimes n}}(f^n)\right)}{n} \leq \IC^\mu_\gamma(f) &\qedhere
    \end{align*}
\end{proof}

\section{Information Allows for Tree Size Compression}\label{sec:compression}
In this section, we show that even for computing a single copy of $f$ with constant success probability, we can compress the logarithm of the worst-case tree size to the information complexity (up to a constant factor), with a small loss in the success probability.

\subsection{Proof Overview} 

The per-leaf definition of information complexity leads to a particularly concise proof of \cref{thm:compression} on the compression of tree size complexity.
First, we show that eliminating trees with large information cost from the support of a randomized algorithm does not increase the algorithm's error very much.  Then, in the main part of the argument, we show that we can truncate the remaining trees to eliminate all of the leaves that are reached with very small probability (and thus have large surprisal values). This gives us trees with small size and only increases the error of the overall algorithm by another small amount.

\subsection{Proof of \cref{thm:compression}}

\compression*
\begin{proof}
    Let $A$ be a randomized decision tree computing $f$ on input distribution $\mu$ with success probability $\gamma$. For any fixed randomness $r$, let the corresponding tree in the support of $A$ be $A_r$. Let $\rI(X; L|R = r) = i_r$. If $i_r > 2\info^{\mu}(A)/\epsilon$, we remove $A_r$ from the support of $A$ and output $0$. This happens with probability at most $\epsilon/2$ over the choice of randomness $r$. Otherwise, we truncate $A_r$, to remove all leaves $\ell$ which have probability $p_\ell \leq 2^{-2i_r/\epsilon}$. Denote the set of remaining leaves by $P_r$. Then we have that,
    \begin{align*}
        i_r = \sum_{\ell \in A_r} p_\ell \log \left(\frac{1}{p_\ell}\right) \geq \sum_{\ell \notin P_r} p_\ell \log \left(\frac{1}{p_\ell}\right) \geq \frac{2i_r}{\epsilon} \sum_{\ell \notin P_r} p_\ell
    \end{align*}
    Therefore, $\sum_{\ell \notin P_r} p_\ell \leq \epsilon/2$. Moreover, we have that every $\ell \in P_r$ has $p_\ell \geq 2^{-2i_r/\epsilon}$, so
    \begin{align*}
        |P_r| \leq 2^{2i_r/\epsilon} \leq 2^{(2/\epsilon)^2 \info^{\mu}(A)}
    \end{align*}
    Therefore, for the resulting tree $A'$, we have that
    \begin{align*}
        \max_{D \in \mathsf{supp}(A')}[\mathsf{size}(D)] &\leq 2^{(2/\epsilon)^2 \info^{\mu}(A)}
    \end{align*}
    Moreover, $A'$ computes $f$ on $\mu$ with success probability at least $\gamma - \epsilon/2 - \epsilon/2 = \gamma - \epsilon$. Since the choice of $A$ was arbitrary, the claim follows.
\end{proof}

\subsection{Direct Sum Theorem for Tree Size}

As a corollary of the compression theorem, we get a direct sum theorem for randomized tree size. Direct sum theorems for query complexity have previously been studied in \cite{JKS10, BB19, BKST24} and for communication complexity in \cite{BBCR13}.

\begin{corollary}\label{cor:sizeds}
    For any function $f: \{0,1\}^k \rightarrow \{0,1\}$, distribution $\mu$ on $\{0,1\}^k$, success probability $\gamma$ and $0 < \epsilon \leq \gamma - \frac{1}{2}$,
    \begin{align*}
        n\log\left(\overline{\mathsf{RSize}}_{\gamma}^{\mu}(f)\right) \geq \log\left(\overline{\mathsf{RSize}}_{\gamma^{\otimes n}}^{\mu^n}(f^n)\right) \geq \frac{n\epsilon^2}{4}\log\left(\mathsf{RSize}_{\gamma-\epsilon}^{\mu}(f)\right)
    \end{align*}
\end{corollary}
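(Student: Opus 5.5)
The upper bound follows from the product construction, and the lower bound chains Lemma~\ref{lem:treesizebound} at the level of $f^n$, the additivity Lemma~\ref{lem:additivity}, and the compression Theorem~\ref{thm:compression}.

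\emph{Upper bound.} Let $A$ be a randomized tree for $f$ with $\mathsf{success}^\mu_f(A)\ge\gamma$ and $\E_{D\sim A}[\mathsf{size}(D)]=\overline{\mathsf{RSize}}^\mu_\gamma(f)$. Let $A^n$ run $n$ independent copies of $A$, each copy on its own input block. Then $A^n$ has per-copy success at least $\gamma$ on $\mu^n$. A deterministic tree $D_1\cdots D_n$ in its support is obtained by running $D_1$, then $D_2$ at every leaf, and so on, so its number of leaves is $\prod_i\mathsf{size}(D_i)$. By independence,
\[
\E[\mathsf{size}(D_1\cdots D_n)]=\bigl(\overline{\mathsf{RSize}}^\mu_\gamma(f)\bigr)^n,
\]
and taking logarithms gives the left inequality.

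\emph{Lower bound.} First apply the argument of Lemma~\ref{lem:treesizebound} to $f^n$ under per-copy success. That proof only uses $\mathsf{info}^{\mu^n}(D)=H(L)\le\log\mathsf{size}(D)$ and Jensen's inequality, so it does not depend on the success criterion. Hence
\[
\IC^{\mu^n}_{\gamma^{\otimes n}}(f^n)\le\log\overline{\mathsf{RSize}}^{\mu^n}_{\gamma^{\otimes n}}(f^n).
\]
Lemma~\ref{lem:additivity} rewrites the left side as $n\,\IC^\mu_\gamma(f)$. Theorem~\ref{thm:compression} gives
\[
\IC^\mu_\gamma(f)\ge\frac{\epsilon^2}{4}\log\mathsf{RSize}^\mu_{\gamma-\epsilon}(f).
\]
Chaining these three bounds yields
\[
\log\overline{\mathsf{RSize}}^{\mu^n}_{\gamma^{\otimes n}}(f^n)\ge\frac{n\epsilon^2}{4}\log\mathsf{RSize}^\mu_{\gamma-\epsilon}(f).
\]

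The only point needing care is the product-tree step. We must check that sequential composition multiplies leaf counts, and that the per-copy success criterion is preserved, which holds because the copies are independent. Everything else is a direct citation of earlier results.
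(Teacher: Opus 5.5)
Your proposal is correct and matches the paper's proof. The upper bound runs $n$ independent copies and uses $\E[\prod_i \mathsf{size}(D_i)] = \prod_i \E[\mathsf{size}(D_i)]$. The lower bound chains \cref{thm:compression}, \cref{lem:additivity}, and \cref{lem:treesizebound} applied to $f^n$ under per-copy success. Your remark that \cref{lem:treesizebound} does not depend on which success criterion defines the feasible set is used only implicitly in the paper, so stating it is a good addition.
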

\begin{proof}
    To see the first inequality, let $A$ be a randomized tree computing $f$ on distribution $\mu$ with success probability at least $\gamma$. Then we define the tree $B$ for computing $f^n$ on $\mu^n$, which runs $n$ independent copies of $A$. Then the per-copy success probability of $B$ is at least $\gamma$, and the expected size of $B$ is $\E_{r_1, \ldots, r_n}[\prod_{i=1}^n \mathsf{size}(A_{r_i})] = \prod_{i=1}^n \E_{r_i}[\mathsf{size}(A_{r_i})] = \E_{D \sim A}[\mathsf{size}(D)]^n$. Since $A$ is an arbitrary algorithm, taking $\log$ on both sides gives us the first inequality.
    For the second inequality,
    \begin{align*}
        \log(\mathsf{RSize}_{\gamma-\epsilon}^{\mu}(f)) &\leq \left(\frac{2}{\epsilon}\right)^2 \IC_{\gamma}^{\mu}(f) &\text{(From \cref{thm:compression})} \\
        &= \left(\frac{2}{\epsilon}\right)^2 \frac{\IC_{\gamma^{\otimes n}}^{\mu^n}(f^n)}{n} &\text{(From \cref{lem:additivity})} \\
        &\leq \left(\frac{2}{\epsilon}\right)^2 \frac{\log(\overline{\mathsf{RSize}}_{\gamma^{\otimes n}}^{\mu^n}(f^n))}{n} &\text{(From \cref{lem:treesizebound})}&\qedhere
    \end{align*}
\end{proof}

\section{Direct Product Theorems}
In this section, we prove a direct product theorem for success-conditioned information complexity. As a corollary, we will recover a direct product theorem for randomized tree size complexity. Our proof follows that of \cite{BB25}, who showed a direct product theorem for randomized query complexity.

In our work, we abstract out the properties of a \emph{cost measure} for the nodes of a tree, which generalizes the depth of a decision tree (i.e., randomized query complexity), such that the techniques of \cite{BB25} give a direct product theorem for the success-conditioned version of any cost measure which satisfies these properties. We then show that the surprisal cost measure $\mathsf{surprisal}^{\mu}(\ell)$ described earlier satisfies these properties, and thus obtain a direct product theorem for success-conditioned information complexity.

In order to highlight the properties of cost and score measures used in the proof, for most of this section we will state our proofs using generalized cost measures (which we define in \cref{sec:cost}) and score measures (which was defined in \cite{BB25}, we review it in \cref{sec:score}). At the end of this section, we will specialize our theorem statements to success-conditioned information complexity, by using surprisal cost and success probability score.

The proof of \cite{BB25} has two main steps: they first show that a certain measure called discounted score satisfies a perfect tensorization lemma, and then they show an equivalence lemma for the success-conditioned randomized query complexity and discounted score. We show that their tensorization lemma for discounted score works for the setting of our generalized cost measures, and so does their equivalence lemma. We start by describing the generalized cost and score measures, their properties that we will use, and then the discounted score.

\subsection{Score Measures}\label{sec:score}
In this section, we review the definition of score measures from \cite{BB25}. In our work, we only use the success probability score, the generalized score measures are reviewed as an abstraction to highlight the specific properties of success probability used in the proof. A function $\phi:[0,1] \rightarrow \mathbb{R}$ is a score function if it satisfies the following properties:
\begin{itemize}
    \item Boundedness: The range of $\phi$ is $\left[\frac{1}{2}, 1\right]$.
    \item Normalization: $\phi(0) = \phi(1) = 1$, $\phi\left(\frac{1}{2}\right) = \frac{1}{2}$.
    \item Symmetry: $\phi(p) = \phi(1-p)~\forall~p \in [0,1]$
    \item Monotonicity: $\phi$ is monotonically increasing on $\left[\frac{1}{2}, 1\right]$.
    \item Continuous: $\phi$ is continuous on $[0,1]$.
\end{itemize}
Then the \emph{score} of a leaf $\ell$ which occurs with non-zero probability with respect to input distribution $\mu$ on $\{0,1\}^k$ and function $f: \{0,1\}^k \rightarrow \{0,1\}$, is defined as follows:
\begin{align*}
    \mathsf{score}_f^{\mu}(\ell) &= \phi\left(\Pr_{x \sim \mu}[f(x) = 1|\ell \subseteq x]\right)
\end{align*}
where $\ell \subseteq x$ is used to denote that $x$ is consistent with $\ell$. Note that we can always modify a tree to remove leaves $\ell$ which are not consistent with any input in the support of $\mu$, without affecting the behavior of the tree. Given a randomized algorithm $R$, computing $f:\{0,1\}^k \rightarrow \{0,1\}$ on the input distribution $\mu$, we define the score of the algorithm $R$:
    \begin{align*}
        \mathsf{score}_f^{\mu}(R) &= \E_{\ell \sim L(\mu, R)}[\mathsf{score}_f^{\mu}(\ell)] &\text{(Average score over distribution $\mu$)}
    \end{align*}

\paragraph{Computing $n$ copies of $f$.} We first define the score for a leaf $\ell$ on $(\{0,1\}^{k})^{n}$ with respect to input distribution $\mu^n$ for function $f^n$. We denote $\ell = \ell_1 \times \ldots \times \ell_n$, where $\ell_i$ is the query responses on the $i^{th}$ input from $\{0,1\}^k$ for $i \in [n]$.
\begin{align*}
    \mathsf{score}_{f^n}^{\mu^n}(\ell) := \prod_{i=1}^n \mathsf{score}_f^{\mu}(\ell_i)
\end{align*}
We define the overall score of $R$ with respect to the input distribution $\mu^n$ as
\begin{align*}
    \mathsf{score}_{f^n}^{\mu^n}(R) &= \E_{\ell \sim L(\mu^n, R)}[\mathsf{score}_{f^n}^{\mu^n}(\ell)] &\text{(Overall score on distribution $\mu^n$)}
\end{align*}
Note that with this generalized formulation of score measures, the success probability score is defined using the score function $\phi_{success}(p) = \max \{p, 1-p\}$.

\subsection{Cost Measures}\label{sec:cost}
In this section, we define a generalized cost measure for each leaf and internal node of a decision tree. The cost of a leaf $\ell$ (or internal node) with respect to the input distribution $\mu$ on $\{0,1\}^k$, is denoted by $\mathsf{cost}^{\mu}(\ell)$. We say that a cost measure $\mathsf{cost}^{\mu}$ is valid if it satisfies the following properties:
\begin{itemize}
    \item Normalization: Empty leaf (no queries) has cost $0$. If $\ell = \ast^n$, then we have that $\mathsf{cost}^{\mu}(\ell) = 0$.
    \item Monotonicity: Consider an internal node $v$ of $D$ on the path from root to leaf $\ell$. Then
    \begin{align*}
        \mathsf{cost}^{\mu}(v) \leq \mathsf{cost}^{\mu}(\ell)
    \end{align*}
    So traversing down a path on the tree increases cost.
    \item Additivity: If $\ell = \ell_1 \times \ldots \times \ell_n$, where $\ell$ is a leaf on $(\{0,1\}^{k})^{n}$ and $\ell_i$ consists of the queries on the $i^{th}$ input from $\{0,1\}^k$ for $i \in [n]$, then
    \begin{align*}
        \mathsf{cost}^{\mu^n}(\ell) = \sum_{i=1}^n \mathsf{cost}^{\mu}(\ell_i),
    \end{align*}
\end{itemize}
Note that the surprisal cost measure $\mathsf{surprisal}^{\mu}(\ell)$ is a valid cost measure since we have shown that it satisfies these properties. Given a randomized algorithm $R$, computing some function $f:\{0,1\}^k \rightarrow \{0,1\}$ on an input distribution $\mu$, we will now define its expected cost:
    \begin{align*}
        \overline{\mathsf{cost}}^{\mu}(R) &= \E_{\ell \sim L(\mu, R)}[\mathsf{cost}^{\mu}(\ell)] &\text{(Average cost over distribution $\mu$)}
    \end{align*}
Moreover, we can define its score-weighted cost (analogue of success-conditioned information complexity) as follows:
\begin{align*}
    \overline{\mathsf{scost}}^{\mu}_f(R) = \frac{\E_{\ell \sim L(\mu, R)}[\mathsf{score}^{\mu}_f(\ell)\cdot \mathsf{cost}^{\mu}(\ell)]}{\mathsf{score}^{\mu}_f(R)}
\end{align*}

\subsection{Discounted Score}\label{sec:discountedscore}
Given some specific $\mathsf{score}_f^{\mu}$ and $\mathsf{cost}^{\mu}$, and a discount factor $\alpha$, the discounted score for a randomized decision tree $R$ computing $f$ on input distribution $\mu$ is defined similarly to \cite{BB25}:
\begin{align*}
    \mathsf{ds}_{f, \alpha}^{\mu}(R) = \E_{\ell \sim L(\mu, R)}[\mathsf{score}^{\mu}_f(\ell) \cdot e^{-\alpha \cdot \mathsf{cost}^{\mu}(\ell)}]
\end{align*}
The maximum discounted score of $f$ with respect to $\mu$ is
\begin{align*}
    \mathsf{DS}_{\alpha}^{\mu}(f) = \max_{R} \mathsf{ds}_{f, \alpha}^{\mu}(R)
\end{align*}
Intuitively, for any algorithm $R$, the discounted score computes the expected score (over the choice of leaves) of the algorithm, while charging it for the cost paid by the algorithm on each leaf to achieve that score. If the cost of a leaf is greater than $1/\alpha$, the algorithm is charged heavily for it, so this measure is supposed to rule out algorithms which achieve high score by paying cost significantly higher than $1/\alpha$. We show this formally in \cref{prop:expectscore} that if $R$ achieves best discounted score for discount factor $\alpha$, then the expected score of $R$ on high cost leaves decreases very fast. In general, if score-weighted complexity is low, then discounted score is high. What about the converse? Suppose for some function $f$ and score parameter $\gamma$, every algorithm which achieves score parameter at least $\gamma$ has high score-weighted complexity. Then intuitively all these algorithms have low discounted score. However, it is possible that some other algorithm (with lower score) has a better discounted score. It was shown in \cite{BB25}, that there exists a suitable discount factor $\alpha^{\ast}$, such that the maximum discounted score is achieved by an algorithm with score exactly $\gamma$. Though \cite{BB25} only considered the discounted score with the cost function being number of queries, the proof of \cref{prop:scoreds} does not use any properties of the cost function, so it holds in our setting also.
\begin{proposition}[Proposition 28 of \cite{BB25}]\label{prop:scoreds}
    For any function $f:\{0,1\}^k \rightarrow \{0,1\}$, input distribution $\mu$ and score parameter $\gamma$ (which is at least the score of the trivial guessing algorithm), there exists a discount factor $\alpha^{\ast}$ and an algorithm $R^{\ast}$ such that
    \begin{align*}
        \mathsf{ds}_{f, \alpha^{\ast}}^{\mu}(R^{\ast}) &= \mathsf{DS}_{\alpha^{\ast}}^{\mu}(f) ; &\mathsf{score}^{\mu}_f(R^{\ast}) &= \gamma
    \end{align*}
\end{proposition}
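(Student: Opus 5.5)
We prove \cref{prop:scoreds} by a continuity and intermediate-value argument on the map $\alpha \mapsto$ (score of an optimal discounted-score algorithm). Only boundedness of the cost and score functions is used, never the internal structure of the cost. The main tool is the Lagrangian viewpoint: for fixed $\alpha$, maximizing $\mathsf{ds}^{\mu}_{f,\alpha}(R)$ over randomized trees is maximizing a linear functional over the convex hull of the finitely many deterministic trees on $k$ bits. Here we use that each deterministic tree $D$ has a well-defined value $\mathsf{ds}^{\mu}_{f,\alpha}(D)$ and score $\mathsf{score}^{\mu}_f(D)$, and that both are linear in the mixture. So for each $\alpha \ge 0$ the set $S_\alpha$ of scores attained by optimal $R$ is a closed interval $[s^-(\alpha), s^+(\alpha)]$. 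Its endpoints are the minimum and maximum score among optimal deterministic trees, and every value in between is attained by mixing two optimal deterministic trees.

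\textbf{Step 1: the endpoints.} At $\alpha=0$ the discounted score is just the score. The optimum is the best achievable score $\gamma_{\max}$, attained by the full-query tree (or any Bayes-optimal tree), so $\gamma \le s^+(0)$ when $\gamma$ is feasible. As $\alpha \to \infty$, every tree with a leaf of positive cost on a positive-mass input is penalized toward $0$. The trivial tree has cost $0$ by normalization and keeps its score $\gamma_0$, the score of the trivial guessing algorithm. Since there are finitely many deterministic trees, for $\alpha$ large enough the optimal trees are exactly the cost-$0$ trees, which only reach the empty leaf and hence have score $\gamma_0$. Therefore $s^-(\alpha)=\gamma_0 \le \gamma$ for large $\alpha$.

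\textbf{Step 2: sweeping.} The function $\alpha \mapsto \mathsf{DS}^{\mu}_\alpha(f)$ is the maximum of finitely many continuous functions $g_D(\alpha)=\mathsf{ds}^{\mu}_{f,\alpha}(D)$. By the standard upper-hemicontinuity argument for argmax sets over a finite family, the optimal set can only enlarge at limit points: if $D$ is optimal for $\alpha_j \to \alpha$, then $D$ is optimal at $\alpha$. Now let $\alpha^\ast = \sup\{\alpha : s^+(\alpha) \ge \gamma\}$, which is finite by Step 1. Taking $\alpha_j \uparrow \alpha^\ast$ with optimal trees of score at least $\gamma$, finiteness lets us pass to a constant subsequence, so some optimal tree at $\alpha^\ast$ has score at least $\gamma$. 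If instead $\alpha^\ast$ is attained by the set itself, this is immediate. Taking $\alpha_j \downarrow \alpha^\ast$, all optimal trees there have score below $\gamma$, and again some such tree is optimal at $\alpha^\ast$. Hence $s^-(\alpha^\ast) \le \gamma \le s^+(\alpha^\ast)$, with the edge case $\alpha^\ast=0$ handled by Step 1.

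\textbf{Step 3: conclusion.} By the convexity in the first paragraph, mixing an optimal tree of score at most $\gamma$ with one of score at least $\gamma$ gives an $R^\ast$ with $\mathsf{score}^{\mu}_f(R^\ast)=\gamma$ that still attains $\mathsf{DS}^{\mu}_{\alpha^\ast}(f)$.

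The main obstacle is Step 2: making the limit argument airtight. Specifically, we must check that optimality is preserved in the limit. This holds because each $g_D$ is continuous in $\alpha$ (each is a finite sum of exponentials) and the family is finite. Finiteness requires restricting to trees that never requery a bit, which loses nothing: requerying does not change the leaf's consistent input set, so it does not change score or surprisal.
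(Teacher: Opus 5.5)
The paper does not prove this statement itself. It imports Proposition~28 of \cite{BB25} and remarks that the argument ``does not use any properties of the cost function.'' Your finiteness-plus-upper-hemicontinuity sweep is a correct, self-contained replacement. The reduction to finitely many non-requerying trees, the description of the optimal score set as $[s^-(\alpha),s^+(\alpha)]$, and the two one-sided limits at $\alpha^\ast$ all check out.

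Your proof also shows exactly where the cost enters. Contrary to your opening sentence, Step~1 does use structure of the cost: you need that for large $\alpha$ every optimal tree has score $\gamma_0$. For surprisal this is true, but not quite for the reason you give. A zero-surprisal leaf need not be the empty leaf; for example, querying a bit that is constant on $\mathsf{supp}(\mu)$ gives a nonempty zero-surprisal leaf. What is true is that such a leaf has $\mu$-mass $1$, hence score $\gamma_0$. Moreover, a tree with some positive-mass, positive-surprisal leaf then has \emph{all} its positive-mass leaves of positive surprisal, so $g_D(\alpha)\to 0<\gamma_0$.

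Some such hypothesis is genuinely necessary. The identically-zero cost satisfies normalization, monotonicity and additivity, yet gives $\mathsf{ds}^{\mu}_{f,\alpha}=\mathsf{score}^{\mu}_f$ for every $\alpha$. Then the proposition fails for every $\gamma\in[\gamma_0,\gamma_{\max})$. So the paper's ``any cost'' remark is too strong, and your argument isolates the property that is actually needed.

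Two small fixes are needed:
\begin{itemize}
\item If $\gamma=\gamma_0$, the set $\{\alpha : s^+(\alpha)\ge\gamma\}$ contains all large $\alpha$, so $\alpha^\ast$ is not finite as claimed. Handle this case directly by taking any sufficiently large $\alpha$ together with the trivial tree.
\item The assumption $\gamma\le\gamma_{\max}$, which you invoke as ``feasible,'' is required and is only implicit in the statement.
\end{itemize}
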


\subsection{Tensorization for Discounted Score}
In this section, we show that the discounted score, $\mathsf{DS}_{\alpha}^{\mu}(f)$, tensorizes perfectly when defined using generalized cost measures. This follows from the same proof as \cite{BB25}, on observing that in their proof the cost measure is the number of queries on an input, but their proof only uses the properties of cost measures defined in our abstraction of generalized cost measure.

\begin{lemma}[Tensorization Lemma, Lemma 5 of \cite{BB25}]\label{lem:tensorization}
    For any function $f: \{0,1\}^k \rightarrow \{0,1\}$, input distribution $\mu$ on $\{0,1\}^k$, discount factor $\alpha$ and $n \geq 1$,
    \begin{align*}
        \mathsf{DS}_{\alpha}^{\mu^n}(f^n) = \mathsf{DS}_{\alpha}^{\mu}(f)^n
    \end{align*}
\end{lemma}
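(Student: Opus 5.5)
We prove the two inequalities separately. For $\mathsf{DS}_{\alpha}^{\mu^n}(f^n) \geq \mathsf{DS}_{\alpha}^{\mu}(f)^n$, let $R$ attain $\mathsf{DS}_{\alpha}^{\mu}(f)$ and let $R^n$ run $n$ independent copies of $R$, one on each block. A leaf of $R^n$ has the form $\ell = \ell_1\times\cdots\times\ell_n$, and the leaves $\ell_i$ are independent with $\ell_i \sim L(\mu,R)$, since both the inputs and the randomness are independent across blocks. Score is multiplicative by definition, and cost is additive by the additivity property of valid cost measures. Hence $\mathsf{score}^{\mu^n}_{f^n}(\ell)\,e^{-\alpha\,\mathsf{cost}^{\mu^n}(\ell)}$ factors as $\prod_i \mathsf{score}^\mu_f(\ell_i)e^{-\alpha\,\mathsf{cost}^\mu(\ell_i)}$, and taking expectations gives $\mathsf{ds}^{\mu}_{f,\alpha}(R)^n$.

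For the upper bound we induct on $n$. The inductive step is the two-factor inequality $\mathsf{DS}_{\alpha}^{\mu_1\otimes\mu_2}(f_1\times f_2) \le \mathsf{DS}_{\alpha}^{\mu_1}(f_1)\,\mathsf{DS}_{\alpha}^{\mu_2}(f_2)$, where $f_1 = f$ and $f_2 = f^{n-1}$. Deterministic trees suffice because $\mathsf{ds}$ is linear in $R$. We prove this by a second induction, on the size of a deterministic tree $D$ over the joint input. Write $V(D)=\E_{\ell}[\mathsf{score}(\ell)e^{-\alpha\,\mathsf{cost}(\ell)}]$. The quantity we actually carry through the induction is a conditional version. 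For a node $v=(v_1,v_2)$, with partial assignments $v_i$ to each block, consider the subtree rooted at $v$ and measure it under the conditioned distributions $\mu_i|_{v_i}$. These stay independent across blocks by \cref{fact:indepinputs}. The claim is that its value satisfies
\[
\E\bigl[\mathsf{score}\,e^{-\alpha\,\mathsf{cost}}\mid v\bigr] \le G_1(v_1)\,G_2(v_2),
\]
where $G_i(v_i)$ is the best discounted score achievable in block $i$ starting from $v_i$. This means the optimum over trees rooted at $v_i$, with cost measured as the full $\mathsf{cost}^{\mu_i}$ of the final leaf, so that earlier queries are already charged. Additivity of cost and multiplicativity of score make the leaf case an equality.

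For the inductive step, suppose the root of the subtree queries a bit in block 1. Then the children differ only in $v_1$, and they are weighted by the conditional probabilities of that bit under $\mu_1|_{v_1}$. By the induction hypothesis the value is at most $\E_b[G_1(v_1 b)]\,G_2(v_2)$. We have $\E_b[G_1(v_1b)]\le G_1(v_1)$, since querying this bit and then playing optimally is one admissible strategy from $v_1$. Here monotonicity of cost matters. It guarantees that $G_1(v_1)$ is well defined as an optimum over extensions, with the cost of any extended leaf at least the cost already paid. It also guarantees the base case $G_i \ge \mathsf{score}\cdot e^{-\alpha\,\mathsf{cost}(v_i)}$, by stopping at $v_i$ itself. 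Normalization gives $G_i(\text{root}) = \mathsf{DS}^{\mu_i}_\alpha(f_i)$, which closes the induction at the root.

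The main obstacle is making $G_i$ legitimate for a general cost measure. Depth cost is additive along the path, which lets one factor out $e^{-\alpha\cdot\text{depth so far}}$ and work with fresh subproblems. Surprisal cost is not path-additive in that sense. My plan is therefore to never factor the cost, and to define $G_i(v_i)$ directly as an optimum over extensions of $v_i$ that are charged the absolute cost $\mathsf{cost}^{\mu_i}$ of the final leaf. The inequality $\E_b[G_1(v_1b)]\le G_1(v_1)$ then holds trivially by concatenating the optimal continuations. The only property the whole argument needs from cost is additivity across blocks at the leaves. I would check carefully that the conditioning in \cref{fact:indepinputs} makes the child weights depend only on the queried block, because that is the step that keeps the product form intact.
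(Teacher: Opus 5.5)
Your proof is correct, but it takes a genuinely different route from the paper for the upper bound; the lower bound via independent copies is the same.

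\textbf{The paper's route.} Following \cite{BB25}, the paper takes an optimal deterministic $B$ for $f^n$ and builds an explicit one-copy algorithm $A$. It plants the real input at a uniformly random block $i$ and answers queries to the other blocks by sampling from the tilted distribution $\nu(y)\propto \mu^n(y)\,\mathsf{score}(\ell_y)\,e^{-\alpha\,\mathsf{cost}(\ell_y)}$. It writes $\pi_i(\ell)$ as a product of edge transition probabilities, and observes that $\prod_i \pi_i(\ell)$ telescopes to $\mu^n(\ell)\,\nu(\ell)^{n-1}$. AM--GM over $i$ then gives $\mathsf{ds}(A)\ge \mathsf{DS}^{\mu^n}_\alpha(f^n)^{1/n}$ for all $n$ at once.

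\textbf{Your route.} You use a backward-induction (Bellman) argument. Because the value-to-go $G_i$ is charged with the absolute leaf cost, the inequality $\E_b[G_1(v_1b)]\le G_1(v_1)$ is immediate by concatenating continuations. The product form survives because the conditional law at every node is $\mu_1|_{v_1}\otimes\mu_2|_{v_2}$, and a query in block $1$ reweights only the first factor.

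\textbf{Comparison.} Your route is more elementary: there is no tilted distribution, no AM--GM, and no path-product bookkeeping. It also makes plain that the only ingredients are multiplicativity of score and cross-block additivity of cost at the leaves, which is in fact all the paper's proof uses as well. The paper's route has two advantages. It yields an explicit simulation reduction from $B$ to a single-copy algorithm. It also tracks \cite{BB25} line for line, so the generalization reduces to checking one additivity step.

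\textbf{Corrections to your commentary (the argument itself is unaffected).}
\begin{itemize}
\item Neither monotonicity nor normalization is needed. $G_i(\emptyset)=\mathsf{DS}^{\mu_i}_\alpha(f_i)$ holds by definition, since deterministic trees suffice by linearity.
\item $G_i$ is well defined as a maximum over continuation trees; there are finitely many once repeated queries are excluded.
\item The base case $G_i(v_i)\ge \mathsf{score}(v_i)\,e^{-\alpha\,\mathsf{cost}(v_i)}$ comes from the empty continuation, with no appeal to monotonicity.
\end{itemize}
Your closing sentence is the accurate description. Finally, the outer induction on $n$ is unnecessary: the same node-wise argument directly gives $V_v\le \prod_{i=1}^n G(v_i)$.
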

\begin{proof}
    We first prove that $\mathsf{DS}_{\alpha}^{\mu^n}(f^n) \geq \mathsf{DS}_{\alpha}^{\mu}(f)^n$. Let $A$ be any randomized algorithm computing $f$. Then we define the algorithm $B$ for computing $f^n$ on $\mu^n$ by running $n$ independent copies of $A$ on the $n$ inputs. Then
    \begin{align*}
        \mathsf{ds}_{f^n, \alpha}^{\mu^n}(B) &= \E_{\ell_1, \ldots, \ell_n \sim L(\mu, A)}\left[\left(\prod_{i=1}^n \mathsf{score}^{\mu}_f(\ell_i) \right) \cdot e^{-\alpha \cdot \mathsf{cost}^{\mu^n}(\ell_1 \ldots \ell_n)}\right] \\
        &= \E_{\ell_1, \ldots, \ell_n \sim L(\mu, A)}\left[\prod_{i=1}^n \mathsf{score}^{\mu}_f(\ell_i) \cdot e^{-\alpha \cdot \mathsf{cost}^{\mu}(\ell_i)}\right] \\
        &= \prod_{i=1}^n\E_{\ell_i \sim L(\mu, A)}[\mathsf{score}^{\mu}_f(\ell_i) \cdot e^{-\alpha \cdot \mathsf{cost}^{\mu}(\ell_i)}] \\
        &= \mathsf{ds}_{f, \alpha}^{\mu}(A)^n
    \end{align*}
    Since $A$ is an arbitrary algorithm, we get that $\mathsf{DS}_{\alpha}^{\mu^n}(f^n) \geq \mathsf{DS}_{\alpha}^{\mu}(f)^n$.\\
    \\
    \noindent For the other direction, suppose $B$ is a deterministic algorithm computing $f^n$ on input distribution $\mu^n$ that has discounted score equal to $\mathsf{DS}_{\alpha}^{\mu^n}(f^n)$. Then we define an algorithm $A$ for computing $f$ on input distribution $\mu$ similar to \cite{BB25}. In particular, $A$ receives a true input $x \sim \mu$, and it simulates $B$ using a distribution $\nu$ by running it as follows: put the true input $x$ at a uniformly random location $i$ in the input to $B$, and when $B$ makes a query to the $i^{th}$ input, then $A$ queries $x$, otherwise it samples the response to the next query from the distribution $\nu$ conditioned on the query responses seen so far. Here the distribution $\nu$ is defined as follows:
    \begin{align*}
        \nu(y) = \frac{\mu^n(y)\cdot \mathsf{score}_{f^n}^{\mu^n}(\ell_y)\cdot e^{-\alpha \cdot \mathsf{cost}^{\mu^n}(\ell_y)}}{\mathsf{DS}_{\alpha}^{\mu^n}(f^n)}
    \end{align*}
    where $\ell_y$ is the leaf of $B$ corresponding to the input $y$. Let $\pi_i$ be the distribution induced on the leaves of $B$ when $A$ puts the true input in the $i^{th}$ location and then simulates $B$ as above. We can also extend $\pi_i$ to the inner nodes $v$ in $B$, where $\pi_i(v)$ is the probability that $A$ reaches the node $v$ conditioned on putting the true input in the $i^{th}$ location. Then for any leaf $\ell$, we have that
    \begin{align*}
        \pi_i(\ell) = \prod_{(v,w) \in p(\ell)} \pi_i(w|v)
    \end{align*}
    where $(v,w)$ denote the edges in the path $p(\ell)$ from the root to the leaf $\ell$. Here $\pi_i(w|v)$ denotes the probability of going to node $w$ conditioned on seeing the partial assignment of node $v$. Therefore,
    \begin{align*}
        \pi_i(w|v) =
        \begin{cases}
        \mu^n(w|v) & \text{ if node $v$ queries the $i^{th}$ input} \\
        \nu(w|v) & \text{ otherwise}
        \end{cases}
    \end{align*}
    For every leaf $\ell$ in $B$, we denote the partial assignment for the $i^{th}$ input by $\ell_i$. The discounted score of $A$ is then
    \begin{align*}
        \mathsf{ds}_{f, \alpha}^{\mu}(A) &= \E_{\ell \sim L(\mu, A)}[\mathsf{score}^{\mu}_f(\ell) \cdot e^{-\alpha \cdot \mathsf{cost}^{\mu}(\ell)}] \\
        &= \E_{i \sim [n]} \E_{\ell \sim \pi_i}[\mathsf{score}^{\mu}_f(\ell_i) \cdot e^{-\alpha \cdot \mathsf{cost}^{\mu}(\ell_i)}] \\
        &= \sum_{\ell \in B} \E_{i \sim [n]}[\pi_i(\ell)\cdot\mathsf{score}^{\mu}_f(\ell_i) \cdot e^{-\alpha \cdot \mathsf{cost}^{\mu}(\ell_i)}] \\
        &\geq \sum_{\ell \in B} \left(\prod_{i=1}^n \pi_i(\ell)\cdot\mathsf{score}^{\mu}_f(\ell_i) \cdot e^{-\alpha \cdot \mathsf{cost}^{\mu}(\ell_i)}\right)^{1/n} &\text{(Using AM-GM inequality)} \\
        &= \sum_{\ell \in B} \left(\prod_{i=1}^n \pi_i(\ell)\right)^{1/n}\left(\mathsf{score}^{\mu^n}_{f^n}(\ell) \cdot e^{-\alpha \cdot \mathsf{cost}^{\mu^n}(\ell)}\right)^{1/n} &\text{(Using additivity of cost)}
    \end{align*}
    Now we analyze the term $\prod_{i=1}^n \pi_i(\ell)$.
    \begin{align*}
        \prod_{i=1}^n \pi_i(\ell) &= \prod_{i=1}^n \prod_{(v, w) \in p(\ell)} \pi_i(w|v) \\
        &= \prod_{(v,w) \in p(\ell)} \prod_{i=1}^n \pi_i(w|v) \\
        &= \prod_{(v,w) \in p(\ell)} \mu^n(w|v) \nu(w|v)^{n-1} \\
        &= \left(\prod_{(v,w) \in p(\ell)} \mu^n(w|v)\right) \left(\prod_{(v,w) \in p(\ell)} \nu(w|v)\right)^{n-1} \\
        &= \mu^n(\ell) \nu(\ell)^{n-1} \\
        &= \frac{(\mu^n(\ell))^n\cdot (\mathsf{score}_{f^n}^{\mu^n}(\ell)\cdot e^{-\alpha \cdot \mathsf{cost}^{\mu^n}(\ell)})^{n-1}}{(\mathsf{DS}_{\alpha}^{\mu^n}(f^n))^{n-1}}
    \end{align*}
    Putting this back in the analysis for discounted score of $A$, we get
    \begin{align*}
        \mathsf{ds}_{f, \alpha}^{\mu}(A) &\geq \frac{\sum_{\ell \in B} \mu^n(\ell)\cdot \mathsf{score}_{f^n}^{\mu^n}(\ell)\cdot e^{-\alpha \cdot \mathsf{cost}^{\mu^n}(\ell)}}{(\mathsf{DS}_{\alpha}^{\mu^n}(f^n))^{1-\frac{1}{n}}} \\
        &= \left(\mathsf{DS}_{\alpha}^{\mu^n}(f^n)\right)^{\frac{1}{n}}
    \end{align*}
    Therefore, we get that $\mathsf{DS}_{\alpha}^{\mu^n}(f^n) \leq \mathsf{DS}_{\alpha}^{\mu}(f)^n$.
\end{proof}

\subsection{Equivalence Lemma for Success-Conditioned Information Complexity and Discounted Score}\label{sec:equivalence}
In this section, we show an equivalence lemma (\cref{lem:equivalence}) between success-conditioned information complexity and discounted score. Once again, the proof is written in terms of generalized score and cost measures to highlight that we use only general properties of these measures. The proof is analogous to that in \cite{BB25}, whose cost measure was the number of queries. In particular, they used the number of queries to truncate the tree on paths which have high number of queries. We do the same truncation process using the generalized cost measure of internal nodes, and then their proof works to show the equivalence lemma. The first part of the equivalence lemma informally says that if the score-weighted complexity of a function $f$ with score parameter $\gamma$ is small, then the discounted score of $f$ is small. This is shown using Jensen's inequality and the fact that $e^{-\alpha x}$ is a convex function.

The second part of this lemma says that if the score-weighted complexity of a function $f$ with score parameter $\gamma$ is large, then there is a suitable discount factor $\alpha^{\ast}$ such that the discounted score of $f$ is small. To show this, we will use \cref{prop:scoreds} which tells us that there is a discount factor $\alpha^{\ast}$ and an algorithm $R^{\ast}$ such that it achieves the maximum discounted score (with discount factor $\alpha^{\ast}$) for $f$, and it has score $\gamma$. As such, since discounted score discourages achieving high score by paying high cost, intuitively we should expect that $R^{\ast}$ should not put much of its score on its very high cost leaves. Similar to \cite{BB25}, this intuition is formalized in \cref{prop:expectscore}, to show that for any discount factor $\alpha$, if a decision tree $R$ achieves the highest discounted score, then the expected score of $R$ on leaves with cost $> kd$ is exponentially smaller (in $k$) than the expected score on leaves with cost $> d$, for $d = O(1/\alpha)$. Therefore, for our case the score-weighted cost of $R^{\ast}$ cannot be largely concentrated on the high-cost leaves. Then to show the desired upper bound on the discounted score of $R^{\ast}$, we consider a truncated tree which removes these high cost leaves, and use \cref{fact:convexityofe} to get the bound. By the monotonicity of the cost measure, this truncation is well-defined.

We now show that for any discount factor $\alpha$, the tree $R$ achieving the best discounted score does not put much of its score on its very high cost leaves, because otherwise the discounted score of $R$ can be improved by truncating the high cost leaves. Note that \cite{BB25} showed this is true for the specific choice of discount factor $\alpha^{\ast}$ and decision tree $R^{\ast}$ from \cref{prop:scoreds}. We slightly generalize their proof to establish our claim. Let $X_{>\tau}(\ell)$ be the indicator variable for the event that $\mathsf{cost}^{\mu}(\ell) > \tau$.

\begin{proposition}\label{prop:expectscore}
    For any function $f:\{0,1\}^k \rightarrow \{0,1\}$, input distribution $\mu$ and discount factor $\alpha^{\ast} \geq 0$, let $R^{\ast}$ be the randomized decision tree such that $\mathsf{ds}_{f, \alpha^{\ast}}^{\mu}(R^{\ast}) = \mathsf{DS}_{\alpha^{\ast}}^{\mu}(f)$. Fix $d = \frac{2\ln(2)}{\alpha^{\ast}}$. Then for any $m \geq 2$, we have that
    \begin{align*}
        \E_{\ell \sim L(\mu, R^{\ast})}[\mathsf{score}_f^{\mu}(\ell) \cdot X_{>md}(\ell)] \leq \left(\frac{2}{3}\right)^{m-1} \E_{\ell \sim L(\mu, R^{\ast})}[\mathsf{score}_f^{\mu}(\ell) \cdot X_{>d}(\ell)]
    \end{align*}
\end{proposition}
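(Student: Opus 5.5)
The plan is an exchange (truncation) argument exploiting the optimality of $R^{\ast}$. Set $S(\tau) = \E_{\ell \sim L(\mu, R^{\ast})}[\mathsf{score}_f^{\mu}(\ell)\cdot X_{>\tau}(\ell)]$. The goal is the one-step recursion
\[
S((j+1)d) \leq \tfrac{2}{3} S(jd) \qquad \text{for every integer } j \geq 1 .
\]
Iterating this from $j=1$ to $j=m-1$ gives $S(md) \leq (2/3)^{m-1} S(d)$, which is the claim. This assumes $m$ is an integer; for real $m \geq 2$ one uses monotonicity of $S$ and a floor, which may lose a constant. I would state the result for integer $m$ if necessary.

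\textbf{Truncation.} Fix $j$ and modify each deterministic tree $D$ in the support of $R^{\ast}$. At every node $v$ that is a first node on its root path with $\mathsf{cost}^{\mu}(v) > jd$, replace the subtree rooted at $v$ by a single leaf at $v$. This is well defined by the monotonicity of the cost measure. Every leaf of cost $> jd$ has such an ancestor, and every leaf of cost $\le jd$ is untouched. The new leaf $v$ has cost $\mathsf{cost}^{\mu}(v) > jd$. I would ideally also want $\mathsf{cost}^{\mu}(v) \leq$ (roughly) $jd$, but the generalized cost can jump. So the discount at $v$ is only bounded by $1$ from above, and the argument must use $e^{-\alpha^{\ast}\mathsf{cost}(v)}$ together with the monotonicity $\mathsf{cost}(v) \le \mathsf{cost}(\ell)$ for $\ell$ below $v$.

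\textbf{Accounting.} By optimality, the truncated algorithm has discounted score at most $\mathsf{ds}(R^{\ast})$. Two facts hold at each truncation node $v$:
\begin{itemize}
\item The score of the new leaf is at least $\tfrac12$, and it is at least the ``probability mass'' lower bound: since $\phi\ge \tfrac12$ and $\phi\le 1$, each original leaf below $v$ has score at most $1 \le 2\,\mathsf{score}(v)$. More usefully, the concavity-type property $\Pr[v]\,\mathsf{score}(v) \ge \tfrac12 \sum_{\ell \le v}\Pr[\ell]\,\mathsf{score}(\ell)$ follows from boundedness.
\item Every leaf below $v$ has discount at most $e^{-\alpha^{\ast}\mathsf{cost}(\ell)}$.
\end{itemize}
Comparing the two sides loses this factor of $2$ on one side. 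Meanwhile, leaves beyond $(j+1)d$ have discount at most $e^{-\alpha^{\ast}(j+1)d} = 4^{-(j+1)}$, compared with $4^{-j}$ (using $d = 2\ln 2/\alpha^{\ast}$). Writing out the optimality inequality, I expect to obtain a relation of the form
\[
e^{-\alpha^{\ast} jd}\cdot\tfrac12\bigl(S(jd)\bigr) \lesssim e^{-\alpha^{\ast} jd}\bigl(S(jd)-S((j+1)d)\bigr) + e^{-\alpha^{\ast}(j+1)d}\, S((j+1)d),
\]
after splitting the mass below truncation points into leaves of cost in $(jd,(j+1)d]$ and leaves of cost $>(j+1)d$. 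Dividing by $e^{-\alpha^{\ast} jd}$ gives $\tfrac12 S(jd) \le S(jd) - S((j+1)d) + \tfrac14 S((j+1)d)$, so $\tfrac34 S((j+1)d) \le \tfrac12 S(jd)$, i.e.\ $S((j+1)d)\le \tfrac23 S(jd)$. This matches the constant $2/3$, which suggests this is the intended computation.

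\textbf{Main obstacle.} The delicate step is the left-hand side: the truncated leaf $v$ has discount $e^{-\alpha^{\ast}\mathsf{cost}(v)}$ with $\mathsf{cost}(v)$ possibly much larger than $jd$, not equal to $jd$. To handle this I would truncate at the parent of the first node exceeding $jd$, whose cost is $\le jd$, so its discount is $\ge e^{-\alpha^{\ast} jd}$. I must then check that leaves with cost $\le jd$ lying below that parent are not lost. Since replacing a subtree by a leaf keeps score $\ge \tfrac12\times$ (average leaf score), the factor-$\tfrac12$ bookkeeping still absorbs them. Getting this bookkeeping exactly right, including these low-cost siblings, is where I expect most of the care to go.
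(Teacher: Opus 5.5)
Your plan is the same as the paper's: truncate $R^{\ast}$ at cost level $\tau=jd$, use optimality of $R^{\ast}$, and read off $\tfrac12 S(jd)\le S(jd)-S((j+1)d)+\tfrac14 S((j+1)d)$. The step you flag as the main obstacle is a genuine gap, and your proposed fix does not close it. The recursion needs
\[
\mathsf{ds}^{\mu}_{f,\alpha^{\ast}}(\lfloor R^{\ast}\rfloor_{\tau}) \ge \E_{\ell\sim L(\mu,R^{\ast})}\bigl[\mathsf{score}^{\mu}_f(\ell)\, e^{-\alpha^{\ast}\mathsf{cost}^{\mu}(\ell)}\, X_{\le\tau}(\ell)\bigr] + \tfrac12 e^{-\alpha^{\ast}\tau}\, S(\tau).
\]
That is, every leaf of cost $\le\tau$ must keep its full contribution, while all mass of cost $>\tau$ must land on new leaves of cost $\le\tau$. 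Cutting at the first node of cost $>\tau$ fails, because that node's cost can be arbitrarily larger than $\tau$: a surprisal child of conditional probability $p$ costs $\log(1/p)$ more than its parent. Cutting at its parent also collapses cheap siblings. The bound $\Pr_{\mu}[v]\,\mathsf{score}^{\mu}_f(v)\ge\tfrac12\Pr_{\mu}[v]$ gives those siblings only half credit, whereas the inequality needs full credit for them plus half credit for the expensive mass. The resulting deficit is an additive term not controlled by $S(\tau)$, so the factor-$\tfrac12$ bookkeeping does not absorb it.

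No bookkeeping can fix this, because the statement is false for the surprisal cost, which the paper declares a valid cost. Take $k=1$, $f(x)=x$, $\mu(1)=a$. Write $\alpha^{\ast}=\beta\ln 2$, so that $e^{-\alpha^{\ast}\mathsf{surprisal}^{\mu}(\ell)}=\Pr_{\mu}[\ell]^{\beta}$ and $d=2/\beta$.
\begin{itemize}
\item The no-query tree has discounted score $1-a$.
\item The tree querying $x$ has discounted score $a^{1+\beta}+(1-a)^{1+\beta}$.
\item Since $(1-a)\bigl(1-(1-a)^{\beta}\bigr)\le\beta a$, the query tree is the unique optimum whenever $a^{\beta}>\beta$.
\end{itemize}
Now take $m=2$, $\beta=1/32$ (so $d=64$) and $a=2^{-140}$. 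Then $a^{\beta}=2^{-4.375}>1/32$, so the query tree is $R^{\ast}$. Its leaf $x=1$ has cost $140>2d$, and its leaf $x=0$ has cost $<d$. So both expectations in the statement equal $a$, and the claim reads $a\le\tfrac23 a$, which is false.

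In this example, your parent-truncation is the no-query tree, with value $1-a$. The displayed bound demands at least $(1-a)^{1+\beta}+a/8\ge 1-a+3a/32$, which exceeds that value.

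The paper's proof has exactly the same gap. Its lower bound on $\mathsf{ds}(\lfloor R^{\ast}\rfloor_{\tau})$ is only justified when all children of a node have equal cost, as for depth; then stopping at the last node of cost $\le\tau$ discards no cheap leaf. So the proposition needs an extra hypothesis of this kind on the cost, which surprisal lacks. Your concern about non-integer $m$ is immaterial by comparison.
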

\begin{proof}
    From $R^{\ast}$, we first construct the truncated tree $\lfloor R^{\ast} \rfloor_\tau$, which removes the nodes $q$ of $R^{\ast}$ if $\mathsf{cost}^{\mu}(q) > \tau$. Using the fact that the score of each leaf of $\lfloor R^{\ast} \rfloor_\tau$ is at least $1/2$ and score of each leaf of $R^{\ast}$ is at most $1$,
    \begin{align*}
        \mathsf{ds}_{f, \alpha^{\ast}}^{\mu}(R^{\ast}) &= \E_{\ell \sim L(\mu, R^{\ast})}[\mathsf{score}_f^{\mu}(\ell) \cdot e^{-\alpha^{\ast} \mathsf{cost}^{\mu}(\ell)} \cdot X_{\leq \tau}(\ell)] + \E_{\ell \sim L(\mu, R^{\ast})}[\mathsf{score}_f^{\mu}(\ell) \cdot e^{-\alpha^{\ast} \mathsf{cost}^{\mu}(\ell)} \cdot X_{>\tau}(\ell)] \\
        \mathsf{ds}_{f, \alpha^{\ast}}^{\mu}(\lfloor R^{\ast} \rfloor_\tau) &\geq \E_{\ell \sim L(\mu, R^{\ast})}[\mathsf{score}_f^{\mu}(\ell) \cdot e^{-\alpha^{\ast} \mathsf{cost}^{\mu}(\ell)} \cdot X_{\leq \tau}(\ell)] + \frac{1}{2} \E_{\ell \sim L(\mu, R^{\ast})}[e^{-\alpha^{\ast}\tau} \cdot X_{>\tau}(\ell)] \\
        &\geq \E_{\ell \sim L(\mu, R^{\ast})}[\mathsf{score}_f^{\mu}(\ell) \cdot e^{-\alpha^{\ast} \mathsf{cost}^{\mu}(\ell)} \cdot X_{\leq \tau}(\ell)] + \frac{e^{-\alpha^{\ast}\tau}}{2} \E_{\ell \sim L(\mu, R^{\ast})}[\mathsf{score}_f^{\mu}(\ell) \cdot X_{>\tau}(\ell)]
    \end{align*}
     We know from the choice of $R^{\ast}$ that $\mathsf{ds}_{f, \alpha^{\ast}}^{\mu}(R^{\ast}) \geq \mathsf{ds}_{f, \alpha^{\ast}}^{\mu}(\lfloor R^{\ast} \rfloor_\tau)$. Therefore,
    \begin{align*}
        &\frac{e^{-\alpha^{\ast}\tau}}{2} \E_{\ell \sim L(\mu, R^{\ast})}[\mathsf{score}_f^{\mu}(\ell) \cdot X_{>\tau}(\ell)] \leq \E_{\ell \sim L(\mu, R^{\ast})}[\mathsf{score}_f^{\mu}(\ell) \cdot e^{-\alpha^{\ast} \mathsf{cost}^{\mu}(\ell)} \cdot X_{>\tau}(\ell)] \\
        \leq~&e^{-\alpha^{\ast}\tau} \left(\E_{\ell \sim L(\mu, R^{\ast})}[\mathsf{score}_f^{\mu}(\ell) \cdot X_{>\tau}(\ell) \cdot X_{\leq(\tau+d)}(\ell)]\right) + e^{-\alpha^{\ast}(\tau+d)} \left(\E_{\ell \sim L(\mu, R^{\ast})}[\mathsf{score}_f^{\mu}(\ell) \cdot X_{>(\tau+d)}(\ell)]\right) \\
        =~&e^{-\alpha^{\ast}\tau} \left(\E_{\ell \sim L(\mu, R^{\ast})}[\mathsf{score}_f^{\mu}(\ell) \cdot \left(X_{>\tau}(\ell)-X_{>(\tau+d)}(\ell)\right)]\right) + e^{-\alpha^{\ast}(\tau+d)} \left(\E_{\ell \sim L(\mu, R^{\ast})}[\mathsf{score}_f^{\mu}(\ell) \cdot X_{>(\tau+d)}(\ell)]\right)
    \end{align*}
    Rearranging this and using $d = \frac{2\ln(2)}{\alpha^{\ast}}$ gives us that,
    \begin{align*}
        \frac{1}{2}\E_{\ell \sim L(\mu, R^{\ast})}[\mathsf{score}_f^{\mu}(\ell) \cdot X_{>\tau}(\ell)] &\geq (1-e^{-\alpha^{\ast}d})\E_{\ell \sim L(\mu, R^{\ast})}[\mathsf{score}_f^{\mu}(\ell) \cdot X_{>(\tau+d)}(\ell)] \\
        &= \frac{3}{4}\E_{\ell \sim L(\mu, R^{\ast})}[\mathsf{score}_f^{\mu}(\ell) \cdot X_{>(\tau+d)}(\ell)]
    \end{align*}
    Therefore,
    \begin{align*}
        \E_{\ell \sim L(\mu, R^{\ast})}[\mathsf{score}_f^{\mu}(\ell) \cdot X_{>(\tau+d)}(\ell)] \leq \frac{2}{3}\E_{\ell \sim L(\mu, R^{\ast})}[\mathsf{score}_f^{\mu}(\ell) \cdot X_{>\tau}(\ell)]
    \end{align*}
    So for every $k \geq 2$,
    \begin{align*}
        \E_{\ell \sim L(\mu, R^{\ast})}[\mathsf{score}_f^{\mu}(\ell) \cdot X_{>md}(\ell)] &\leq \left(\frac{2}{3}\right) \E_{\ell \sim L(\mu, R^{\ast})}[\mathsf{score}_f^{\mu}(\ell) \cdot X_{>(m-1)d}(\ell)] \\
        &\leq \left(\frac{2}{3}\right)^{m-1} \E_{\ell \sim L(\mu, R^{\ast})}[\mathsf{score}_f^{\mu}(\ell) \cdot X_{>d}(\ell)] \qedhere
    \end{align*}
\end{proof}
\noindent As a corollary, we show that the score-weighted cost is not concentrated on the high-cost leaves.
\begin{corollary}\label{cor:expectcost}
    For any function $f:\{0,1\}^k \rightarrow \{0,1\}$, input distribution $\mu$ and discount factor $\alpha^{\ast} \geq 0$, let $R^{\ast}$ be the randomized decision tree such that $\mathsf{ds}_{f, \alpha^{\ast}}^{\mu}(R^{\ast}) = \mathsf{DS}_{\alpha^{\ast}}^{\mu}(f)$. Fix $d = \frac{2\ln(2)}{\alpha^{\ast}}$. Then
    \begin{align*}
        \E_{\ell \sim L(\mu, R^{\ast})}\left[\mathsf{score}_f^{\mu}(\ell) \cdot \mathsf{cost}^{\mu}(\ell) X_{> d}(\ell)\right] \leq 4d\E_{\ell \sim L(\mu, R^{\ast})}\left[\mathsf{score}_f^{\mu}(\ell) \cdot X_{> d}(\ell)\right]
    \end{align*}
\end{corollary}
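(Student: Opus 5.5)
We slice the leaves of $R^{\ast}$ into cost bands of width $d$ and use \cref{prop:expectscore} to show that the score mass in successive bands decays geometrically. Write $S_m = \E_{\ell \sim L(\mu, R^{\ast})}[\mathsf{score}_f^{\mu}(\ell) \cdot X_{>md}(\ell)]$ for integers $m \geq 1$, so $S_1$ is the right-hand expectation in the statement. Every leaf with $\mathsf{cost}^{\mu}(\ell) > d$ lies in exactly one band $(md, (m+1)d]$ for some $m \geq 1$. On that band the cost is at most $(m+1)d$, and the indicator of the band is $X_{>md}(\ell) - X_{>(m+1)d}(\ell)$.

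The first step is to bound the left-hand side by a sum over bands:
\begin{align*}
\E\left[\mathsf{score}_f^{\mu}(\ell)\cdot \mathsf{cost}^{\mu}(\ell) X_{>d}(\ell)\right] \leq \sum_{m\geq 1} (m+1)d\,(S_m - S_{m+1}).
\end{align*}
Summation by parts turns this into $d\left(2S_1 + \sum_{m\geq 2} S_m\right)$. Equivalently, one can use $\mathsf{cost} \le d\,(1+\sum_{m\ge1} X_{>md})$ on the event $X_{>d}$.

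The boundary terms need a short justification. The sum is finite, because the tree has finitely many leaves and hence the costs are bounded, so $S_m = 0$ for large $m$. The boundary term $\lim_{M\to\infty} (M+1) d\, S_{M+1}$ therefore vanishes.

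The second step applies \cref{prop:expectscore}, which gives $S_m \leq (2/3)^{m-1} S_1$ for every $m \geq 2$. Hence
\begin{align*}
\sum_{m\geq 2} S_m \leq S_1 \sum_{m\geq 2}(2/3)^{m-1} = 2S_1.
\end{align*}
The total is then at most $d(2S_1 + 2S_1) = 4dS_1$, which is the claim.

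The only real care point is the constant in the summation by parts. With the cruder bound $\mathsf{cost} \le (m+1)d$ on band $m$, the total is $d\sum_{m\ge1}(m+1)(S_m-S_{m+1}) = d\,(2S_1+\sum_{m\ge2}S_m)$. This gives exactly $4d$, so the factor of $4$ is tight for this argument and nothing is lost elsewhere. If the degenerate case $\alpha^{\ast}=0$ matters, then $d=\infty$ and both sides vanish, so it can be handled separately.
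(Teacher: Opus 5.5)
Your proof is correct and follows the paper's argument. Both split the leaves with cost above $d$ into bands $(md,(m+1)d]$ and bound the cost by $(m+1)d$ on each band. Summation by parts then gives $d\bigl(2S_1+\sum_{m\ge 2}S_m\bigr)$, which the paper writes as $dS_1+d\sum_{m\ge 1}S_m$, and \cref{prop:expectscore} sums the geometric series to $4dS_1$. You do not actually need finiteness to dispose of the boundary term, because $(M+1)d\,S_{M+1}\ge 0$ enters with a minus sign.
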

\begin{proof}
    We use \cref{prop:expectscore} to establish the claim.
    \begin{align*}
        \E_{\ell \sim L(\mu, R^{\ast})}\left[\mathsf{score}_f^{\mu}(\ell) \cdot \mathsf{cost}^{\mu}(\ell) X_{> d}(\ell)\right] &= \sum_{m=1}^{\infty} \E_{\ell \sim L(\mu, R^{\ast})}\left[\mathsf{score}_f^{\mu}(\ell) \cdot \mathsf{cost}^{\mu}(\ell) X_{> md}(\ell) X_{\leq (m+1)d}(\ell)\right] \\
        &\leq \sum_{m=1}^{\infty} (m+1)d \E_{\ell \sim L(\mu, R^{\ast})}\left[\mathsf{score}_f^{\mu}(\ell) \cdot (X_{> md}(\ell) - X_{> (m+1)d}(\ell))\right] \\
        &= d\E_{\ell \sim L(\mu, R^{\ast})}\left[\mathsf{score}_f^{\mu}(\ell) \cdot X_{>d}(\ell)\right] + d \sum_{m=1}^{\infty} \E_{\ell \sim L(\mu, R^{\ast})}\left[\mathsf{score}_f^{\mu}(\ell) \cdot X_{> md}(\ell)\right] \\
        &\leq d\E_{\ell \sim L(\mu, R^{\ast})}\left[\mathsf{score}_f^{\mu}(\ell) \cdot X_{>d}(\ell)\right] \left(1 + \sum_{m=1}^{\infty} \left(\frac{2}{3}\right)^{m-1} \right) \\
        &= 4d\E_{\ell \sim L(\mu, R^{\ast})}\left[\mathsf{score}_f^{\mu}(\ell)\cdot X_{>d}(\ell)\right]
    \end{align*}
    where the second last inequality follows from \cref{prop:expectscore}.
\end{proof}
\noindent We now show the equivalence lemma.
\begin{lemma}[Equivalence Lemma, Lemma 6 of \cite{BB25}]\label{lem:equivalence}
    For any function $f:\{0,1\}^k \rightarrow \{0,1\}$, input distribution $\mu$, score parameter $\gamma$ and discount factor $\alpha \geq 0$,
    \begin{align*}
        \mathsf{sIC}_{\gamma}^{\mu}(f) \geq \frac{1}{\alpha} \ln \frac{\gamma}{\mathsf{DS}^{\mu}_{\alpha}(f)}
    \end{align*}
    Conversely, for every $f, \mu, \gamma$, there exists a discount factor $\alpha^{\ast}$ such that
    \begin{align*}
        \mathsf{sIC}_{\gamma}^{\mu}(f) = O\left(\frac{1}{\alpha^{\ast}} \ln \frac{\gamma}{\mathsf{DS}^{\mu}_{\alpha^{\ast}}(f)}\right).
    \end{align*}
\end{lemma}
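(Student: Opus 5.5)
For the first inequality I would argue directly via Jensen, working with the generalized score/cost abstraction, with $\mathsf{sIC}$ read as the score-weighted cost $\overline{\mathsf{scost}}$ under success score and surprisal cost. Take any $R$ with $\mathsf{score}^\mu_f(R)\ge\gamma$. Define the probability distribution $\rho$ on leaves by $\rho(\ell)\propto \Pr_{L(\mu,R)}[\ell]\cdot\mathsf{score}^\mu_f(\ell)$, normalized by $\mathsf{score}^\mu_f(R)$. Then
\[
\mathsf{ds}^\mu_{f,\alpha}(R)=\mathsf{score}^\mu_f(R)\cdot\E_{\ell\sim\rho}[e^{-\alpha\,\mathsf{cost}^\mu(\ell)}]\ge \mathsf{score}^\mu_f(R)\, e^{-\alpha\,\overline{\mathsf{scost}}^\mu_f(R)}
\]
by convexity of $x\mapsto e^{-\alpha x}$. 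Since $\mathsf{DS}^\mu_\alpha(f)\ge \mathsf{ds}^\mu_{f,\alpha}(R)$ and $\mathsf{score}^\mu_f(R)\ge\gamma$, taking logs gives $\overline{\mathsf{scost}}^\mu_f(R)\ge\frac1\alpha\ln\frac{\gamma}{\mathsf{DS}^\mu_\alpha(f)}$. Minimizing over $R$ yields the first claim. If $\alpha=0$ the bound is trivial, because $\mathsf{DS}^\mu_0(f)\ge\gamma$ makes the right side nonpositive (read as a limit).

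For the converse I would invoke \cref{prop:scoreds} to get $\alpha^\ast$ and an optimal $R^\ast$ with $\mathsf{score}^\mu_f(R^\ast)=\gamma$. This gives $\mathsf{sIC}^\mu_\gamma(f)\le \overline{\mathsf{scost}}^\mu_f(R^\ast)$, so it suffices to upper bound the score-weighted cost of $R^\ast$ by $O(\frac{1}{\alpha^\ast}\ln\frac{\gamma}{\mathsf{ds}(R^\ast)})$. Set $d=2\ln 2/\alpha^\ast$ and $C=\overline{\mathsf{scost}}^\mu_f(R^\ast)$. I would split $C$ according to whether a leaf has cost at most $d$ or more than $d$. \Cref{cor:expectcost} bounds the high-cost part by $4d\cdot\E[\mathsf{score}\cdot X_{>d}]/\gamma$, so $C\le d+4d\,p$ with $p=\E_\rho[X_{>d}]\le1$. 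Hence $C=O(d)=O(1/\alpha^\ast)$ always. It remains to show that $\ln(\gamma/\mathsf{DS}_{\alpha^\ast})$ is $\Omega(\alpha^\ast C)$.

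The key step compares $\mathsf{ds}(R^\ast)/\gamma=\E_\rho[e^{-\alpha^\ast \mathsf{cost}}]$ with the truncated-cost variable $y=\min(\alpha^\ast\mathsf{cost},\alpha^\ast D)$ for a cap $D=O(d)$. Since $\alpha^\ast \mathsf{cost}\ge y$, we get $\E_\rho[e^{-\alpha^\ast\mathsf{cost}}]\le \E_\rho[e^{-y}]$. With $y\in[0,\alpha^\ast D]$ and $\alpha^\ast D$ a constant, \cref{fact:convexityofe} gives $\E_\rho[e^{-y}]\le e^{-c\,\E_\rho[y]}$ for a constant $c>0$. This yields $\ln\frac{\gamma}{\mathsf{ds}}\ge c\,\alpha^\ast\,\E_\rho[\min(\mathsf{cost},D)]$. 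Finally I would lower bound $\E_\rho[\min(\mathsf{cost},D)]$ by $\Omega(C)$. Taking $D=d$, the truncated expectation is at least $\E_\rho[\mathsf{cost}\,X_{\le d}]+d\,p$, while $C\le \E_\rho[\mathsf{cost}\,X_{\le d}]+4dp$. So the truncated expectation is at least $C/4$.

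The main obstacle is this last comparison. One has to make sure \cref{cor:expectcost} is applied with the score-weighted (i.e.\ $\rho$) normalization. One also has to handle the degenerate case $\alpha^\ast=0$ (then $\mathsf{DS}$ is attained with zero discount, and $R^\ast$ could be taken with finite cost, so one would argue via a limit or a separate check). Combining the pieces gives $C\le \frac{4}{c\,\alpha^\ast}\ln\frac{\gamma}{\mathsf{DS}^\mu_{\alpha^\ast}(f)}$, as desired.
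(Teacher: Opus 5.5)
Your proposal is correct and follows essentially the same route as the paper: Jensen's inequality over the score-weighted leaf distribution gives the first inequality. For the converse you choose $\alpha^\ast, R^\ast$ via \cref{prop:scoreds}, truncate cost at $d = 2\ln(2)/\alpha^\ast$, apply \cref{fact:convexityofe}, and use \cref{cor:expectcost} to show the truncated score-weighted cost is at least a quarter of the full one, which gives the same constant $\frac{32\ln 2}{3\alpha^\ast}$ as the paper.
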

\begin{proof}
    We start by showing the first inequality. Given a randomized algorithm $R$, we have that
    \begin{align*}
        \overline{\mathsf{scost}}_f^{\mu}(R) &= \frac{\E_{\ell \sim L(\mu, R)}[\mathsf{score}^{\mu}_f(\ell) \cdot \mathsf{cost}^{\mu}(\ell)]}{\mathsf{score}_f^{\mu}(R)} \\
        &= \frac{\sum_{\ell \in R} \Pr_{L(\mu, R)}[\ell] \cdot \mathsf{score}^{\mu}_f(\ell) \cdot \mathsf{cost}^{\mu}(\ell)}{\sum_{\ell \in R}\Pr_{L(\mu, R)}[\ell] \cdot \mathsf{score}^{\mu}_f(\ell)} \\
        \mathsf{ds}^{\mu}_{f, \alpha}(R) &= \E_{\ell \sim L(\mu, R)}[\mathsf{score}^{\mu}_f(\ell) \cdot e^{-\alpha \cdot \mathsf{cost}^{\mu}(\ell)}] \\
        &= \sum_{\ell \in R} \Pr_{L(\mu, R)}[\ell] \cdot \mathsf{score}^{\mu}_f(\ell) \cdot e^{-\alpha \cdot \mathsf{cost}^{\mu}(\ell)}
    \end{align*}
    Using Jensen's inequality (\cref{fact:jensen}) on $\overline{\mathsf{scost}}_f^{\mu}(R)$ with the convex function $e^{-\alpha \cdot x}$, we get that
    \begin{align*}
        e^{-\alpha \cdot \overline{\mathsf{scost}}_f^{\mu}(R)} &\leq \frac{\sum_{\ell \in R} \Pr_{L(\mu, R)}[\ell] \cdot \mathsf{score}^{\mu}_f(\ell) \cdot e^{-\alpha \cdot \mathsf{cost}^{\mu}(\ell)}}{\sum_{\ell \in R}\Pr_{L(\mu, R)}[\ell] \cdot \mathsf{score}^{\mu}_f(\ell)} = \frac{\mathsf{ds}^{\mu}_{f, \alpha}(R)}{\mathsf{score}_f^{\mu}(R)}
    \end{align*}
    Therefore,
    \begin{align*}
        \mathsf{DS}^{\mu}_{\alpha}(f) &= \max_R \mathsf{ds}^{\mu}_{f, \alpha}(R) \\
        &\geq \max_{R: \mathsf{score}_f^{\mu}(R) \geq \gamma} \mathsf{ds}^{\mu}_{f, \alpha}(R) \\
        &\geq \max_{R: \mathsf{score}_f^{\mu}(R) \geq \gamma}\mathsf{score}_f^{\mu}(R)\cdot e^{-\alpha \cdot \overline{\mathsf{scost}}_f^{\mu}(R)} \\
        &\geq \gamma \cdot e^{-\alpha \cdot \mathsf{sIC}_{\gamma}^{\mu}(f)}
    \end{align*}
    Taking $\ln$ on both sides gives us the first inequality. For the other direction, given $f, \mu, \gamma$, we pick discount factor $\alpha^{\ast}$ and algorithm $R^{\ast}$ from \cref{prop:scoreds} such that
    \begin{align*}
        \mathsf{ds}_{f, \alpha^{\ast}}^{\mu}(R^{\ast}) &= \mathsf{DS}_{\alpha^{\ast}}^{\mu}(f) ; &\mathsf{score}^{\mu}_f(R^{\ast}) &= \gamma
    \end{align*}
    Then setting $d = \frac{2\ln(2)}{\alpha^{\ast}}$ and using \cref{fact:convexityofe}, we have that
    \begin{align*}
        \mathsf{DS}_{\alpha^{\ast}}^{\mu}(f) &= \E_{\ell \sim L(\mu, R^{\ast})}\left[\mathsf{score}_f^{\mu}(\ell) \cdot e^{-\alpha^{\ast}\mathsf{cost}^{\mu}(\ell)}\right] \\
        &\leq \E_{\ell \sim L(\mu, R^{\ast})}\left[\mathsf{score}_f^{\mu}(\ell) \cdot e^{-\alpha^{\ast}\min\{\mathsf{cost}^{\mu}(\ell), d\}}\right] \\
        &\leq \E_{\ell \sim L(\mu, R^{\ast})}\left[\mathsf{score}_f^{\mu}(\ell) \cdot \left(1-\frac{1-e^{-\alpha^{\ast}d}}{d} \min\{\mathsf{cost}^{\mu}(\ell), d\}\right)\right] \\
        &= \mathsf{score}_f^{\mu}(R^{\ast}) - \frac{3\alpha^{\ast}}{8\ln(2)} \E_{\ell \sim L(\mu, R^{\ast})}\left[\mathsf{score}_f^{\mu}(\ell) \cdot \min\{\mathsf{cost}^{\mu}(\ell), d\}\right] \\
        &= \gamma \left(1-\frac{3\alpha^{\ast}}{8\ln(2)} \cdot \frac{\E_{\ell \sim L(\mu, R^{\ast})}\left[\mathsf{score}_f^{\mu}(\ell) \cdot \min\{\mathsf{cost}^{\mu}(\ell), d\}\right]}{\mathsf{score}_f^{\mu}(R^{\ast})}\right) \\
        & \leq \gamma \cdot e^{-\frac{3\alpha^{\ast}}{8\ln(2) \gamma} \cdot \E_{\ell \sim L(\mu, R^{\ast})}\left[\mathsf{score}_f^{\mu}(\ell) \cdot \min\{\mathsf{cost}^{\mu}(\ell), d\}\right]}
    \end{align*}
    So we now analyze $\E_{\ell \sim L(\mu, R^{\ast})}\left[\mathsf{score}_f^{\mu}(\ell) \cdot \min\{\mathsf{cost}^{\mu}(\ell), d\}\right]$, and use \cref{cor:expectcost} to show that this quantity can not be too small.
    \begin{align*}
        \E_{\ell \sim L(\mu, R^{\ast})}\left[\mathsf{score}_f^{\mu}(\ell) \cdot \mathsf{cost}^{\mu}(\ell)\right] &= \E_{\ell \sim L(\mu, R^{\ast})}\left[\mathsf{score}_f^{\mu}(\ell) \cdot \mathsf{cost}^{\mu}(\ell) X_{\leq d}(\ell)\right] + \E_{\ell \sim L(\mu, R^{\ast})}\left[\mathsf{score}_f^{\mu}(\ell) \cdot \mathsf{cost}^{\mu}(\ell) X_{> d}(\ell)\right] \\
        &\leq 4\E_{\ell \sim L(\mu, R^{\ast})}\left[\mathsf{score}_f^{\mu}(\ell) \cdot \mathsf{cost}^{\mu}(\ell) X_{\leq d}(\ell)\right] + 4d\E_{\ell \sim L(\mu, R^{\ast})}\left[\mathsf{score}_f^{\mu}(\ell) \cdot X_{> d}(\ell)\right] \\
        &= 4 \E_{\ell \sim L(\mu, R^{\ast})}\left[\mathsf{score}_f^{\mu}(\ell) \cdot \min\{\mathsf{cost}^{\mu}(\ell), d\}\right]
    \end{align*}
    Therefore, we have that
    \begin{align*}
        \mathsf{DS}_{\alpha^{\ast}}^{\mu}(f) &\leq \gamma \cdot e^{-\frac{3\alpha^{\ast}}{8\ln(2)} \cdot \frac{\E_{\ell \sim L(\mu, R^{\ast})}\left[\mathsf{score}_f^{\mu}(\ell) \cdot \mathsf{cost}^{\mu}(\ell)\right]}{4\gamma}} \leq \gamma \cdot e^{-\frac{3\alpha^{\ast}}{32\ln(2)} \cdot \mathsf{sIC}_{\gamma}^{\mu}(f)}
    \end{align*}
    Taking $\ln$ on both sides gives us the second inequality.
\end{proof}

\subsection{Proof of Direct Product Theorem}
In this section, we use the tensorization for discounted score (\cref{lem:tensorization}) and the equivalence lemma for success-conditioned information complexity and discounted score (\cref{lem:equivalence}), to establish a direct product theorem lower bound for maximum success-conditioned information complexity.
\sicdpt*
\begin{proof}
    To see this,
    \begin{align*}
        \mathsf{sIC}_{\gamma^n}^{\mu^n}(f^n) &\geq \frac{1}{\alpha^{\ast}} \ln \frac{\gamma^n}{\mathsf{DS}^{\mu^n}_{\alpha^{\ast}}(f^n)} &\text{(For any $\alpha^{\ast}$, using lower bound of \cref{lem:equivalence})} \\
        &\geq \frac{1}{\alpha^{\ast}} \ln \frac{\gamma^n}{\mathsf{DS}^{\mu}_{\alpha^{\ast}}(f)^n} &\text{(Using second part of \cref{lem:tensorization})} \\
        &= n \cdot \frac{1}{\alpha^{\ast}} \ln \frac{\gamma}{\mathsf{DS}^{\mu}_{\alpha^{\ast}}(f)} \\
        &= \Omega\left(n \cdot \mathsf{sIC}_{\gamma}^{\mu}(f)\right) &\text{(Using the $\alpha^{\ast}$ specified by the upper bound of \cref{lem:equivalence})} &\qedhere
    \end{align*}
\end{proof}

\begin{theorem}[Direct Product Theorem for $\mathsf{sIC}_{\gamma}(f)$]
    For every function $f: \{0,1\}^k \rightarrow \{0,1\}$, success probability score parameter $\frac{1}{2} \leq \gamma \leq 1$ and $n \geq 1$,
    \begin{align*}
        \mathsf{sIC}_{\gamma^n}(f^n) = \Omega\left(n\cdot\mathsf{sIC}_{\gamma}(f)\right).
    \end{align*}
\end{theorem}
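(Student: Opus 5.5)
The plan is to deduce this worst-case-distribution version directly from the distributional direct product theorem (\cref{thm:dpt}), by choosing the right distribution on the $n$-fold input. The inequality goes in the right direction for this: since $\mathsf{sIC}_{\gamma^n}(f^n) = \max_{\nu} \mathsf{sIC}^{\nu}_{\gamma^n}(f^n)$ over all distributions $\nu$ on $(\{0,1\}^k)^n$, it suffices to exhibit one distribution $\nu$ for which $\mathsf{sIC}^{\nu}_{\gamma^n}(f^n) = \Omega(n\cdot\mathsf{sIC}_\gamma(f))$.

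First, let $\mu^\ast$ be a distribution on $\{0,1\}^k$ attaining (or, if needed, approaching to within any fixed slack) the maximum in $\mathsf{sIC}_\gamma(f) = \max_\mu \mathsf{sIC}^\mu_\gamma(f)$. The maximum should be attained by a compactness and semicontinuity argument over the simplex, but taking a near-maximizer suffices for an $\Omega(\cdot)$ statement in any case.

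Next, take $\nu = (\mu^\ast)^n$ and apply \cref{thm:dpt} with $\mu = \mu^\ast$. This gives
\begin{align*}
\mathsf{sIC}_{\gamma^n}(f^n) \ge \mathsf{sIC}^{(\mu^\ast)^n}_{\gamma^n}(f^n) = \Omega\left(n\cdot \mathsf{sIC}^{\mu^\ast}_\gamma(f)\right) = \Omega\left(n\cdot\mathsf{sIC}_\gamma(f)\right).
\end{align*}

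The only point needing care is that the constant hidden in $\Omega$ in \cref{thm:dpt} is uniform, independent of $\mu$, $\gamma$ and $n$. Tracing the proof confirms this. Its lower half uses the first inequality of \cref{lem:equivalence}, which holds for every $\alpha$, and the tensorization \cref{lem:tensorization}, which is exact. Its upper half uses the converse of \cref{lem:equivalence}, whose constant is $\frac{32\ln 2}{3}$ and does not depend on any parameters. Hence the bound holds with the absolute constant $\frac{3}{32\ln 2}$.

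The remaining issue is the edge case where $\gamma$ is below the trivial guessing score, which \cref{prop:scoreds} excludes. There $\mathsf{sIC}^{\mu^\ast}_\gamma(f)=0$, so the claim is trivial. I expect no step to be a real obstacle beyond checking this uniformity of constants.
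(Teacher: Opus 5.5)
Your proposal is correct and is essentially the paper's own argument, which is the single line ``maximize over the choice of $\mu$ in \cref{thm:dpt}'': the maximum over all distributions on $(\{0,1\}^k)^n$ is at least the value at the product distribution $(\mu^\ast)^n$. Your extra checks are details the paper leaves implicit: that the constant hidden in \cref{thm:dpt} is the absolute constant $\frac{3}{32\ln 2}$, independent of $\mu,\gamma,n$; the near-maximizer hedge; and the trivial case where $\gamma$ is below the guessing score.
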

\begin{proof}
    Maximizing over the choice of $\mu$ in \cref{thm:dpt}, we get the claim.
\end{proof}

\noindent We now show a distributional direct product theorem for decision tree size.

\begin{theorem}[Distributional Direct Product Theorem for Tree Size]\label{thm:treedirectproduct}
    For every function $f: \{0,1\}^k \rightarrow \{0,1\}$, distribution $\mu$ on $\{0,1\}^k$ and success probability $\frac{1}{2} \leq \gamma \leq 1$, we have that
    \begin{align*}
        \Omega(n \cdot \mathsf{IC}^{\mu}_{\gamma}(f)) - n \leq \log\left(\overline{\mathsf{RSize}}_{\gamma^n}^{\mu^n}(f^n)\right)
    \end{align*}
\end{theorem}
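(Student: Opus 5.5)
The plan is to chain three results already in the paper: the upper bound of \cref{lem:sicub}, the direct product theorem \cref{thm:dpt}, and the comparison \cref{prop:srcvsrc} between $\mathsf{sIC}$ and $\IC$.

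\emph{Upper bound on $\mathsf{sIC}$ for $n$ copies.} The second inequality of \cref{lem:sicub} gives
\begin{align*}
    \mathsf{sIC}^{\mu^n}_{\gamma^n}(f^n) \leq n + \log\left(\overline{\mathsf{RSize}}_{\gamma^n}^{\mu^n}(f^n)\right).
\end{align*}
The additive $n$ comes from the factor $1/\mathsf{success}$, which can be as large as $1/\gamma^n \le 2^n$ once we take $\log$. This is the source of the $-n$ term in the statement.

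\emph{Direct product lower bound.} Next, \cref{thm:dpt} gives
\[
\mathsf{sIC}^{\mu^n}_{\gamma^n}(f^n) = \Omega\bigl(n\cdot \mathsf{sIC}^{\mu}_{\gamma}(f)\bigr).
\]

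\emph{Passing from $\mathsf{sIC}$ to $\IC$.} By \cref{prop:srcvsrc}, $\mathsf{sIC}^\mu_\gamma(f)\ge \frac12 \IC^\mu_\gamma(f)$. Chaining the three inequalities yields
\[
\Omega\bigl(n\,\IC^\mu_\gamma(f)\bigr) \le n + \log\bigl(\overline{\mathsf{RSize}}_{\gamma^n}^{\mu^n}(f^n)\bigr).
\]
Rearranging gives exactly the claimed bound.

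The only step that needs care is the use of \cref{thm:dpt}, which is only meaningful when $\gamma>\frac12$. At $\gamma=\frac12$ we have $\IC^\mu_{1/2}(f)=0$, since the trivial guessing algorithm suffices, so the inequality holds trivially in that case. For the other endpoint $\gamma=1$, I would check that the hidden constant in \cref{thm:dpt} is uniform in $\gamma$. It is: the constant from the equivalence lemma is the absolute $\frac{3}{32\ln 2}$. Apart from this check the argument is direct, and I do not anticipate a real obstacle.
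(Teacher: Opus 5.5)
Your proposal is correct and matches the paper's proof: it chains the second inequality of \cref{lem:sicub}, \cref{thm:dpt}, and \cref{prop:srcvsrc} in the same order to get $\log\bigl(\overline{\mathsf{RSize}}_{\gamma^n}^{\mu^n}(f^n)\bigr) \ge \Omega(n\,\mathsf{IC}^\mu_\gamma(f)) - n$. Your endpoint remarks are correct sanity checks that the paper does not spell out: $\mathsf{IC}^\mu_{1/2}(f)=0$, and the constant $\frac{3}{32\ln 2}$ from \cref{lem:equivalence} does not depend on $\gamma$.
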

\begin{proof}
    Note that,
    \begin{align*}
        \log\left(\overline{\mathsf{RSize}}_{\gamma^n}^{\mu^n}(f^n)\right) &\geq \mathsf{sIC}^{\mu^n}_{\gamma^n}(f) - n &\text{(Using \cref{lem:sicub})}\\
        &= \Omega(n \cdot \mathsf{sIC}^{\mu}_{\gamma}(f)) - n &\text{(Using \cref{thm:dpt})} \\
        &= \Omega(n \cdot \mathsf{IC}^{\mu}_{\gamma}(f)) - n &\text{(Using \cref{prop:srcvsrc})}&\qedhere
    \end{align*}
\end{proof}
\noindent We can use this result to obtain a direct product theorem with decision-tree size on both sides of the inequality, using the relation between information complexity and tree size established in \cref{sec:icinquery}. A direct product theorem for randomized tree size was obtained previously by \cite{Dru12} using different techniques.

\begin{corollary}\label{cor:treedirectproduct}
    For every function $f: \{0,1\}^k \rightarrow \{0,1\}$, distribution $\mu$ on $\{0,1\}^k$, success probability $\frac{1}{2} \leq \gamma \leq 1$ and $\epsilon \in \left(0, \gamma-\frac{1}{2}\right]$, we have that
    \begin{align*}
        \frac{n\epsilon^2}{4}\Omega\left(\log\left(\mathsf{RSize}_{\gamma-\epsilon}^{\mu}(f)\right)\right) - n \leq \log\left(\overline{\mathsf{RSize}}_{\gamma^{n}}^{\mu^n}(f^n)\right) \leq n\log\left(\overline{\mathsf{RSize}}_{\gamma}^{\mu}(f)\right)
    \end{align*}
\end{corollary}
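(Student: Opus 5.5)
The upper bound follows the first half of the proof of \cref{cor:sizeds}. Take any randomized tree $A$ computing $f$ on $\mu$ with success probability at least $\gamma$, and let $B$ run $n$ independent copies of $A$ on the $n$ coordinates. The copies are independent, so $B$ computes $f^n$ on $\mu^n$ with overall success probability at least $\gamma^n$, the product of the per-copy success probabilities. The size of the product tree is the product of the sizes of its components. Independence of the randomness then gives $\E[\mathsf{size}] = \E_{D\sim A}[\mathsf{size}(D)]^n$. Minimizing over $A$ and taking logarithms yields $\log\overline{\mathsf{RSize}}^{\mu^n}_{\gamma^n}(f^n) \le n\log\overline{\mathsf{RSize}}^{\mu}_{\gamma}(f)$.

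For the lower bound I chain \cref{thm:treedirectproduct} with \cref{thm:compression}. \Cref{thm:treedirectproduct} gives
\begin{align*}
\log\left(\overline{\mathsf{RSize}}_{\gamma^n}^{\mu^n}(f^n)\right) \ge c\, n\, \IC^{\mu}_{\gamma}(f) - n
\end{align*}
for an absolute constant $c>0$. That constant comes from the equivalence lemma together with \cref{prop:srcvsrc}, and it does not depend on $\gamma$ or $\epsilon$. \Cref{thm:compression} applies because $0<\epsilon\le\gamma-\frac12$, and it gives $\IC^\mu_\gamma(f) \ge \frac{\epsilon^2}{4}\log\mathsf{RSize}^\mu_{\gamma-\epsilon}(f)$. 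Substituting, the right-hand side above is at least $\frac{n\epsilon^2}{4}\, c \log\mathsf{RSize}^\mu_{\gamma-\epsilon}(f) - n$, which is exactly the claimed form $\frac{n\epsilon^2}{4}\Omega(\log\mathsf{RSize}^\mu_{\gamma-\epsilon}(f)) - n$.

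Neither step should be a real obstacle. The only thing to check is that the hidden constant in \cref{thm:treedirectproduct} is uniform in $\gamma$, which it is, since the bound in the equivalence lemma is $\frac{32\ln 2}{3}$ independent of parameters. For the edge case $\gamma=\frac12$ the interval for $\epsilon$ is empty, so the statement is vacuous there.
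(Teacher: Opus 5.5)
Your proposal is correct and is essentially the paper's proof. The upper bound comes from running $n$ independent copies of a tree for $f$, where sizes multiply and success probabilities multiply under $\mu^n$, and the lower bound comes from chaining \cref{thm:treedirectproduct} with \cref{thm:compression}; your remark that the $\Omega(\cdot)$ constant is uniform in $\gamma$ and $\epsilon$ (via the $\frac{32\ln 2}{3}$ in \cref{lem:equivalence} and the factor $\frac12$ in \cref{prop:srcvsrc}) is a correct detail the paper leaves implicit.
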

\begin{proof}
    The upper bound follows by computing $f^n$ by running $n$ independent copies of a tree computing $f$. The lower bound follows from \cref{thm:treedirectproduct} and \cref{thm:compression}.
\end{proof}

We can also use this to get a distribution-free direct product theorem for randomized decision tree size complexity.
\begin{corollary}\label{cor:fulltreedirectproduct}
    For every function $f: \{0,1\}^k \rightarrow \{0,1\}$ and success probability $\frac{1}{2} < \gamma \leq 1$, we have that
    \begin{align*}
        \max\left\{n \log(2\gamma), \frac{nc_{\gamma}}{-\ln(1-\gamma)}\Omega\left(\log\left(\mathsf{RSize}_{\gamma}(f)\right)\right) - n \right\} \leq \log\left(\overline{\mathsf{RSize}}_{\gamma^{n}}(f^n)\right) \leq n\log\left(\overline{\mathsf{RSize}}_{\gamma}(f)\right),
    \end{align*}
    where
    \begin{align*}
        c_{\gamma} = \begin{cases}
            \gamma^3 &\text{when $\gamma = \frac{2}{3}$},\\
            (2\gamma - 1)(1-\gamma)^2 &\text{when $\gamma = 1-o(1)$},\\
            (2\gamma + 1)(2\gamma-1)^2 &\text{when $\gamma = \frac{1}{2}+o(1)$}
        \end{cases}
    \end{align*}
\end{corollary}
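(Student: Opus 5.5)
Since $\overline{\mathsf{RSize}}$ here is the expected size with worst-case success, I would prove the two sides of \cref{cor:fulltreedirectproduct} separately: a direct construction for the upper bound, and for the lower bound a reduction to \cref{cor:treedirectproduct} at one well-chosen hard distribution.

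\emph{Upper bound.} Let $R$ be optimal for $\overline{\mathsf{RSize}}_{\gamma}(f)$, and let the tree for $f^n$ run $n$ independent copies of $R$, one on each input. For every fixed input $(x_1,\dots,x_n)$, the copies use independent randomness, so the probability that all copies are correct is at least $\gamma^n$. The size of the combined tree is the product of the sizes, so by independence its expected size is $\E[\mathsf{size}(D)]^n$. Taking logarithms gives the right-hand inequality.

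\emph{Trivial lower bound $n\log(2\gamma)$.} Assume $f$ is nonconstant; the constant case is degenerate and would be handled separately. Pick inputs $x,y$ with $f(x)\neq f(y)$, and let $\mu$ be uniform on $\{x,y\}$. Under $\mu^n$ there are $2^n$ equally likely inputs to $f^n$, and their $2^n$ output strings are pairwise distinct. A leaf outputs a single string, so it is correct on at most one of these inputs, i.e. on mass at most $2^{-n}$. Hence a deterministic tree of size $s$ succeeds with probability at most $s\,2^{-n}$ on $\mu^n$. Worst-case success at least $\gamma^n$ implies distributional success at least $\gamma^n$, which forces $\E[\mathsf{size}]\geq (2\gamma)^n$.

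\emph{Main lower bound.} I would use the minimax theorem \cref{lem:minmaxsize} to fix $\mu$ with $\mathsf{RSize}^{\mu}_{\gamma'}(f)=\mathsf{RSize}_{\gamma'}(f)$ for an auxiliary level $\gamma'$ chosen below. Worst-case success dominates distributional success, so
\[
\log \overline{\mathsf{RSize}}_{\gamma^n}(f^n)\geq \log\overline{\mathsf{RSize}}^{\mu^n}_{\gamma^n}(f^n).
\]
Applying \cref{cor:treedirectproduct} with $\epsilon=\gamma-\gamma'$ then gives
\[
\log\overline{\mathsf{RSize}}^{\mu^n}_{\gamma^n}(f^n)\geq \frac{n\epsilon^2}{4}\,\Omega\bigl(\log \mathsf{RSize}_{\gamma-\epsilon}(f)\bigr)-n.
\]
The remaining task is to convert $\mathsf{RSize}_{\gamma-\epsilon}(f)$ back into $\mathsf{RSize}_{\gamma}(f)$ by success amplification in the worst-case setting. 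Running $t$ independent copies of a tree with worst-case success $\gamma-\epsilon$ and outputting the majority raises the success to at least $\gamma$ once $t$ is large enough. The resulting tree has size at most $\mathsf{RSize}_{\gamma-\epsilon}(f)^t$, so
\[
\log\mathsf{RSize}_{\gamma}(f)\leq t\,\log\mathsf{RSize}_{\gamma-\epsilon}(f).
\]
By the Chernoff bound (\cref{fact:chernoff}), $t$ can be taken to be roughly $\ln\frac{1}{1-\gamma}\big/(\text{bias of }\gamma-\epsilon)^2$ times a function of $\gamma$. Substituting, the lower bound becomes
\[
\frac{n\epsilon^2}{4t}\,\Omega\bigl(\log\mathsf{RSize}_\gamma(f)\bigr)-n,
\]
which is of the form claimed, with $c_\gamma$ absorbing $\epsilon^2$ and the bias factor.

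\emph{Main obstacle.} The delicate step is choosing $\epsilon$ in each regime to balance the $\epsilon^2$ loss from compression against the amplification cost $t$, so as to land exactly on the stated $c_\gamma$. For constant $\gamma=\tfrac23$ I would take $\epsilon=\Theta(1)$. For $\gamma=1-o(1)$ I would take $\epsilon\approx (1-\gamma)$, so that $\gamma-\epsilon$ is still bounded away from $\tfrac12$ and $t=O(\ln\frac1{1-\gamma}/(2\gamma-1))$. For $\gamma=\tfrac12+o(1)$ I would take $\epsilon\approx(2\gamma-1)/4$ and pay a bias-squared factor in $t$. If the majority amplification does not reproduce the exact $(2\gamma-1)$ factors, I would fall back to proving the bound up to constants, which is what the $\Omega$ allows.
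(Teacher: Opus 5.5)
Your route is the paper's route. The upper bound uses independent copies. The $n\log(2\gamma)$ term uses a distribution making all $2^n$ outputs of $f^n$ equally likely; your uniform distribution on $\{x,y\}$ is a concrete choice of the paper's $\mu^\ast$. The main term applies \cref{cor:treedirectproduct} at a minimax-hard $\mu$ and then amplifies by majority vote. Those pieces are fine. However, the amplification step as you plan it does not give the stated $c_\gamma$ when $\gamma=\frac12+o(1)$, and your fallback does not repair this.

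Write $\beta=\gamma-\frac12$, so $\epsilon=\beta/2$, and you must boost worst-case success from $\frac12+\frac\beta2$ to $\frac12+\beta$. Suppose $t$ pays a bias-squared factor, as any Chernoff or Hoeffding tail bound forces here: making $e^{-2t(\beta/2)^2}$ smaller than $\frac12-\beta$ needs $t>2\ln 2/\beta^2$. Then $\epsilon^2/(4t)=\Theta(\beta^4)$, so you prove the bound with $c_\gamma=\Theta((2\gamma-1)^4)$ instead of $(2\gamma+1)(2\gamma-1)^2=\Theta((2\gamma-1)^2)$. Re-tuning $\epsilon$ cannot help: with $t$ of order $(\gamma-\epsilon-\frac12)^{-2}$ you get $\epsilon^2/t\lesssim\epsilon^2(\gamma-\epsilon-\frac12)^2\le\beta^4/16$. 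The lost factor $(2\gamma-1)^2$ tends to $0$ in this regime, so it is not a constant that the $\Omega$ may absorb.

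The missing observation is that doubling a \emph{small} bias costs only $O(1)$ repetitions. Tail bounds cannot see this because the target error $\frac12-\beta$ is close to $\frac12$. By symmetry, the majority of $t=2m+1$ independent votes, each correct with probability $\frac12+b$, is correct with probability $\frac12+t\binom{2m}{m}4^{-m}\,b+O(b^3)$. For $t=7$ the coefficient is $\frac{35}{16}>2$, so $t=7$ suffices for all small enough $\beta$. This gives $\epsilon^2/(4t)=\Theta(\beta^2)$, which matches the statement because $-\ln(1-\gamma)\to\ln 2$. The paper's proof also just says ``using Chernoff bound'' at this point, so it has the same thin spot.

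There are two smaller corrections. First, for $\gamma=\frac23$ your $\epsilon$ must be strictly below $\gamma-\frac12=\frac16$, e.g.\ $\epsilon=\frac1{12}$. At $\epsilon=\frac16$ you would have to amplify from success exactly $\frac12$, which is impossible; the paper's choice $\epsilon=\gamma/4$ lands exactly there. Second, the constant-$f$ case cannot be ``handled separately.'' If $f$ is constant then $\overline{\mathsf{RSize}}_{\gamma^n}(f^n)=1$ while $n\log(2\gamma)>0$, so the $n\log(2\gamma)$ term is simply false there. The statement implicitly assumes $f$ is nonconstant, as does the paper's choice of $\mu^\ast$.
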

\begin{proof}
    For the upper bound, consider any randomized decision tree $A$ of expected size $T$ which computes $f$ with success probability at least $\gamma$ (over the internal randomness). Then running $n$ independent copies of $A$ gives a tree of expected size $T^n$ which has success probability at least $\gamma^n$. Taking $\log$ on both sides gives the upper bound.

    For the lower bound, we know from \cref{cor:treedirectproduct} and \cref{lem:minmaxsize} that for $\epsilon \in \left(0, \gamma-\frac{1}{2}\right]$
    \begin{align*}
        \log\left(\overline{\mathsf{RSize}}_{\gamma^{n}}(f^n)\right) &\geq \frac{n\epsilon^2}{4}\Omega\left(\log\left(\mathsf{RSize}_{\gamma-\epsilon}(f)\right)\right) - n
    \end{align*}
    On setting
    \begin{align*}
        \epsilon = \begin{cases}
        \frac{\gamma}{4} &\text{when $\gamma = \frac{2}{3}$},\\
        1-\gamma &\text{when $\gamma = 1-o(1)$},\\
        \frac{\gamma}{2}-\frac{1}{4} &\text{when $\gamma = \frac{1}{2}+o(1)$}
        \end{cases}
    \end{align*}
    and using Chernoff bound (\cref{fact:chernoff}) to do error-reduction, we get
    \begin{align*}
        \log\left(\overline{\mathsf{RSize}}_{\gamma^{n}}(f^n)\right) \geq \begin{cases}
            \frac{n\gamma^3}{-\ln(1-\gamma)}\Omega\left(\log\left(\mathsf{RSize}_{\gamma}(f)\right)\right) - n &\text{when $\gamma = \frac{2}{3}$},\\
            \frac{n(2\gamma - 1)(1-\gamma)^2}{-\ln(1-\gamma)}\Omega\left(\log\left(\mathsf{RSize}_{\gamma}(f)\right)\right) - n &\text{when $\gamma = 1-o(1)$},\\
            \frac{n(2\gamma + 1)(2\gamma-1)^2}{-\ln(1-\gamma)}\Omega\left(\log\left(\mathsf{RSize}_{\gamma}(f)\right)\right) - n &\text{when $\gamma = \frac{1}{2}+o(1)$}
        \end{cases}
    \end{align*}
    For the other lower bound, we know that there are $2^n$ possible outputs for $f^n$. Choose a distribution $\mu^{\ast}$ on inputs $x$ to $f^n$ such that each possible output of $f^n$ has equal probability under $\mu^{\ast}$. Then for any randomized decision tree $R$ computing $f^n$, we have that
    \begin{align*}
        \gamma^n &\leq \min_{x \in (\{0,1\}^m)^n} \Pr_{D \sim R}[D(x) = f^n(x)] \leq \E_{x\sim \mu^{\ast}} \E_{D \sim R}[\mathbb{I}_{D(x) = f^n(x)}] \\
        &\leq \E_{D \sim R}\left[\frac{\mathsf{size}(D)}{2^n}\right] = \frac{1}{2^n} \overline{\mathsf{RSize}}(R)
    \end{align*}
    Minimizing over choice of $R$, we get that
    \begin{align*}
        \overline{\mathsf{RSize}}_{\gamma^{n}}(f^n) &\geq (2\gamma)^n \qedhere
    \end{align*}
\end{proof}

\section{Information and Query Complexity with a Parity Gadget}

In this section, we relate the information complexity to the depth of a randomized query algorithm. In particular, for a function $f$ and an input distribution $\mu$, we consider the information complexity of the function $f \circ \oplus_2$ with respect to a suitable distribution $\nu$. This distribution is sampled as follows: sample a string $x \sim \mu$, and then for each bit $x_i$, sample two uniformly random bits conditioned on their parity being equal to $x_i$. We show that $\IC^\nu_\gamma(f \circ \oplus_2)$ is equal to a measure $\mathsf{M}^{\mu}_{\gamma}(f)$, which is defined as follows:
\begin{align*}
    \mathsf{M}^{\mu}_{\gamma}(f) := \min_{R: \mathsf{success}^{\mu}_f(R) \geq \gamma} \overline{\mathsf{depth}}^{\mu}(R) + \info^{\mu}(R)
\end{align*}
That is, the measure $\mathsf{M}^{\mu}_{\gamma}(f)$ asks an algorithm to minimize the sum of its information cost and expected depth (over both the internal randomness and the distribution $\mu$). Therefore, $\mathsf{M}^{\mu}_{\gamma}(f)$ is large if either the distributional query complexity or the information complexity of $f$ is large. We first show that in fact, $\mathsf{M}^{\mu}_{\gamma}(f) = \Theta\left(\overline{\mathsf{R}}^{\mu}_{\gamma}(f)\right)$.

\begin{proposition}\label{prop:mvsrdoublebar}
    For any function $f: \{0,1\}^k \rightarrow \{0,1\}$, distribution $\mu$ on $\{0,1\}^k$ and success probability $\gamma$,
    \begin{align*}
        \mathsf{M}^{\mu}_{\gamma}(f) = \Theta\left(\overline{\mathsf{R}}^{\mu}_{\gamma}(f)\right)
    \end{align*}
\end{proposition}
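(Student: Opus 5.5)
The plan is to prove both inequalities directly from the definitions, with the key fact being that for every deterministic tree $D$ the information cost is at most the expected depth. \textbf{Lower bound.} Since $\info^{\mu}(R) \geq 0$, every $R$ with $\mathsf{success}^{\mu}_f(R) \geq \gamma$ satisfies
\begin{align*}
\overline{\mathsf{depth}}^{\mu}(R) + \info^{\mu}(R) \geq \overline{\mathsf{depth}}^{\mu}(R) \geq \overline{\mathsf{R}}^{\mu}_{\gamma}(f).
\end{align*}
Minimizing over $R$ gives $\mathsf{M}^{\mu}_{\gamma}(f) \geq \overline{\mathsf{R}}^{\mu}_{\gamma}(f)$.

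\textbf{Upper bound.} I will show $\info^{\mu}(D) \leq \overline{\mathsf{depth}}^{\mu}(D)$ for every deterministic $D$. Then for any $R$ we get $\overline{\mathsf{depth}}^{\mu}(R) + \info^{\mu}(R) \leq 2\,\overline{\mathsf{depth}}^{\mu}(R)$. Taking $R$ to be the minimizer for $\overline{\mathsf{R}}^{\mu}_{\gamma}(f)$ yields $\mathsf{M}^{\mu}_{\gamma}(f) \leq 2\,\overline{\mathsf{R}}^{\mu}_{\gamma}(f)$.

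To prove $\info^{\mu}(D) \leq \overline{\mathsf{depth}}^{\mu}(D)$, use the chain-rule decomposition from the proof of \cref{lem:qicdslb}:
\begin{align*}
\info^{\mu}(D) = \rI(L;\rX) = \sum_{i=1}^t \rI(q_i;\rX \mid p_{i-1}, l_i).
\end{align*}
Here the location $l_i$ is a deterministic function of $p_{i-1}$, so conditioning on it adds nothing. Each term is at most $H(q_i \mid p_{i-1}) \leq \Pr[\text{the path has length} \geq i]$. This holds because $q_i$ is a single bit when the $i$th query occurs and is the empty string otherwise. Summing over $i$ gives exactly $\E[\mathsf{depth}(D,X)]$.

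Equivalently, $H(L)$ is at most the expected length of the prefix-free binary code given by the root-to-leaf query answers, which is the expected depth. Both routes are standard, so there is no real obstacle. The only care needed is to make sure the inequality is applied per deterministic tree and then averaged over $D \sim R$. That averaging is linear, so both sides pass to the randomized tree unchanged.
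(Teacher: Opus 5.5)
Your proof is correct. It differs from the paper's in the upper bound and yields the constant $2$ instead of $3$; the lower bound is the same.

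\textbf{The paper's route.} It works per leaf. It writes $\overline{\mathsf{depth}}^{\mu}(D)-\info^{\mu}(D)=\sum_{\ell}p_\ell(d_\ell+\log p_\ell)$ and bounds each summand separately by minimizing $dx+x\log x$ over $x\in(0,1]$. It then sums these bounds with Kraft's inequality to get $\info^{\mu}(D)\le\overline{\mathsf{depth}}^{\mu}(D)+\frac12$. The additive constant is absorbed in the step $2\,\overline{\mathsf{depth}}^{\mu}(D)+\frac12\le 3\,\overline{\mathsf{depth}}^{\mu}(D)$. That step tacitly assumes a nontrivial deterministic tree has expected depth at least $1$, and it fails literally for the trivial tree, even though the conclusion still holds there.

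\textbf{Your route.} You prove the exact source-coding bound $\info^{\mu}(D)=H(L)\le\overline{\mathsf{depth}}^{\mu}(D)$. You get it from the query-by-query chain rule already set up in \cref{lem:qicdslb}. There is no additive loss, no corner case and no calculus.

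\textbf{Why the paper loses a constant.} Minimizing each summand independently ignores the constraint $\sum_\ell p_\ell=1$. Incidentally, the true minimum of $dx+x\log_2 x$ is $-2^{-d}/(e\ln 2)$, slightly below the stated $-2^{-(d+1)}$; this is harmless for the $\Theta$ claim. Within the per-leaf framework, your exact bound follows by applying the log-sum inequality to the same expression: $\sum_\ell p_\ell\log\frac{p_\ell}{2^{-d_\ell}}\ge-\log\sum_\ell 2^{-d_\ell}\ge 0$ by Kraft.

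The paper's argument keeps everything in the surprisal language used elsewhere in the paper. Yours is shorter and tight.
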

\begin{proof}
    The lower bound on $\mathsf{M}^{\mu}_{\gamma}(f)$ follows from definition. For the upper bound, consider a deterministic decision tree $D$. For $d \geq 1$, let $L_d$ be the set of leaves of $D$ at depth $d$ and $p_\ell$ be the probability of seeing leaf $\ell$ under the distribution $\mu$. Then,
    \begin{align*}
        \overline{\mathsf{depth}}^{\mu}(D) - \info^{\mu}(D) = \sum_{d \geq 1} \sum_{\ell \in L_d} p_{\ell} (d + \log(p_{\ell}))
    \end{align*}
    We now look at the function $dx + x \log(x)$ for $x \in (0, 1]$ for a fixed constant $d \in \mathbb{N}$. The value of this function decreases for $x \in \left(0, \frac{1}{2^{d+1}}\right]$, and then increases for $x \in \left( \frac{1}{2^{d+1}}, 1\right]$. So the minimum value of $dx + x \log(x)$ is $\frac{-1}{2^{d+1}}$, which occurs at $x = \frac{1}{2^{d+1}}$. Therefore,
    \begin{align*}
        \overline{\mathsf{depth}}^{\mu}(D) - \info^{\mu}(D) \geq - \sum_{d \geq 1}\sum_{\ell \in L_d} \frac{1}{2^{d+1}} = -\frac{1}{2} \sum_{d \geq 1} \frac{|L_d|}{2^d} \geq -\frac{1}{2}
    \end{align*}
    Here, the last inequality follows from Kraft's inequality (\cref{fact:kraft}). As such,
    \begin{align*}
        \overline{\mathsf{depth}}^{\mu}(D) + \info^{\mu}(D) \leq 2\cdot \overline{\mathsf{depth}}^{\mu}(D) + \frac{1}{2} \leq 3\cdot \overline{\mathsf{depth}}^{\mu}(D) &\qedhere
    \end{align*}
\end{proof}

\begin{lemma}\label{lem:ictoquery}
    For any function $f: \{0,1\}^k \rightarrow \{0,1\}$, distribution $\mu$ on $\{0,1\}^m$, success probability $\gamma$, consider the function $f \circ \oplus_2$ and $\nu$ be the distribution sampled using the process above. Then
    \begin{align*}
        \IC^\nu_\gamma(f \circ \oplus_2) = \Theta\left(\overline{\mathsf{R}}^{\mu}_{\gamma}(f)\right)
    \end{align*}
\end{lemma}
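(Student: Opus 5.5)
The plan is to show $\IC^\nu_\gamma(f \circ \oplus_2) = \mathsf{M}^\mu_\gamma(f)$, up to constant factors, and then invoke \cref{prop:mvsrdoublebar} to replace $\mathsf{M}^\mu_\gamma(f)$ by $\Theta(\overline{\mathsf{R}}^\mu_\gamma(f))$. The structural fact I will use throughout is this: under $\nu$, the two bits $(y_i^1,y_i^2)$ of a block are uniform subject to $y_i^1\oplus y_i^2=x_i$. Consequently, the first bit queried in a block is a uniform coin independent of everything else, and the second bit reveals $x_i$.

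\textbf{Upper bound.} Start from a randomized tree $R$ for $f$ on $\mu$ with success at least $\gamma$ that (nearly) attains $\mathsf{M}^\mu_\gamma(f)$. Build $R'$ for $f\circ\oplus_2$: whenever $R$ queries $x_i$, $R'$ queries both $y_i^1$ and $y_i^2$ and feeds their parity to $R$. Success is preserved. Fix a deterministic $D\in\mathrm{supp}(R)$ with corresponding tree $D'$. A leaf of $D'$ is determined by a leaf $\ell$ of $D$ (equivalently, by the parities) together with the values of the first bits queried along the way. Given $\ell$, these first bits are uniform and independent, and there are $\mathsf{depth}(\ell)$ of them. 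Hence
\[
\Pr_\nu[\ell']=\Pr_\mu[\ell]\,2^{-\mathsf{depth}(\ell)},
\]
so $\mathsf{surprisal}^\nu(\ell')=\mathsf{surprisal}^\mu(\ell)+\mathsf{depth}(\ell)$. Taking expectations gives $\info^\nu(D')=\info^\mu(D)+\overline{\mathsf{depth}}^\mu(D)$, and therefore $\IC^\nu_\gamma(f\circ\oplus_2)\le \mathsf{M}^\mu_\gamma(f)$.

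\textbf{Lower bound.} Start from any tree $R'$ for $f\circ\oplus_2$ on $\nu$ and simulate it by a tree $R$ on $x\sim\mu$. The simulation works block by block. When $R'$ first queries a bit of block $i$, $R$ answers with a fresh private random bit and makes no query. When $R'$ queries the second bit of block $i$, $R$ queries $x_i$ and answers with $x_i$ XOR the earlier random bit. This produces exactly the joint distribution of $R'$ on $\nu$, so success is preserved.

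For cost, fix a deterministic $D'$ and compare it with its simulation. A leaf $\ell'$ of $D'$ in which $c$ blocks are completed and $s$ blocks are touched only once has
\[
\Pr_\nu[\ell']=\Pr_\mu[\ell]\,2^{-c-s}.
\]
Here $\ell$ is the induced partial assignment to $x$, and its probability is $\Pr_\mu[x_{\text{completed}}=\text{parities}]$. Thus $\mathsf{surprisal}^\nu(\ell')=c+s-\log\Pr_\mu[\ell]$, while the number of $x$-queries made by $R$ on this path is $c$.

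Now average over the private coins. Conditioned on the coins $r$, $R$ is a deterministic tree $D_r$, and its expected depth equals $\E[c]\le \info^\nu(D')$. It remains to show $\E_r[\info^\mu(D_r)]\le \info^\nu(D')$. A leaf of $D_r$ corresponds to a set of leaves of $D'$, so by the grouping property of entropy its probability is at least the $\mu$-probability of the corresponding $x$-assignment. Averaging, $\E_r \info^\mu(D_r)\le \E[-\log\Pr_\mu[\ell]]\le \info^\nu(D')$. Adding the two bounds gives
\[
\mathsf{M}^\mu_\gamma(f)\le \E[c]+\E_r\info^\mu(D_r)\le 2\,\info^\nu(D')\le 2\,\IC^\nu_\gamma(f\circ\oplus_2).
\]

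\textbf{Main obstacle.} The delicate step is this last entropy comparison. The leaves of $D_r$ are not literally the $x$-projections of leaves of $D'$: different coin values route the simulation differently. One must therefore check that $\Pr_\mu$ of a leaf of $D_r$ equals $\Pr_\mu$ of the $x$-assignment along its path. That requires conditioning on $r$ and using \cref{fact:indepinputs}-style independence of blocks, after which the entropy of the leaf of $D_r$ given $r$ is exactly $\E[-\log \Pr_\mu[\ell]\mid r]$. If this bookkeeping proves awkward, the fallback is to bound $\info^\mu(D_r)\le H(\text{leaf of }D'\mid r)\le H(\text{leaf of }D')$ directly, using that $D_r$'s leaf is a function of $D'$'s leaf.
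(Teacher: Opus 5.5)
Your proposal is correct and is essentially the paper's proof. It uses the same two simulations and the same computation $\Pr_\nu[\ell']=\Pr_\mu[\ell_x]\,2^{-c-s}$, reduces to $\mathsf{M}^\mu_\gamma(f)$, and finishes with \cref{prop:mvsrdoublebar}. The only difference is that the paper combines the depth and information terms in a single step, discarding just the $2^{-s}$ factor, to get the exact identity $\IC^\nu_\gamma(f\circ\oplus_2)=\mathsf{M}^\mu_\gamma(f)$; you bound each term separately by $\info^\nu(D')$ and lose a harmless factor of $2$, and the leaf correspondence you flag as the main obstacle does hold (given $r$, leaves of $D_r$ are in bijection with the leaves of $D'$ consistent with $r$), which the paper uses implicitly.
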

\begin{proof}
    We will show that $\IC^\nu_\gamma(f \circ \oplus_2) = \mathsf{M}^{\mu}_{\gamma}(f)$, and then the claim follows from \cref{prop:mvsrdoublebar}. We first show the upper bound $\IC^\nu_\gamma(f \circ \oplus_2) \leq \mathsf{M}^{\mu}_{\gamma}(f)$. For every input $y = y_1 \ldots y_{2k}$ to $f \circ \oplus_2$, we will denote $x_i = y_{2i-1} \oplus y_{2i}$. Given an algorithm $A$ for computing $f$ with success probability $\gamma$, we define an algorithm $B$ for computing $f \circ \oplus_2$ which simulates $A$. For each query to $x_i$ of $A$, $B$ makes a query to both $y_{2i-1}$ and $y_{2i}$, and sets $x_i = y_{2i-1} \oplus y_{2i}$. By the choice of distribution $\nu$, we see that $B$ simulates $A$ on the distribution $\mu$, therefore the success probability of $B$ is at least $\gamma$. Now consider a deterministic tree $D$ in the support of $A$. We identify each leaf $\ell$ of $D$ with a partial assignment $p_\ell$ on $x$. For each such partial assignment $p_\ell$, there are $2^{|p_\ell|}$ partial assignments on $y$. Moreover, each of these partial assignments are equally likely under $\nu$, and their probability is equal to $\frac{\Pr_{\mu}[p_\ell]}{2^{|p_\ell|}}$. We now compute the information cost of the corresponding deterministic tree $D'$ in the support of $B$.
    \begin{align*}
        \info^{\nu}(D') &= \sum_{\ell \in \mathsf{supp}(D')} \Pr_{\nu}[\ell] \log\left(\frac{1}{\Pr_{\nu}[\ell]}\right) \\
        &= \sum_{\ell \in \mathsf{supp}(D)} 2^{|p_{\ell}|}\frac{\Pr_{\mu}[\ell]}{2^{|p_{\ell}|}} \log\left(\frac{2^{|p_{\ell}|}}{\Pr_{\mu}[\ell]}\right) \\
        &= \sum_{\ell \in \mathsf{supp}(D)} \Pr_{\mu}[\ell] |p_{\ell}| +  \sum_{\ell \in \mathsf{supp}(D)} \Pr_{\mu}[\ell]\log\left(\frac{1}{\Pr_{\mu}[\ell]}\right) \\
        &= \overline{\mathsf{depth}}^{\mu}(D) + \info^{\mu}(D)
    \end{align*}
    Therefore, we have that
    \begin{align*}
        \info^{\nu}(B) &= \overline{\mathsf{depth}}^{\mu}(A) + \info^{\mu}(A)
    \end{align*}
    We get the upper bound by minimizing over the choice of algorithm $A$. We now show the lower bound $\IC^\nu_\gamma(f \circ \oplus_2) \geq \mathsf{M}^{\mu}_{\gamma}(f)$. Suppose $B$ is an algorithm computing $f \circ \oplus_2$ with success probability $\gamma$ on the distribution $\nu$. For every bit $x_i$ of $x$, we denote the corresponding bits of $y$ by $y_{i,1}, y_{i,2}$. Then we define an algorithm $A$ to compute $f$ on the distribution $\mu$ by simulating $B$. For every query $y_{i,b}$ of $B$, if this is the first query to the corresponding bit $x_i$ of $x$, then $A$ samples a uniformly random bit and responds to $B$. Otherwise, $A$ queries the corresponding bit $x_i$ of $x$, and simulates the query response of $B$ by outputting the parity of $x_i$ and the bit $y_{i, \overline{b}}$ it sampled earlier. Then we can see that $A$ simulates $B$ on the distribution $\nu$, so it has success probability $\gamma$. We now consider a deterministic tree $D$ in the support of $B$. For any leaf $\ell$ in the support of $D$, use $\ell_x$ to denote the corresponding leaf in the randomized tree $R$ in support of $A$. The probability of reaching the leaf $\ell_x$ in $R$ (over the randomness of $R$ as well as the input) is at most $\frac{\Pr_\mu[\ell_x]}{2^{|\ell_x|}}$, where $\Pr_\mu[\ell_x]$ is the probability of sampling an input $x$ from $\mu$ that is consistent with $\ell_x$. Then
    \begin{align*}
        \info^{\nu}(D) &= \sum_{\ell \in \mathsf{supp}(D)} \Pr_{\nu}[\ell] \log\left(\frac{1}{\Pr_{\nu}[\ell]}\right) \\
        &\geq \sum_{\ell_x \in \mathsf{supp}(R)} \Pr[\ell_x]\log\left(\frac{2^{|\ell_x|}}{\Pr_{\mu}[\ell_x]}\right) \\
        &= \sum_{\ell_x \in \mathsf{supp}(R)}\Pr[\ell_x]\left[|\ell_x| + \log\left(\frac{1}{\Pr_{\mu}[\ell_x]}\right)\right] \\
        &= \overline{\mathsf{depth}}^{\mu}(R) + \E_{T \sim R} \E_{\ell_x \sim T}\left[\log\left(\frac{1}{\Pr_{\mu}[\ell_x]}\right)\right] \\
        &= \overline{\mathsf{depth}}^{\mu}(R) + \info^{\mu}(R)
    \end{align*}
    Here, the second inequality follows because $-\log(x)$ is a decreasing function. Therefore,
    \begin{align*}
        \info^{\nu}(B) &\geq \overline{\mathsf{depth}}^{\mu}(A) + \info^{\mu}(A)
    \end{align*}
    Then, we get the lower bound by minimizing over the choice of algorithm $B$.
\end{proof}

\ictoquery*
\begin{proof}
    For the upper bound, consider a randomized algorithm $A$ which computes $f \circ \oplus_2$ with worst-case success probability $2/3$. The information cost of this algorithm for any distribution $\mu$ is at most the worst-case query complexity (over both inputs and internal randomness) of the algorithm. Moreover, the latter quantity is clearly bounded above by twice the worst-case randomized query complexity of $f$. Therefore, minimizing over the choice of algorithms $A$ and maximizing over the choice of distributions $\mu$ gives us the upper bound.

    For the lower bound, we know from \cref{lem:ictoquery} and \cref{lem:minmaxquery} that
    \begin{align*}
        \max_{\mu_1}\IC^{\mu_1}(f \circ \oplus_2) \geq \max_{\mu_2} \overline{\mathsf{R}}^{\mu_2}(f) \geq \Omega(\mathsf{R}(f)) &\qedhere
    \end{align*}
\end{proof}

\if\anon1
\else
\phantomsection\addcontentsline{toc}{section}{Acknowledgements}
\section*{Acknowledgements}
No use of LLMs was made in this work, except for spotting minor typing errors after the paper was written. The first draft of this work was finished in early April 2026. We thank the anonymous reviewers of FOCS 2026 for their helpful comments.

AA was supported in part by a Cheriton Graduate Scholarship from the School of Computer Science. EB is supported by an NSERC Discovery grant. SB is supported in part by the Natural Sciences and
Engineering Research Council of Canada (NSERC), DGECR-2019-00027 and
RGPIN-2019-04804.\footnote{Cette recherche a été financée par le
Conseil de recherches en sciences naturelles et en génie du Canada
(CRSNG), DGECR-2019-00027 et RGPIN-2019-04804.}

\fi

\printbibliography

\if\conf0
\appendix

\section{Another per-leaf measure and why it does not work.}
For establishing the direct product theorem for success-conditioned information complexity, we used the fact that the information cost of a tree is equal to its expected surprisal cost of the leaves, where the surprisal cost satisfies certain natural properties we can hope for from a reasonable cost measure. However, using the definition of information cost, we could equally well have defined another potential per-leaf measure, whose expectation is equal to the information. Formally, for a leaf $\ell$ of a decision tree $D$, given a distribution $\mu$, we define the measure
\begin{align*}
    \mathsf{meas}^{\mu}(\ell) = H(X) - H(X|\ell)
\end{align*}
Then clearly we have that
\begin{align*}
    \E_{\ell \sim \mu}[\mathsf{meas}^{\mu}(\ell)] = H(X) - H(X|L) =  \info^{\mu}(D)
\end{align*}
However, it turns out we can not use a success-conditioned version of this measure to establish our direct product theorem. This is because $\mathsf{meas}^{\mu}(\ell)$ is not necessarily non-negative. Consider for example the distribution $\mu$ on $\{0,1\}^{n+1}$ which puts probability mass $1/2$ on the input $0^{n+1}$, and probability mass $1/2n$ on the input $1x$ where $x$ contains a single $1$ in it. So we have that $H(X) = \frac{1}{2} \log(n) + 1$. Now, suppose the tree $D$ queries the first bit of the input sampled from this distribution. Conditioned on seeing the first bit equal to $1$, the entropy $H(X|\ell)$ is now $\log(n)$. Therefore, $\mathsf{meas}^{\mu}(\ell) = H(X) - H(X|\ell) = 1 - \frac{1}{2} \log(n) < 0$.

Essentially, the negativity of the cost of a leaf is a problem because then it is no longer clear how to establish an equivalence lemma between the discounted score and the success-conditioned complexity. This is because we use the non-negativity property of the cost measure to use \cref{fact:convexityofe}, and get an upper bound on the discounted score in terms of the success-conditioned complexity. Moreover, if we try to truncate the leaves with a highly negative cost, the discounted score could potentially become much smaller.
\fi

\end{document}